\newtheorem{proposition}[]{Proposition}
\newtheorem{example}[]{Example}
\newtheorem{remark}[]{Remark}
\newtheorem{theorem}[]{Theorem}
\newtheorem{lemma}[]{Lemma}
\newtheorem{definition}[]{Definition}
\tikzset{>=stealth, shorten >=1pt}
\tikzset{every edge/.style = {thick, ->, draw}}
\tikzset{every loop/.style = {thick, ->, draw}}
\newcommand{\myquot}[1]{``#1''}
\newcommand{\nats}{\mathbb{N}}
\newcommand{\set}[1]{\{#1\}}
\newcommand{\size}[1]{|#1|}
\renewcommand{\epsilon}{\varepsilon}
\newcommand{\occ}[2]{|#1|_{#2}}
\newcommand{\NP}[0]{NP\xspace}
\renewcommand{\P}[0]{P\xspace}
\newcommand{\equal}{E}
\newcommand{\nondyck}{N}
\newcommand{\double}{T}
\newcommand{\aut}{\mathcal{A}}
\newcommand{\init}{I}
\newcommand{\run}{\rho}
\newcommand{\runs}[1]{P#1}
\newcommand{\projsigma}{p_\Sigma}
\newcommand{\parikhimage}{\Phi_e}
\newcommand{\pa}{PA\xspace}
\newcommand{\dpa}{DPA\xspace}
\newcommand{\nfa}{NFA\xspace}
\newcommand{\dfa}{DFA\xspace}
\newcommand{\hdpa}{HDPA\xspace}
\newcommand{\uca}{UCA\xspace}
\newcommand{\wupa}{WUPA\xspace}
\newcommand{\mach}{\mathcal{M}}
\newcommand{\stopp}{\texttt{STOP}}
\newcommand{\instr}{\texttt{I}}
\newcommand{\inc}[1]{\texttt{INC(X}_{#1}\texttt{)}}
\newcommand{\dec}[1]{\texttt{DEC(X}_{#1}\texttt{)}}
\newcommand{\ite}[3]{\texttt{IF X}_{#1}\texttt{=0 GOTO }#2\texttt{ ELSE GOTO }#3}
\newcommand{\hstrat}{r}
\newcommand{\hstratiter}{\hstrat^*}
\newcommand{\hstratprime}{t}
\newcommand{\hstratiterprime}{\hstratprime^*}
\newcommand{\hstratprimeinv}[2]{\hstratprime^{-1}_{#1}(#2)}
\newcommand{\attacker}{\textsc{Challenger}\xspace}
\newcommand{\defender}{\textsc{Resolver}\xspace}
\newcommand{\restr}[2]{#1\!\!\upharpoonright\!\! #2}
\newcommand{\attr}[0]{\mathrm{Attr}}
\newcommand{\cpre}[0]{\mathrm{CPre}}
\newcommand{\tsys}{\mathcal{T}}
\newcommand{\traces}{\mathsf{tr}}
\newcommand{\lmark}{\rhd}
\newcommand{\rmark}{\lhd}
\newcommand{\machine}{\mathcal{M}}
\newcommand{\conf}{S}
\newcommand{\rbcm}{RBCM\xspace}
\newcommand{\drbcm}{DRBCM\xspace}
\newcommand{\hdrbcm}{HDRBCM\xspace}
\newcommand{\nrbcm}[1]{{#1}-RBCM\xspace}
\newcommand{\ndrbcm}[1]{{#1}-DRBCM\xspace}
\newcommand{\nhdrbcm}[1]{{#1}-HDRBCM\xspace}
\newcommand{\Sigmalr}{\Sigma_{\lmark\hspace{-.08cm}\rmark}}
\newcommand{\maxpos}{\mathrm{pos}}
\newcommand{\stini}{INI\xspace}
\newcommand{\stinc}{INC\xspace}
\newcommand{\stdec}{DEC\xspace}
\newcommand{\stzero}{ZERO\xspace}
\newcommand{\cycleregister}{\chi}
\newcommand{\red}[1]{\mathrm{red}(#1)}
\newcommand{\sys}{\mathcal{S}}
\begin{document}

\title{History-deterministic Parikh Automata\thanks{Shibashis Guha was supported by SERB grant SRG/2021/000466. Karoliina Lehtinen was supported by ANR QUASY 23-CE48-0008-01. Martin Zimmermann was supported by by DIREC – Digital Research Centre Denmark.}}
\newcommand{\email}[1]{\texttt{#1}}
\date{\vspace{-.3cm}}
\author{\small Enzo Erlich\\
\small Université Paris Cité, CNRS, IRIF, F-75013, Paris, France and EPITA Research Laboratory (LRE), Paris, France \\\medskip
\small\email{Enzo.Erlich@irif.fr}\\
\small Mario Grobler\\
\small University of Bremen, Germany\\\medskip
\small \email{grobler@uni-bremen.de}\\
\small Shibashis Guha\\
\small Tata Institute of Fundamental Research, Mumbai, India\\\medskip 
\small \email{shibashis.guha@tifr.res.in}\\ 
\small Ismaël Jecker\\
\small FEMTO-ST, CNRS, Université de Franche-Comté, Besançon, France\\\medskip
\small \email{ismael.jecker@gmail.com}\\
\small Karoliina Lehtinen\\
\small CNRS, Aix-Marseille University, LIS,
\small Marseille, France\\\medskip
\small \email{lehtinen@lis-lab.fr}\\
\small Martin Zimmermann\\
\small University of Aalborg, Denmark\\
 \small \email{mzi@cs.aau.dk}
}
\maketitle
\begin{abstract}
Parikh automata extend finite automata by counters that can be tested for membership in a semilinear set, but only at the end of a run. Thereby, they preserve many of the desirable properties of finite automata. 
Deterministic Parikh automata are strictly weaker than nondeterministic ones, but enjoy better closure and algorithmic properties.

This state of affairs motivates the study of intermediate forms of nondeterminism.
Here, we investigate history-deterministic Parikh automata, i.e., automata whose nondeterminism can be resolved on the fly. This restricted form of nondeterminism is well-suited for applications which classically call for determinism, e.g., solving games and composition.

We show that history-deterministic Parikh automata are strictly more expressive than deterministic ones, incomparable to unambiguous ones, and enjoy almost all of the closure properties of deterministic automata. 
Finally, we investigate the complexity of resolving nondeterminism in history-deterministic Parikh automata.
\end{abstract}
\pagebreak

\section{Introduction}

Some of the most profound (and challenging) questions of theoretical computer science are concerned with the different properties of deterministic and nondeterministic computation, the \P vs.\ \NP problem being arguably the most important and surely the most well-known one.
However, even in the more modest setting of automata theory, there is a trade off between deterministic and nondeterministic automata with far-reaching consequences for, e.g., the automated verification of finite-state systems.
In the automata-based approach to model checking, for example, one captures a finite-state system~$\sys$ and a specification~$\varphi$ by automata~$\aut_\sys$ and $\aut_\varphi$ and then checks whether $L(\aut_\sys) \subseteq L(\aut_\varphi)$ holds, i.e., whether every execution of $\sys$ satisfies the specification~$\varphi$.
To do so, one tests $L(\aut_\sys) \cap \overline{L(\aut_\varphi)}$ for emptiness. 
Hence, one is interested in expressive automata models that have good closure and algorithmic properties.
Nondeterminism yields succinctness (think \dfa's vs.\ \nfa's) or even more expressiveness (think pushdown automata) while deterministic automata often have better algorithmic properties and better closure properties (again, think, e.g., pushdown automata).

Limited forms of nondeterminism constitute an appealing middle ground as they often combine the best of both worlds, e.g., increased expressiveness in comparison to deterministic automata and better algorithmic and closure properties than nondeterministic ones. 
A classical, and well-studied, example are unambiguous automata, i.e., nondeterministic automata that have at most one accepting run for every input.
For example, unambiguous finite automata can be exponentially smaller than deterministic ones~\cite{leung} while unambiguous pushdown automata are more expressive than deterministic ones~\cite{Hopcroft}.

Another restricted class of nondeterministic automata is that of residual automata~\cite{DLT01}, automata where every state accepts a residual language of the automaton's language. 
For every regular language there exists a residual automaton. While there exist residual automata that can be exponentially smaller than \dfa, there also exist languages for which \nfa can be exponentially smaller than residual automata~\cite{DLT01}.

More recently, another notion of limited nondeterminism has received considerable attention:
history-deterministic automata~\cite{Col09,HP06}\footnote{There is a closely related notion, good-for-gameness, which is often, but not always equivalent~\cite{BL21} (despite frequently being used interchangeably in the past).} are nondeterministic automata whose nondeterminism can be resolved based on the run constructed thus far, but independently of the remainder of the input.
This property makes history-deterministic automata suitable for composition with games, trees, and other automata, applications which classically require deterministic automata.
History-determinism has been studied in the context of regular~\cite{BK18,HP06,KS15}, pushdown~\cite{GJLZ24,LZ22}, quantitative~\cite{DBLP:conf/fossacs/BokerL22,Col09}, and timed automata~\cite{timedhd} (see~\cite{BL23} for a recent survey).
For automata that can be determinized, history-determinism offers the potential for succinctness (e.g., co-Büchi automata~\cite{KS15}) while for automata that cannot be determinized, it even offers the potential for increased expressiveness (e.g., pushdown automata~\cite{GJLZ24,LZ22}). In the quantitative setting, the exact power of history-determinism depends largely on the type of quantitative automata under consideration. So far, it has been studied for quantitative automata in which runs accumulate weights into a value using a value function such as \texttt{Sum, LimInf, Average}, and that assign to a word the supremum among the values of its runs. For these automata, history-determinism turns out to have interesting applications for quantitative synthesis~\cite{BL21}. 
Here, we continue this line of work by investigating history-deterministic Parikh automata. These are a mildly quantitative form of automata with a significant gap between the expressiveness and algorithmic properties of its deterministic and nondeterministic versions, making it a good candidate for studying restricted forms of nondeterminism, such as history-determinism.

Parikh automata, introduced by Klaedtke and Rueß~\cite{KR}, consist of finite automata, augmented with counters that can only be incremented. A Parikh automaton only accepts a word if the final counter-configuration is within a semilinear set specified in the automaton. As the counters do not interfere with the control flow of the automaton, that is, counter values do not affect whether transitions are enabled, they allow for mildly quantitative computations without the full power of vector addition systems or other more powerful models. 

For example the language of words over the alphabet~$\set{0,1}$ having a prefix with strictly more $1$'s than $0$'s is accepted by a Parikh automaton that starts by counting the number of $0$'s and $1$'s and after some prefix nondeterministically stops counting during the processing of the input.
It accepts if the counter counting the $1$'s is, at the end of the run, indeed larger than the counter counting the $0$'s. 
Note that the nondeterministic choice can be made based on the word processed thus far, i.e., as soon as a prefix with more $1$'s than $0$'s is encountered, the counting is stopped.
Hence, the automaton described above is in fact history-deterministic.

Klaedtke and Rueß~\cite{KR} showed Parikh automata to be expressively equivalent to a quantitative version of existential weak MSO that allows for reasoning about set cardinalities. Their expressiveness also coincides with that of reversal-bounded counter machines~\cite{KR}, in which counters can go from decrementing to incrementing only a bounded number of times, but in which counters affect control flow~\cite{Ibarra78RBCM}. The weakly unambiguous restriction of Parikh automata, that is, those that have at most one accepting run, on the other hand, coincide with unambiguous reversal-bounded counter machines~\cite{BCKN20}. Parikh automata are also expressively equivalent to weighted finite automata over the groups~$(\mathbb{Z}^k, +, 0)$~\cite{DM00,MS01} for $k \geqslant 1$. This shows that Parikh automata accept a natural class of quantitative specifications.

Despite their expressiveness, nondeterministic Parikh automata retain some decidability: nonemptiness, in particular, is \NP-complete~\cite{FL} and regular separability is decidable~\cite{CCLP17}. For weakly unambiguous Parikh automata, inclusion is decidable~\cite{CM17}. On the other hand, for deterministic Parikh automata, finiteness, universality, regularity, and model-checking are decidable as well~\cite{CFM12,KR}.
Figueira and Libkin~\cite{FL} also argued that Parikh automata are well-suited for querying graph databases, while mitigating some of the complexity issues related with more expressive query languages. Further, they have been used in the model checking of transducer properties~\cite{FMR20}.
 
As Parikh automata have been established as a robust and useful model, many variants thereof exist: pushdown (visibly~\cite{DFT19} and otherwise~\cite{Kar04}), two-way with~\cite{DFT19} and without stack~\cite{FGM19}, unambiguous~\cite{CFM13}, and weakly unambiguous~\cite{BCKN20} Parikh automata, to name a few. 

\subsection*{Our contributions} 
We introduce history-deterministic Parikh automata~(\hdpa) and study their expressiveness, their closure properties, and their algorithmic properties.

Our main result shows that history-deterministic Parikh automata are more expressive than deterministic ones~(\dpa), but less expressive than nondeterministic ones~(\pa). 
Furthermore, we show that they are of incomparable expressiveness to both classes of unambiguous Parikh automata found in the literature, but equivalent to history-deterministic reversal-bounded counter machines, another class of history-deterministic automata that is studied here for the first time.
These results show that history-deterministic Parikh automata indeed constitute a novel class of  languages capturing quantitative features.

Secondly, we show that history-deterministic Parikh automata satisfy almost the same closure properties as deterministic ones, the only difference being non-closure under complementation.
This result has to be contrasted with unambiguous Parikh automata being closed under complement~\cite[Proposition~5]{CFM13}.
Thus, history-determinism is a too strong form of nondeterminism to preserve closure under complementation, a phenomenon that has already been observed in the case of pushdown automata~\cite{GJLZ24,LZ22}.

Further, we study the algorithmic properties of history-deterministic Parikh automata.
Most importantly, safety model checking for \hdpa is decidable, as it is for \pa. The problem asks, given a system and a set of \emph{bad} prefixes specified by an automaton, whether the system has an execution that has a bad prefix.
This allows, for example, to check properties of an arbiter of some shared resource like “the accumulated waiting time between requests and
responses of client~1 is always at most twice the accumulated waiting time for client~2 and vice
versa”. Note that this property is not $\omega$-regular.

Non-emptiness and finiteness are also both decidable for \hdpa (as they are for nondeterministic automata), but universality, inclusion, equivalence, and regularity are not.
This is in stark contrast to unambiguous Parikh automata (and therefore also deterministic ones), for which \emph{all} of these problems are decidable.
Further, we show that it is undecidable whether a Parikh automaton is history-deterministic and to decide whether it is equivalent to a history-deterministic one.

Finally, we study the complexity of resolving nondeterminism in \hdpa, e.g., the question whether a resolver can be implemented by a \dpa. 
Here, we present a game-theoretic approach, but only obtain a conditional answer.

Note that we consider only automata over finite words here, but many of our results can straightforwardly be transferred to Parikh automata over infinite words, introduced independently by Guha et al.~\cite{GJLZ22} and Grobler et al.~\cite{grobler2023remarks}.

This article is a revised and extended version of an article published at CONCUR 2023~\cite{conference}, which contains all proofs omitted in the conference version and a new section on resolvers implementable by Parikh automata.

\section{Definitions}
\label{sec_definitions}
An alphabet is a finite nonempty set~$\Sigma$ of letters. As usual, $\epsilon$ denotes the empty word, $\Sigma^*$ denotes the set of finite words over $\Sigma$, and $\Sigma^+$ denotes the set of finite nonempty words over $\Sigma$.
The length of a finite word~$w$ is denoted by $\size{w}$ and, for notational convenience, we define $\size{w} = \infty$ for all infinite words~$w$.
Finally, $\occ{w}{a}$ denotes the number of occurrences of the letter~$a$ in a finite word~$w$.

\subsection{Semilinear Sets}

We denote the set of nonnegative integers by $\nats$. 
Given vectors~$\vec{v} = (v_0, \ldots, v_{d-1}) \in \nats^d$ and $\vec{v}\,' = (v'_{0}, \ldots, v'_{d'-1}) \in \nats^{d'}$, we define their concatenation~$\vec{v} \cdot \vec{v}\,' = (v_0, \ldots, v_{d-1}, v'_{0}, \ldots, v'_{d'-1}) \in \nats^{d+d'}$.
We lift the concatenation of vectors to sets $D \subseteq \nats^d$ and $D' \subseteq \nats^{d'}$ via $D\cdot D' = \set{\vec{v} \cdot \vec{v}\,' \mid \vec{v} \in D \text{ and } \vec{v}\,' \in D'}$.

Let $d \geqslant 1$. A set~$C \subseteq \nats^d$ is {linear} if there are vectors~$\vec{v}_0, \ldots, \vec{v}_k \in \nats^d$ such that 
\[
C = \left\{ \vec{v}_0 + \sum\nolimits_{i=1}^k c_i\vec{v}_i \:\middle|\: c_i \in \nats \text{ for } i=1,\ldots, k \right\}.
\]
Furthermore, a subset of $\nats^d$ is {semilinear} if it is a finite union of linear sets. 

\begin{example}
\label{example_sl}
The sets~$\set{(n,n) \mid n\in\nats} =  \set{ (0,0) + c(1,1) \mid c\in\nats }$ and  $\set{(n,2n) \mid n \in\nats} = \set{ (0,0) + c(1,2) \mid c\in\nats }$ are linear, so their union is semilinear. Further, the set~$\set{(n,n') \mid n < n'} = \set{(0,1) + c_1(1,1) + c_2(0,1) \mid c_1, c_2 \in\nats } $ is linear and thus also semilinear.
\end{example}

Ginsburg and Spanier showed that semilinear sets are closed under the Boolean operators~\cite{GS} while closure under concatenation can easily be shown by extending the generating vectors with zeroes.

\begin{proposition}
\label{propsemilinearclosure}
\begin{enumerate}
    \item If $C,C' \subseteq \nats^d$ are semilinear, then so are $C\cup C'$, $C\cap C'$, $\nats^d \setminus C$.
    \item If $C\subseteq \nats^d$ and $C' \subseteq \nats^{d'}$ are semilinear, then so is $C\cdot C'$.
\end{enumerate}
\end{proposition}

In our proofs, it is sometimes convenient to work with semilinear sets defined by formulas of Presburger arithmetic, i.e., first-order formulas over the structure~$\mathcal{N} = (\nats, +, <, 0,1)$.

\begin{proposition}[\cite{GSalgol}]
\label{prop_presburger}
A set~$C \subseteq \nats^d$ is semilinear if and only if there is a formula~$\varphi(x_0, \ldots, x_{d-1})$ of Presburger arithmetic with $d$ free variables such that $C = \set{\vec{v} \in\nats^d \mid \mathcal{N} \models \varphi(\vec{v}) }$.
\end{proposition}

\subsection{Finite Automata}

A (nondeterministic) finite automaton (\nfa)~$\aut = (Q,\Sigma, q_\init, \Delta, F)$ over $\Sigma$ consists of the finite set~$Q$ of states containing the initial state~$q_\init$, the alphabet~$\Sigma$, the transition relation~$\Delta \subseteq Q \times \Sigma \times Q$, and the set~$F \subseteq Q$ of accepting states.
The \nfa is deterministic (i.e.,~a \dfa) if for every state~$q \in Q$ and every letter~$a \in \Sigma$, there is at most one $q'\in Q$ such that $(q,a,q')$ is a transition of $\aut$.

A run of $\aut$ is a (possibly empty) sequence~$(q_0, a_0, q_1) (q_1, a_1, q_2) \cdots (q_{n-1}, a_{n-1}, q_n) $ of transitions with $q_0 = q_\init$. 
It processes the word~$a_0 a_1 \cdots a_{n-1} \in \Sigma^*$, which is the empty word if the run is empty. The run is {accepting} if it is either empty and the initial state is accepting or if it is nonempty and $q_n$ is accepting. 
The language~$L(\aut)$ of $\aut$ contains all finite words $w \in \Sigma^*$ such that $\aut$ has an accepting run processing $w$.

\subsection{Parikh Automata}
Let $\Sigma$ be an alphabet, $d \geqslant 1$, and $D$ a finite subset of $\nats^d$. 
Furthermore, let $w = (a_0,\vec{v}_0) \cdots (a_{n-1},\vec{v}_{n-1})$ be a word over $\Sigma \times D$. 
The $\Sigma$-projection of $w$ is $\projsigma(w) = a_0 \cdots a_{n-1} \in\Sigma^*$ and its \emph{extended Parikh image} is $\parikhimage(w)=\sum_{j=0}^{n-1} \vec{v}_j \in \nats^d$ with the convention~$\parikhimage(\epsilon) =  \vec{0}$, where $\vec{0}$ is the $d$-dimensional zero vector.

A {Parikh automaton} (\pa) is a pair $(\aut, C)$ such that $\aut$ is an \nfa over $\Sigma \times D$ for some input alphabet~$\Sigma$ and some finite $D \subseteq \nats^d$ for some $d \geqslant 1$, and $C \subseteq \nats^d$ is semilinear. 
The language of $(\aut, C)$ consists of the $\Sigma$-projections of words~$w \in L(\aut)$ whose extended Parikh image is in $C$, i.e.,
\[
L(\aut, C) = \set{ \projsigma(w) \mid w \in L(\aut) \text{ with } \parikhimage(w)\in C }.
\]
The Parikh automaton~$(\aut, C)$ is deterministic if for every state~$q$ of $\aut$ and every $a \in \Sigma$, there is at most one pair~$(q', \vec{v}) \in Q \times D$ such that $(q,(a,\vec{v}),q')$ is a transition of $\aut$. Note that this definition does \emph{not} coincide with $\aut$ being a \dfa:
As mentioned above, $\aut$ accepts words over $\Sigma\times D$ while $(\aut, C)$ accepts words over $\Sigma$.
Therefore, determinism is defined with respect to $\Sigma$ only. 
Further, $(\aut, C)$ is complete if for every $q \in Q$ and every $a \in \Sigma$, there is a $\vec{v} \in D$ and a $q' \in Q$ such that $(q,(a,\vec{v}),q') \in \Delta$, i.e., from every state, every letter can be processed. As usual, an incomplete automaton can be completed by adding a fresh non-accepting sink state.

Note that the above definition of $L(\aut, C)$ coincides with the following alternative definition via accepting runs.
\begin{definition}
\label{def_accparuns}
A run $\run$ of $(\aut, C)$ is a run \[\rho = (q_0, (a_0, \vec{v}_0), q_1) (q_1, (a_1, \vec{v}_1), q_2) \cdots (q_{n-1}, (a_{n-1}, \vec{v}_{n-1}), q_{n})\] of $\aut$. 
We say that $\rho$ \emph{processes} the word~$a_0 a_1 \cdots a_{n-1}\in \Sigma^*$, i.e., the $\vec{v}_j$ are ignored, and that $\rho$'s extended Parikh image is $\sum_{j=0}^{n-1} \vec{v}_j$.
The run is accepting if it is either empty and both the initial state of $\aut$ is accepting and the zero vector (the extended Parikh image of the empty run) is in $C$, or if it is nonempty, $q_n$ is accepting, and $\rho$'s extended Parikh image is in $C$.
Finally, $(\aut, C)$ accepts $w \in \Sigma^*$ if it has an accepting run processing~$w$.
\end{definition}

\begin{example}\label{example_intro}
Consider the deterministic \pa~$(\aut, C)$ with $\aut$ in Figure~\ref{fig_example_intro} and $
C = \set{(n,n) \mid n\in\nats} \cup \set{(n,2n) \mid n \in\nats}$ (cf.\ Example~\ref{example_sl}). 
It accepts the language~$\set{a^nb^n \mid n\in\nats} \cup \set{a^nb^{2n} \mid n \in\nats}$.
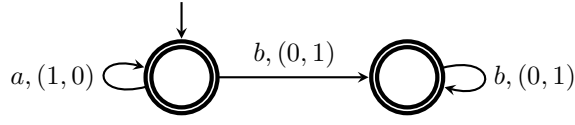
\begin{figure}[h]
    \centering
    \begin{tikzpicture}[ultra thick]
    
    \node[state,accepting] (1) at (0,0) {};
    \node[state,accepting] (2) at (3,0) {};
    
    \path[-stealth]
    (0,1) edge (1)
    (1) edge[loop left] node[left] {$a,(1,0)$} ()
    (1) edge node[above] {$b,(0,1)$} (2)
    (2) edge[loop right] node[right] {$b,(0,1)$} ();
    
    \end{tikzpicture}
    \caption{The automaton for Example~\ref{example_intro}.}
    \label{fig_example_intro}
\end{figure}
\end{example}

Note that there is a closely related automaton model, called constrained automata~\cite{CFM12}, that has the same expressive power as \pa as introduce here~\cite[Theorem~3.4.]{CFM12}.
Here, we mostly work with \pa, but we compare unambiguous constrained automata in Subsection~\ref{subsec_unamb} to history-deterministic \pa.

\section{History-deterministic Parikh Automata}
\label{sec_hd}

In this section, we introduce history-deterministic Parikh automata and give examples. 

Let $(\aut, C)$ be a \pa with $\aut = (Q, \Sigma \times D, q_\init, \Delta, F)$.
For a function~$\hstrat\colon \Sigma^+ \rightarrow \Delta$ we define its iteration $\hstratiter \colon \Sigma^* \rightarrow \Delta^*$ via $\hstratiter(\epsilon) = \epsilon$ and $\hstratiter(a_0 \cdots a_n) = \hstratiter(a_0 \cdots a_{n-1})\cdot \hstrat(a_0 \cdots a_n)$. 
We say that $\hstrat$ is a resolver for $(\aut, C)$ if, for every~$w \in L(\aut, C)$, $\hstratiter(w)$ is an accepting run of $(\aut, C)$ processing $w$. 
Further, we say that $(\aut, C)$ is history-deterministic (i.e., an \hdpa) if it has a resolver.

\begin{example}
\label{example_nondyck}
Fix $\Sigma = \set{0,1}$ and say that a word~$w \in \Sigma^*$ is non-Dyck if $\occ{w}{0} < \occ{w}{1}$. 
We consider the language $\nondyck \subseteq \Sigma^+$ of words that have a non-Dyck prefix. 
It is accepted by the \pa~$(\aut, C)$ where $\aut$ is depicted in Figure~\ref{fig_sep} and $C = \set{(n,n') \mid n<n'}$ (cf.\ Example~\ref{example_sl}).
Intuitively, in the initial state~$q_c$, the automaton counts the number of $0$'s and $1$'s occurring in some prefix, nondeterministically decides to stop counting by moving to $q_n$ (this is the only nondeterminism in $\aut$), and accepts if there are more $1$'s than $0$'s in the prefix.

\begin{figure}[h]
    \centering
    \begin{tikzpicture}[ultra thick]
    
    \node[state] (1) at (0,0) {$q_c$};
    \node[state,accepting] (2) at (4,0) {$q_n$};
    
    \path[-stealth]
    (0,1) edge (1)
    (1) edge[loop left] node[left,align=left] {$0,(1,0)$\\$1,(0,1)$} ()
    (1) edge node[above,align=left] {$0,(1,0)$\\$1,(0,1)$} (2)
    (2) edge[loop right] node[right,align=left] {$0,(0,0)$\\$1,(0,0)$} ();
    
    \end{tikzpicture}
    \caption{The automaton for Example~\ref{example_nondyck}.}
    \label{fig_sep}
\end{figure}
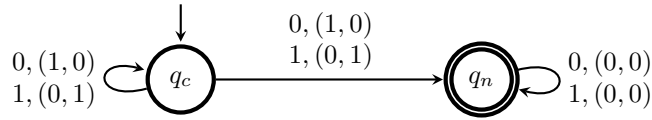

The nondeterministic choice can be made only based on the prefix processed thus far, i.e., as soon as the first non-Dyck prefix is encountered, the resolver proceeds to state~$q_n$, thereby ending the prefix. 
Formally, the function
\[
wb \mapsto \begin{cases}
(q_c, (b,(1-b,b)),q_c) &\text{if $wb$ has no non-Dyck prefix,}\\
(q_c, (b,(1-b,b)),q_n) &\text{if $wb$ is non-Dyck, but $w$ has no non-Dyck prefix,}\\
(q_n, (b,(0,0)),q_n) &\text{if $w$ has a non-Dyck prefix,}\\
\end{cases}
\]
is a resolver for $(\aut, C)$.
\end{example}

\begin{remark}
As a resolver resolves nondeterminism and a \dpa has no nondeterminism to resolve, every \dpa is history-deterministic.
\end{remark}

\section{Expressiveness}
\label{sec_expressiveness}

In this section, we study the expressiveness of \hdpa by comparing them to related automata models, e.g., deterministic and nondeterministic Parikh automata, unambiguous Parikh automata (capturing another restricted notion of nondeterminism), and reversal-bounded counter machines (which are known to be related to Parikh automata).
Overall, we obtain the relations shown in Figure~\ref{fig_classes},
where the additional classes of languages and the separating languages will be introduced throughout this section. 

\begin{figure}[h]
    \centering
    
    \begin{tikzpicture}[ultra thick,xscale=1.5,yscale=1.2]
    
\draw[rounded corners,fill=black!5] (-3.5,-2.2) rectangle (1.25,2);

\draw[rounded corners] (-4,-2.5) rectangle (5.5,2.3);
\node at (3.4,-2) {\pa/\rbcm/\hdrbcm};

\draw[rounded corners,fill=black!10] (-1.5,-1.5) rectangle (5,1.5);
\node at (4,-1) {\wupa};

\draw[rounded corners,fill=black!20] (-1.25,-1.2) rectangle (3,.8);
\node at (2.2,-.75) {\uca};

\draw[rounded corners,fill=black!30] (-1,-.5) rectangle (1,.5);
\node at (0,0) {\dpa};

\draw[rounded corners] (-3.5,-2.2) rectangle (1.25,2);
\node at (-1.8,-1.9) {\hdpa/\nhdrbcm{1}};

\draw[fill] (1.5,0) circle (.05cm);
\node[anchor =west] at (1.5,0) {$\equal$ (see Page~\pageref{page_langdef_equal})};
\draw[fill] (3.6,0) circle (.05cm);
\node[anchor =west] at (3.6,0) {$\nondyck' \cup \equal$};
\draw[fill] (-0.5,-.85) circle (.05cm);
\node[anchor =west] at (-0.5,-.85) {$\equal'$ (see Page~\pageref{page_langdef_equal'})};
\draw[fill] (-0.5,1.15) circle (.05cm);
\node[anchor =west] at (-0.5,1.15) {$\nondyck'$ (see Page~\pageref{page_langdef_nondyck'})};
\draw[fill] (-3,-.25) circle (.05cm);
\node[anchor =west] at (-3,-.25) {$\double$ (see Page~\pageref{page_langdef_double})};
\draw[fill] (3,1.85) circle (.05cm);
\node[anchor =west] at (3,1.85) {$\double \cup \equal$};

    \end{tikzpicture}
    \caption{The classes of languages accepted by different models of Parikh automata.}
    \label{fig_classes}
\end{figure}
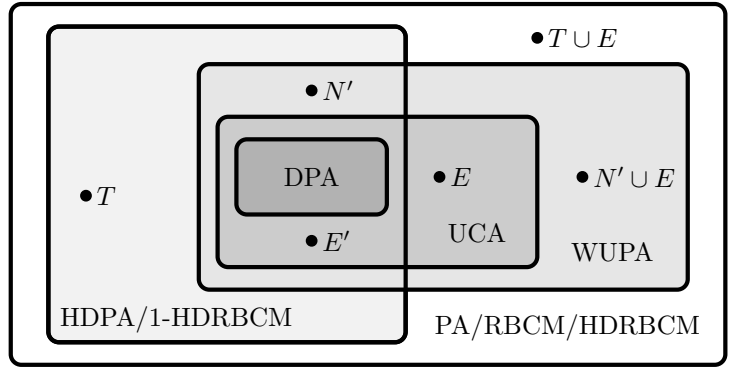

We begin by stating and proving a pumping lemma for \hdpa.
The basic property used here, just as for the pumping lemmata for \pa and \dpa~\cite[Lemma 3.5 and Lemma 3.6]{CFM12}, is that shuffling around cycles of a run does not change whether it is accepting or not, as acceptance only depends on the last state of the run being accepting and the vectors (and their multiplicity) that appear on the run, but not the order of their appearance.

\begin{lemma}
\label{lemma_pumping}
Let $(\aut, C)$ be an \hdpa with $L(\aut, C) \subseteq \Sigma^*$. 
Then, there exist $p, \ell \in \nats$ such that every $w \in \Sigma^*$ with $\size{w}> \ell$ can be written as $w = u v x v z$ such that
\begin{itemize}
    \item $0 < \size{v} \leqslant p$, $\size{x} > p$, and $\size{uvxv} \leqslant \ell$, and
    \item for all $z' \in \Sigma^*$: if $uvxvz' \in L(\aut, C)$, then also $uv^2xz' \in L(\aut, C)$ and $uxv^2z' \in L(\aut, C)$.
\end{itemize}
\end{lemma}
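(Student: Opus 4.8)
The plan is to combine the resolver with the observation recalled just before the lemma: permuting the cycles of a run changes neither its final state nor its extended Parikh image. I would begin with two routine normalizations. First, I would assume $\aut$ is complete, by adding a fresh non-accepting sink state reachable on every letter via a $\vec{0}$-labelled transition and looping there on every letter; this changes neither $L(\aut,C)$ nor the existence of a resolver. Second, I would assume that the resolver $r$ only ever produces genuine runs of $\aut$: whenever the given resolver would output a transition whose source does not match the current state, redirect it to the sink and keep it there. Since $r^*$ is prefix-monotone by construction and $r^*(w)$ is required to be a valid accepting run for every $w\in L(\aut,C)$, this modification preserves the resolver property while making $r^*(v)$ a well-defined run of $\aut$ for \emph{every} $v\in\Sigma^*$.

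Next I would set $n=|Q|$, $p=n$, and fix $\ell$ large enough for the counting below (an $\ell$ exponential in $|Q|$ suffices). Given $w$ with $|w|>\ell$, let $\hat w$ be its length-$\ell$ prefix, $\rho=r^*(\hat w)$ the corresponding run of length $\ell$, and write $q^{(i)}$ for the state reached in $\rho$ after $i$ transitions. Cutting $\rho$ into consecutive blocks of length $n$, each block spans $n+1$ states, so by pigeonhole it contains an internal cycle of length between $1$ and $n$; a second pigeonhole, over the at most $n$ possible base states of these cycles, pins down one state $q$ that is the base of as many of these pairwise non-overlapping cycles as I like, provided $\ell$ was taken large enough. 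Since there are only finitely many words of length at most $n$, one more pigeonhole — after first thinning the list so that retained cycles lie in far-apart blocks — yields two such cycles, at positions $[s_1,f_1]$ and $[s_2,f_2]$ in $\rho$ with $f_1\le s_2$, that read the \emph{same} word $v$ and for which the infix $x:=\hat w[f_1,s_2]$ has $|x|>p$. Taking $u:=\hat w[0,s_1]$ and letting $z$ be the rest of $w$ gives $w=uvxvz$ with $0<|v|\le p$, $|x|>p$, and $|uvxv|=f_2\le\ell$, which is the first bullet.

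For the second bullet I would fix $z'$ with $uvxvz'\in L(\aut,C)$, so that $r^*(uvxvz')$ is an accepting run. Because $u$, $uv$, $uvx$, and $uvxv$ are prefixes of $uvxvz'$ and $r^*$ is prefix-monotone, this run factors as $\pi\,\sigma_1\,\tau\,\sigma_2\,\theta$ where $\pi=r^*(u)$ processes $u$ and ends in $q^{(s_1)}=q$; $\sigma_1$ and $\sigma_2$ are exactly the two cycles above, so both process $v$ and are based at $q$; $\tau$ processes $x$ and is a cycle at $q$, since its endpoints are $q^{(f_1)}=q^{(s_2)}=q$; and $\theta$ processes $z'$ and ends in the accepting final state of $r^*(uvxvz')$. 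Since $\sigma_1,\sigma_2,\tau$ all begin and end in $q$, both $\pi\,\sigma_1\,\sigma_2\,\tau\,\theta$ and $\pi\,\tau\,\sigma_1\,\sigma_2\,\theta$ are again runs of $\aut$; they process $uv^2xz'$ and $uxv^2z'$ respectively, they end in the same accepting state as $r^*(uvxvz')$, and — being reorderings of its transition sequence — they have the same extended Parikh image, which lies in $C$. Hence both are accepting, so $uv^2xz'\in L(\aut,C)$ and $uxv^2z'\in L(\aut,C)$.

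The step I expect to be the main obstacle is the combinatorial extraction in the second paragraph: the lemma forces the two pumped factors to be the \emph{same} word $v$, and for the run-surgery to type-check the segments reading $v$, $x$, and $v$ must all be cycles with one common base state $q$. Locating two equal-word, co-based cycles that are moreover far enough apart is exactly what forces $\ell$ to be (at least) exponential; by contrast, the normalizations and the reordering argument itself are bookkeeping, once one recalls that acceptance of a \pa run depends only on its last state and on the multiset of vectors it reads.
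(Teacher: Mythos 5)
Your proof is correct and takes essentially the same approach as the paper: normalize the resolver to produce a run on every input, pigeonhole the resolver-induced run on the long prefix to find two far-apart cycles at a common state reading the same word $v$, and then shuffle these cycles in the run on $uvxvz'$, using that acceptance depends only on the final state and the multiset of transition vectors. The only (immaterial) difference is in the bookkeeping of the pigeonhole: the paper finds two far-apart occurrences of the \emph{identical} simple cycle (counting over the finitely many cycles, with $\ell=(p+1)(2m+1)$), whereas you pigeonhole first over base states and then over the words read, which suffices for the surgery and yields a comparable exponential bound on $\ell$.
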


\begin{proof}
Fix some resolver~$\hstrat$ for $(\aut,C)$. 
Note that the definition of a resolver only requires $\hstratiter(w)$ to be a run processing~$w$ for those $w \in L(\aut, C)$.
Here, we assume without loss of generality that $\hstratiter(w)$ is a run processing $w$ for each $w \in \Sigma^*$. 
This can be achieved by completing $\aut$ (by adding a nonaccepting sink state and transitions to the sink where necessary) and redefining $\hstrat$ where necessary (which is only the case for inputs that cannot be extended to a word in $L(\aut, C)$).

A cycle is a nonempty finite run infix
\[(q_0, a_0, q_1) (q_1, a_1, q_2) \cdots (q_{n-1}, a_{n-1}, q_n)(q_n, a_n, q_0)\]
starting and ending in the same state and such that the $q_j$ for $j \in\set{0,1,\ldots, n}$ are all pairwise different. 
Now, let $p$ be the number of states of $\aut$ and let $m$ be the number of cycles of $\aut$.
Note that every run infix containing at least $p$ transitions contains a cycle.

We define $\ell = (p+1) (2m+1)$, consider a word~$w \in \Sigma^*$ with $\size{w} > \ell$, and let $\rho = \hstratiter(w) $
be the run of $\aut$ induced by $\hstrat$, which processes $w$.
We split $\rho$ into $\rho_0 \rho_1 \cdots \rho_{2m}\rho'$ such that each $\rho_j$ contains $p+1$ transitions.
Then, each $\rho_j$ contains a cycle and there are $j_0,j_1$ with $j_1 > j_0+1$ such that $\rho_{j_0}$ and $\rho_{j_1}$ contain the same cycle. 
Now, let 
\begin{itemize}
    \item $\rho_v$ be the cycle in $\rho_{j_0}$ and $\rho_{j_1}$,
    \item $\rho_u$ be the prefix of $\rho$ before the first occurrence of $\rho_v$ in $\rho_{j_0}$, and
    \item $\rho_x$ be the infix of $\rho$ between the first occurrences of $\rho_v$ in $\rho_{j_0}$ and $\rho_{j_1}$.
\end{itemize}
Furthermore, let $u, v,x \in \Sigma^*$ be the inputs processed by $\rho_u$, $\rho_v$, and $\rho_x$ respectively.
Then, we indeed have $0 < \size{v} \leqslant p$ (as we consider simple cycles), $\size{x} > p$ (as $j_1 > j_0+1$), and $\size{uvxv} \leqslant \ell$.

Note that $\rho_u\rho_v\rho_x\rho_v$, $\rho_u \rho_v^2\rho_x$, and $\rho_u \rho_x\rho_v^2$ are all runs of $\aut$ which process $uvxv$, $uv^2x$, and $uxv^2$ respectively. 
Furthermore, all three runs end in the same state and their extended Parikh images are equal, as we only shuffled pieces around.

Now, consider some $z'$ such that $uvxvz' \in L(\aut, C)$. Then $\hstratiter(uvxvz')$ is an accepting run, and of the form $\rho_u \rho_v \rho_x \rho_v \rho_{z'}$ for some $\rho_{z'}$ processing $z'$.
Now, $\rho_u \rho_v^2\rho_x \rho_{z'}$, and $\rho_u \rho_x\rho_v^2\rho_{z'}$ are accepting runs of $(\aut, C)$ (although not necessarily induced by $\hstrat$) processing $uv^2xz'$ and $uxv^2z'$, respectively. 
Thus, $uv^2xz' \in L(\aut, C)$ and $uxv^2z' \in L(\aut, C)$.
\end{proof}

It is instructive to compare our pumping lemma for \hdpa to those for \pa and \dpa~\cite[Lemma 3.5 and Lemma 3.6]{CFM12}:
\begin{itemize}
    \item The pumping lemma for \pa states that every \emph{long} word accepted by a \pa can be decomposed into $uvxvz$ as above such that both $uv^2xz$ and $uxv^2z$ are accepted as well. 
    This statement is weaker than ours, as it only applies to the two words obtained by moving a $v$ while our pumping lemma applies to any suffix $z'$.
    This is possible, as the runs of an \hdpa on words of the form~$uvxvz'$ (for fixed $uvxv$), induced by a resolver, all coincide on their prefixes processing $uvxv$.
    This is not necessarily the case in \pa.
    
    \item The pumping lemma for \dpa states that every \emph{long} word (not necessarily accepted by the automaton) can be decomposed into $uvxvz$ as above such that $uvxv$, $uv^2x$, and $uxv^2$ are all equivalent with respect to the Myhill-Nerode equivalence. 
    This statement is stronger than ours, as Myhill-Nerode equivalence is concerned both with the language of the automaton and its complement. But similarly to our pumping lemma, the one for \dpa applies to all possible suffixes~$z'$. 
\end{itemize}

Now, we apply the pumping lemma to compare the expressiveness of \hdpa, \dpa, and \pa.

\begin{theorem}
\label{thm_sep}
\hdpa are more expressive than \dpa, but less expressive than \pa.
\end{theorem}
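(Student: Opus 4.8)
The statement has two halves: \hdpa are at least as expressive as \dpa, with a strict separation; and \hdpa are no more expressive than \pa, again with a strict separation. The easy containments are immediate: every \dpa is an \hdpa (already observed in the remark preceding this section), and every \hdpa is a \pa by definition, since an \hdpa is simply a \pa that happens to admit a resolver. So the real work is exhibiting the two witnesses for strictness, which are the languages labeled $\nondyck'$ and $\equal$ (or $\double$) in Figure~\ref{fig_classes}.

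For the separation of \hdpa from \dpa, the plan is to take the language $\nondyck$ from Example~\ref{example_nondyck} (words over $\set{0,1}$ having a non-Dyck prefix), or a minor variant $\nondyck'$, which was already shown there to be accepted by an \hdpa via an explicit resolver. It then remains to prove $\nondyck \notin L(\dpa)$. The natural tool is the pumping lemma for \dpa recalled in the bulleted discussion after Lemma~\ref{lemma_pumping} (from~\cite{CFM11}), which produces, for any sufficiently long word, a decomposition $uvxvz$ with $uvxv$, $uv^2x$, and $uxv^2$ all Myhill--Nerode equivalent. I would pick a long word that is "just barely" non-Dyck in a controlled prefix — e.g., something of the form $0^N 1^{N+1}$ padded appropriately so that the pumped factor $v$ falls inside the block of $0$'s — and argue that shifting a copy of $v$ changes whether the word (or one of its extensions $\cdot z'$) has a non-Dyck prefix, contradicting the claimed Myhill--Nerode equivalence. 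Care is needed to ensure the decomposition the lemma hands us actually isolates a factor whose relocation tips the Dyck/non-Dyck balance; this is the first place where one must reason about where $u,v,x$ can lie, but since $\size{uvxv}$ is bounded by $\ell$ and $\size{x}>p$, one has enough room to force $v$ into a monochromatic block.

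For the separation of \pa from \hdpa, the plan is to use a language like $\double = \set{a^n b^m \mid m \in \set{n, 2n}}$ (or $\equal$, the language of words with equally many $a$'s and $b$'s in some appropriate guise) which is easily seen to be in $L(\pa)$ by a single nondeterministic guess, and then apply our new pumping lemma, Lemma~\ref{lemma_pumping}, to show it is not in $L(\hdpa)$. The point of Lemma~\ref{lemma_pumping} is exactly that it gives a pumping statement quantified over \emph{all} suffixes $z'$ while only requiring that $uvxvz'$ be in the language — strong enough to rule out languages where the "mode" (here: whether we are in the $m=n$ branch or the $m=2n$ branch) must be decided early but depends on information arriving late. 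Concretely I would take a long word in $\double$, obtain the decomposition $w = uvxvz$, observe $v$ must sit inside the $a$-block (by the length bounds, assuming $w$ is chosen with a long $a$-prefix), and then choose a suffix $z'$ that completes $uvxv$ into a word of $\double$ in the $m=2n$ branch; the lemma then forces $uv^2xz'$ and $uxv^2z'$ into $\double$ as well, but these two words have $a$-counts differing by $\size{v}>0$ while sharing the same $b$-suffix, so at most one of them can satisfy $m\in\set{n,2n}$ — a contradiction.

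The main obstacle, and the step I would spend the most care on, is the second separation: making sure the suffix $z'$ can be chosen so that $uvxvz' \in L(\aut,C)$ (the hypothesis of the pumping lemma) while simultaneously forcing a genuine inconsistency between the two pumped variants. This requires pinning down the shape of $u,v,x$ tightly enough — in particular that $v$ is a nonempty block of $a$'s and that $x$ does not already contain $b$'s in a way that destroys the argument — which in turn may require choosing the initial witness word with a sufficiently long run of $a$'s (length exceeding $\ell$) so that the entire prefix $uvxv$ of length $\le \ell$ consists only of $a$'s. With that arrangement, the counting contradiction is clean. The first separation is comparatively routine given the \dpa pumping lemma, the only subtlety being the placement of $v$, handled the same way.
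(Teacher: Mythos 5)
Your proposal gets the two trivial inclusions right and correctly locates the work in the two strict separations, but both of your separation arguments contain genuine gaps. For the separation from \dpa, you plan to apply the \dpa pumping lemma to a word such as $0^N1^{N+1}$, arranged ``so that the pumped factor $v$ falls inside the block of $0$'s''. This is self-defeating: since $\size{uvxv}\leqslant\ell$, taking $N>\ell$ forces $u$, $v$ \emph{and} $x$ all to lie inside the $0$-block, so $uvxv$, $uv^2x$ and $uxv^2$ are literally the same word $0^{\size{uvxv}}$ and no extension $z'$ can distinguish them. More generally the three rearrangements always share the same Parikh image, and for $\nondyck$ the Myhill--Nerode class of a word with no non-Dyck prefix is determined by $\occ{w}{0}-\occ{w}{1}$ alone, so a contradiction could only come from one rearrangement acquiring a non-Dyck prefix while another does not; on natural candidates like $(0^{p+1}1^{p+1})^\ell$ the adversary evades this by picking a balanced $v$ straddling a block boundary. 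The paper sidesteps all of this: \dpa are closed under complement, and the complement of $\nondyck$ is known not to be accepted by any \pa, hence $\nondyck\notin L(\dpa)$.

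For the separation from \pa, the witness you name, $\set{a^nb^m\mid m\in\set{n,2n}}$, is exactly the language of Example~\ref{example_intro} and is accepted by a \emph{deterministic} \pa, so it cannot witness this separation (it is also not the paper's $\double$, which \emph{is} accepted by an \hdpa); the correct witness is $\equal=\set{a,b}^*\cdot\set{a^nb^n\mid n>0}$. Your intended contradiction is also broken: you claim $uv^2xz'$ and $uxv^2z'$ ``have $a$-counts differing by $\size{v}$'', but all three of $uvxvz'$, $uv^2xz'$, $uxv^2z'$ are rearrangements of the same multiset of letters and have identical Parikh images, and forcing $uvxv$ into a long monochromatic $a$-prefix again renders the three rearrangements identical. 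The paper instead takes $w=(a^{p+1}b^{p+1})^\ell$, so that $v\in a^*b^*+b^*a^*$ necessarily sits near block boundaries, chooses $z'$ so that $uvxvz'$ ends in a balanced pair of blocks, and argues by cases on the last letter of $v$ that relocating $v$ disrupts the lengths of the final blocks, so that $uv^2xz'$ or $uxv^2z'$ no longer ends in $a^nb^n$: the contradiction comes from the positions of the letters, not from their number.
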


\begin{proof}
First, we consider the separation between \dpa and \hdpa.
The language~$\nondyck$ from Example~\ref{example_nondyck}, which is accepted by an \hdpa, is known to be not accepted by any \dpa: 
\dpa are closed under complementation~\cite[Property~4]{KR} while the complement of $\nondyck$ is not even accepted by any \pa~\cite[Proposition~11]{CFM13}.

To show that \pa are more expressive than \hdpa, consider the language~\[\label{page_langdef_equal}\equal = \set{a,b}^* \cdot \set{a^n b^n \mid n > 0},\] which can easily be seen to be accepted by a \pa.
We show that $\equal$ is not accepted by any \hdpa\footnote{Note that the related language~$\set{a,b}^*\cdot \set{a^n\# a^n \mid n\in\nats}$ is not accepted by any \dpa~\cite[Proposition~3.8]{CFM12}.}  via an application of the pumping lemma.




To this end, assume there is some \hdpa~$(\aut, C)$ accepting $\equal$, and let $p, \ell$ as in the pumping lemma.
We pick~$w = (a^{p+1}b^{p+1})^\ell$,
which we decompose as $uvxvz$ with the properties guaranteed by the pumping lemma.
In particular, we have $\size{v} \leqslant p$,
therefore $v \in a^* b^* + b^* a^*$.
We consider two cases depending on the last letter of $v$.
In each one, we
show the existence of a word $z'$ such that the word
$uvxvz'$ is in the language $\equal$,
yet either $uv^2xz'$ or $uxv^2z'$ is not.
This yields the desired contradiction to the pumping lemma.

\begin{enumerate}
    \item
    First, assume that the last letter of $v$ is an $a$.
    Since $\size{x} > p$ and $x$ appears between two copies of $v$ in $(a^{p+1}b^{p+1})^\ell$,
    the infix~$xv$ contains at least one
    full $b$-block: we have $xv = x' b^{p+1}a^k$
    with $x' \in \{a,b\}^*$ and $0 < k \leqslant p+1$.
    We set $z' = a^{p+1-k}b^{p+1}$.
    Hence, $uvxvz' = uv x' b^{p+1}a^{p+1}b^{p+1} \in \equal$.
    We show that $uv^2xz' \not\in \equal$
    by differentiating two cases:
    \begin{enumerate}
        \item 
        If $v = a^i$ for some $i$, which must satisfy $0 < i \leqslant k$,
        then $uv^2xz'$ is not in $\equal$
        as it ends with $b^{p+1}a^{p+1-i}b^{p+1}$.
        \item
        Otherwise, we must have $v = b^ia^k$ with $0 < i < p$.
        Then, $uv^2xz'$ is not in $\equal$
        as it ends with $b^{p+1-i}a^{p+1-k}b^{p+1}$.
    \end{enumerate}
    \item
    Otherwise, the last letter of $v$ is a $b$.
    Since $\size{x} > p$ and $x$ appears between two copies of $v$ in $(a^{p+1}b^{p+1})^\ell$,
    the infix~$xv$ contains at least one
    full $a$-block: we have $xv = x'a^{p+1}b^k$ with $x' \in \{a,b\}^*$ and $0 < k \leqslant p+1$.
    This time we set $z' = b^{p+1-k}$.
    Thus, $uvxvz' = uvx' a^{p+1}b^{p+1} \in \equal$,
    and we differentiate two cases to show that
    $uxv^2z' \notin \equal$:
    \begin{enumerate}
        \item 
        If $v = b^i$ for some $i$, which must satisfy $0 < i \leqslant p$, then $uxv^2z'$ ends with $b^{p+i+1}$.
        However, each of its $a$-blocks has length $p+1$, as moving $v = b^i$ with $i \leqslant p$ does not merge any $a$-blocks. Hence, $uxv^2z'$ is not in $\equal$.
        \item
        Otherwise, we must have $v = a^ib^k$ with $0 < i < p$. Then, $uxv^2z'$ is not in $\equal$
        as it ends with $v^2z' = a^ib^ka^ib^{p+1}$.\qedhere
    \end{enumerate}
\end{enumerate}
\end{proof}

\subsection{History-determinism vs.\ Unambiguity}
\label{subsec_unamb}
After having placed history-deterministic Parikh automata strictly between deterministic and nondeterministic ones, we now compare them to unambiguous Parikh automata, another class of automata whose expressiveness lies strictly between that of \dpa and \pa. 
In the literature, there are two (nonequivalent) forms of unambiguous Parikh automata. We consider both of them here.

Cadilhac et al.\ studied unambiguity in Parikh automata in the guise of unambiguous constrained automata (\uca)~\cite{CFM13} (recall that (possibly ambiguous) constrained automata are effectively equivalent to \pa~\cite[Theorem~3.4]{CFM12}). 
Intuitively, an \uca~$(\aut, C)$ over an alphabet~$\Sigma$ consists of an unambiguous $\epsilon$-\nfa~$\aut$ over $\Sigma$, say with $d$ transitions, and a semilinear set~$C \subseteq \nats^d$, i.e., $C$ has one dimension for each transition in $\aut$.
It accepts a word~$w \in \Sigma^*$ if $\aut$ has an accepting run processing~$w$ (due to unambiguity this run must be unique) such that the Parikh image of the run (recording the number of times each transition occurs in the run) is in $C$.

On the other hand, Bostan et al.\ introduced so-called weakly-unambiguous Parikh automata (\wupa)~\cite{BCKN20}.
Intuitively, a \wupa~$(\aut,C)$ over $\Sigma$ is a classical \pa as introduced here where every input over $\Sigma$ has at most one accepting run processing it (in the sense of Definition~\ref{def_accparuns}), i.e., the vectors labelling the transitions are ignored as described in that definition.
Bostan et al.\ discuss the different definitions of unambiguity and in particular show that every \uca is a \wupa, but that \wupa are strictly more expressive (see Remark~11 of \cite{BCKN20} and the references therein). 
Here, we compare the expressiveness of \hdpa to that of \uca and \wupa.

\begin{theorem}
\label{theorem_hdpa_vs_unamb}
The expressiveness of \hdpa is neither comparable with that of \uca nor with that of \wupa.
\end{theorem}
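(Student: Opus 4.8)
The plan is to establish the two incomparabilities via explicit separating languages, following the layout suggested in Figure~\ref{fig_classes}. For one direction, I need a language that is accepted by some \hdpa (indeed by a \dpa, to also cover the \uca case) but not by any \wupa; since every \uca is a \wupa, this also separates \hdpa from \uca. For the other direction, I need a language accepted by a \uca (hence also by a \wupa) but not by any \hdpa.

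\emph{\hdpa not captured by \wupa.} Here I would take the language $\equal = \set{a,b}^* \cdot \set{a^nb^n \mid n > 0}$ already used in the proof of Theorem~\ref{thm_sep}. It is accepted by a \pa, and in fact it is not hard to present it as a \dpa: the automaton scans the input, and upon reading a maximal $a$-block followed by a $b$-block it records the two block lengths into a fresh pair of counters and resets; acceptance checks that the \emph{last} such pair is equal, which is a semilinear condition (one must be slightly careful so that determinism in the sense of Section~\ref{sec_definitions} holds, but since the choice of which vector to emit depends only on the letter read, this is fine). Thus $\equal$ lies in the \dpa class, hence in the \hdpa class. It remains to show $\equal$ is not accepted by any \wupa. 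The natural route is a counting/pumping argument exploiting weak unambiguity: if $(\aut,C)$ were a \wupa for $\equal$, then for each $n$ the word $a^nb^n$ has a unique accepting run, but also every word $w\, a^nb^n$ with $w \in \set{a,b}^*$ has a unique accepting run, and these runs must be "independent" in a way that a single-accepting-run device cannot sustain — this is exactly the kind of obstruction that separates \wupa from \pa in~\cite{BCKN20}. Concretely, I would argue that an accepting run on $a^{n_1}b^{n_1}\cdots$ is forced to "commit" early to which suffix block it will test, and then build two distinct accepting runs on a single word by choosing a different committed block, contradicting weak unambiguity. (If a clean direct argument is elusive, an alternative is to pick a language already shown in~\cite{BCKN20} to separate \wupa from \pa and verify it is history-deterministic.)

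\emph{\uca not captured by \hdpa.} For this direction I would use the language $\nondyck'$ appearing in Figure~\ref{fig_classes}, a variant of $\nondyck$ from Example~\ref{example_nondyck}; plausibly $\nondyck' = \set{w \in \set{0,1}^* \mid \occ{w}{0} < \occ{w}{1}}$ or a marked version thereof, which is accepted by a \uca (count the $0$'s and $1$'s along the unique run and impose the semilinear constraint), and possibly also by a \dpa, but the key point is that it is \emph{not} accepted by any \hdpa. Here the pumping Lemma~\ref{lemma_pumping} is the tool: one chooses a long witness word, decomposes it as $uvxvz$, and exhibits a suffix $z'$ with $uvxvz' \in \nondyck'$ but $uv^2xz' \notin \nondyck'$ (or the $uxv^2z'$ variant), using that moving a block $v$ changes the relative count of $0$'s and $1$'s in some prefix in a way that destroys membership while preserving it for a sibling decomposition. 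This is structurally the same style of argument as the $\equal$ case in Theorem~\ref{thm_sep}, just with the "prefix with more $1$'s than $0$'s" condition in place of the "equal final block" condition.

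\emph{Main obstacle.} I expect the genuinely hard step to be showing $\equal \notin \wupa$: weak unambiguity is a global, semantic restriction (at most one accepting run per input), and unlike determinism or history-determinism it does not give us a run we can "follow" compositionally, so the pumping Lemma~\ref{lemma_pumping} does not apply and we must instead build, from an assumed \wupa, two distinct accepting runs on one word. Getting the bookkeeping right — identifying where a run must commit and producing the collision — is the delicate part; the $\nondyck'$ side, by contrast, should be a routine application of our pumping lemma once the right witness word and suffix $z'$ are chosen.
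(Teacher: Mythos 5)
Your proposal assigns the two separating languages to the wrong sides of the separations, and this is fatal. The language $\equal = \set{a,b}^*\cdot\set{a^nb^n \mid n>0}$ is \emph{not} accepted by any \hdpa --- that is precisely what the pumping argument in Theorem~\ref{thm_sep} establishes, and since every \dpa is an \hdpa, there is certainly no \dpa for it. Your sketched \dpa (``records the two block lengths into a fresh pair of counters and resets'') cannot be realized in the Parikh automaton model: the dimension $d$ is fixed in advance, counters can only be incremented, and there is no reset, so you cannot allocate ``a fresh pair of counters'' for each of an unbounded number of blocks. Moreover, $\equal$ \emph{is} accepted by a \uca (and hence by a \wupa), so it witnesses the opposite inclusion failure from the one you want: it shows \uca{} and \wupa{} are not subsumed by \hdpa, not the other way around. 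Similarly, $\nondyck'$ is not the language $\set{w \mid \occ{w}{0} < \occ{w}{1}}$ (which is trivially deterministic: count both letters and test at the end); it is a prefixed variant of the prefix-non-Dyck language $\nondyck$, and its role in the paper is that it is accepted by a \wupa{} and by an \hdpa{} but \emph{not} by any \uca{} --- again the reverse of what you need here.

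The missing ingredient is a language in the \hdpa{} class that lies outside \wupa{} (and hence outside \uca). The paper uses
\[
\double = \set{ c^{n_0} d c^{n_1} d \cdots c^{n_k} d \mid k \ge 1,\ n_0 = 1,\ n_{j+1} \neq 2n_j \text{ for some } 0 \le j < k},
\]
whose non-membership in \wupa{} is imported from Bostan et al.; the real work is exhibiting an \hdpa{} for it, which is done by summing the block lengths $n_j$ separately over even and odd indices, tracking the parity of $j$, and exploiting the identities $2\cdot(2^0+2^2+\cdots) = 2^1+2^3+\cdots$ (and its even-index counterpart) to detect, via the semilinear set, the first index where $n_{j+1}\neq 2n_j$; the only nondeterministic choice (when to stop summing) is determined by the prefix read so far, so it is resolvable. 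Combined with $\equal \in \uca \setminus \hdpa$ from Theorem~\ref{thm_sep}, this yields all four non-inclusions. Your instinct that ``the genuinely hard step'' is a direct non-\wupa{} proof is understandable, but the paper avoids it entirely by citing the known separation for $\double$; the delicate construction is on the \hdpa{} side.
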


\begin{proof}
The language~$\equal$ from the proof of Theorem~\ref{thm_sep} is accepted by an \uca~\cite[Theorem~12]{CFM13} and thus also by a \wupa, as every \uca can be turned into an equivalent \wupa (again, see Remark~11 of \cite{BCKN20}).
However, $\equal$ is not accepted by any \hdpa, as shown in the proof of Theorem~\ref{thm_sep}.
This yields the first two separations.

Conversely, consider the language\label{page_langdef_double}
\[
\double = \set{ c^{n_0} d c^{n_1} d \cdots c^{n_k} d \mid k \geqslant 1, n_0 = 1 \text{, and } n_{j+1} \neq 2n_j \text{ for some } 0 \leqslant j < k}.
\]
Baston et al.\ proved that $\double$ is not accepted by any \wupa, and therefore also not by any \uca.
We show that it is accepted by an \hdpa, yielding the other two separations.

We start by giving some intuition.
Let $w = c^{n_0} d c^{n_1} d \cdots c^{n_k} d\in \double$ and let $j'\in\nats$ be minimal with $n_{j'+1} \neq 2n_{j'}$.
Then, we have $n_{j} = 2^{j}$ for all $j \leqslant j'$.
Our automaton sums up the $n_{j}$ for even and odd $j$ and relies on some basic facts about sums of powers of two to accept the language.

Consider, for example, the sums of the $2^j$ for even and odd $j \leqslant 5$, respectively. 
The former is $e_{\leqslant 5} = 2^0 + 2^2 + 2^4 = 21$ and the latter is $o_{\leqslant5} = 2^1 + 2^3 + 2^5 = 42$, 
We have $2\cdot e_{\leqslant 5} = o_{\leqslant 5}$ as the terms of the second sum are obtained by doubling the terms of the former one.
Similarly, we have $e_{\leqslant 6} = 85 = 2\cdot 42 + 1 = 2 \cdot o_{\leqslant 6}+1$.
Obviously, these equations hold for arbitrary bounds, i.e., if $j$ is odd then we have $2\cdot e_{\leqslant j} = o_{\leqslant j}$ and if $j$ is even then we have $e_{\leqslant j} = 2 \cdot o_{\leqslant j}+1$. 

Recall that $j'$ was chosen minimally with $n_{j'+1} \neq 2n_{j'}$.
So, the equations described above hold for all $ j \leqslant j'$, but they fail to hold for $j = j'+1$.\footnote{Note that the equation might be satisfied again, e.g., if the \emph{error} $n_{j'+1} - 2n_{j'}$ is compensated by $n_{j'+2}$.
However, this is irrelevant for our argument.}

Figure~\ref{fig_double-aut} depicts an \hdpa~$\aut$ that we show to accept $\double$. 
Intuitively, it sums up the $n_j$ for even and odd $j$ and nondeterministically decides to stop the summation at the end of one such block.
In addition to summing up the $n_j$, $\aut$ also keeps track of $j$ by counting the $d$'s.
Thus, we equip it with the semilinear set
\[
C = \set{(e,o,j) \in \nats^3 \mid 2\cdot e \neq o \text{ and } j \text{ is odd}} \cup \set{(e,o,j) \in \nats^3 \mid  e \neq 2\cdot o+1 \text{ and } j \text{ is even}},
\]
i.e., we check that the above equations were violated.

\begin{figure}[h]
    \centering
    \begin{tikzpicture}[ultra thick]
    
    \node[state] (-1) at (-2.5,0.5) {};
    \node[state] (0) at (-2.5,2.5) {};
    \node[state] (1) at (0,2.5) {};
    \node[state] (2) at (0,0.5) {};
    \node[state,accepting] (3) at (5,1.5) {};
    \node[state] (4) at (7.5,1.5) {};
    
    \path[-stealth]
    (-3.5,.5) edge (-1)
    (-1) edge[bend left] node[left] {$c,(1,0,0)$} (0)
    (0) edge[bend left] node[above] {$d,(0,0,1)$} (1)
    (1) edge[bend left] node[right] {$d,(0,0,1)$} (2)
    (2) edge[bend left] node[left] {$d,(0,0,1)$} (1)
    (1) edge[loop above] node[above] {$c,(0,1,0)$} ()
    (2) edge[loop below] node[below] {$c,(1,0,0)$} ()
    (1) edge[bend left] node[above] {$d,(0,0,1)$} (3)
    (2) edge[bend right] node[below] {$d,(0,0,1)$} (3)
    (3) edge[loop left] node[left] {$d,(0,0,0)$} (3)
    (4) edge[loop right] node[right] {$c,(0,0,0)$} ()
    (3) edge[bend left] node[above] {$c,(0,0,0)$} (4)
    (4) edge[bend left] node[below] {$d,(0,0,0)$} (3)
    ;
    
    \end{tikzpicture}
    \caption{The automaton for the language $\double$.}
    \label{fig_double-aut}
\end{figure}

First, let us argue that $(\aut,C)$ accepts $\double$. 
If $w$ is in $\double$, then there is an accepting run processing $w$ that moves to the accepting state as soon as the $(j'+1)$-th $d$ is processed, where $j'$ is defined as above. 
Whether this is the case only depends on the prefix ending at that position, i.e., the nondeterminism can be resolved by a resolver.
Finally, if $w$ is not in $\double$, then there are three cases.
Either, $w$ does not start with $cd$, $w$ does not end with a $d$, or it is of the form~$c^1dc^2dc^4d\cdots c^{2^j}d$ for some $j \geqslant1$. 
In the first two cases, there is no accepting run of the underlying automaton~$\aut$ that processes $w$. 
In the last case, the equations described above are never violated when processing a $d$, so whenever a run ends in the accepting state, the extended Parikh image of the run is not in $C$.
So, $(\aut, C)$ does indeed accept $\double$ and we have argued above that the only nondeterministic choice during any run can be made by a resolver, i.e., $(\aut, C)$ is an \hdpa.
\end{proof}

Finally, we show that all intersections between the different classes introduced above are nonempty.

\begin{theorem}
\label{theorem_hdpa_vs_unamb_fine}
\hfill
\begin{enumerate}
    \item\label{intitemnotdpa} There is a language that is accepted by an \hdpa and by an \uca, but not by any \dpa.
    
    \item\label{intitemnotuca} There is a language that is accepted by an \hdpa and by a \wupa, but not by any \uca.
    
    \item\label{intitem3} There is a language that is accepted by a \pa, but not by any \hdpa nor by any \wupa.
    
    \item\label{intitem4} There is a language that is accepted by a \wupa, but not by any \hdpa nor by any \uca.
\end{enumerate}
\end{theorem}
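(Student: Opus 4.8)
The plan is to pick, for each of the four items, one of the languages already marked in Figure~\ref{fig_classes} and verify its memberships and non-memberships: $\equal'$ for item~\ref{intitemnotdpa}, $\nondyck'$ for item~\ref{intitemnotuca}, $\double\cup\equal$ for item~\ref{intitem3}, and $\nondyck'\cup\equal$ for item~\ref{intitem4}, where $\equal'$ and $\nondyck'$ are suitable marked variants of $\equal$ and $\nondyck$ over an alphabet disjoint from $\set{a,b}$ and containing no word over $\set{a,b}$. Several ingredients are already available: $\equal$ is a \uca~\cite{CFM13} and a \wupa~\cite{BCKN20} but not an \hdpa (Theorem~\ref{thm_sep}); $\double$ is an \hdpa but not a \wupa, hence not a \uca~\cite{BCKN20}; and \dpa are closed under complementation~\cite{KR}.

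For the positive memberships I would construct, for $\equal'$, an \hdpa together with an explicit resolver (in the style of Example~\ref{example_nondyck}) and a \uca, and, for $\nondyck'$, an \hdpa with a resolver and a \wupa. Membership $\double\cup\equal\in\pa$ is immediate from closure of \pa under union. For $\nondyck'\cup\equal\in\wupa$ I would use that \wupa are closed under union of languages over disjoint alphabets: given \wupa for the two languages, merge their initial states; since a nonempty word lies in exactly one of the two alphabets, its first letter forces the run entirely into the corresponding component, so at most one accepting run survives, while $\epsilon$ is in neither language, and the semilinear constraint of the union is assembled by padding the two constraints to the common dimension (Proposition~\ref{propsemilinearclosure}).

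For the non-memberships, $\equal'\notin\dpa$ would come from the \dpa pumping lemma of~\cite{CFM11}, equivalently from $\overline{\equal'}$ not being a \pa together with closure of \dpa under complement~\cite{KR}. The key point that neither $\double\cup\equal$ nor $\nondyck'\cup\equal$ is an \hdpa follows from Lemma~\ref{lemma_pumping} by transferring the argument of Theorem~\ref{thm_sep} essentially verbatim: each of these languages coincides with $\equal$ on $\set{a,b}^*$, and the argument of Theorem~\ref{thm_sep} only ever manipulates words over $\set{a,b}$, so with $w=(a^{p+1}b^{p+1})^\ell$ and the same case analysis on the last letter of $v$, the pumped words that witness non-membership in $\equal$ also witness non-membership in the union language. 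Finally, $\double\cup\equal\notin\wupa$ and $\nondyck'\cup\equal\notin\uca$: since \wupa and \uca are closed under intersection with regular languages (take the product with a \dfa, whose determinism preserves uniqueness of the accepting run), intersecting with $\set{c,d}^*$, respectively with the alphabet of $\nondyck'$, would yield a \wupa for $\double$, respectively a \uca for $\nondyck'$, contradicting~\cite{BCKN20}, respectively $\nondyck'\notin\uca$.

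The main obstacle is the design of $\equal'$ and $\nondyck'$: each must simultaneously admit a resolver and a \uca (respectively \wupa), yet escape \dpa-recognizability (respectively \uca-recognizability). Since \hdpa, like \dpa, are closed under intersection with regular languages, no regular restriction of $\equal'$ can equal $\equal$ (otherwise $\equal$ would itself be an \hdpa), so $\equal'$ must depart from $\equal$ only through its \emph{complement} — the marking has to be placed so that a resolver can still commit greedily while $\overline{\equal'}$ stays non-Parikh. Pinning this down (or, equivalently, running the \dpa pumping lemma) is the technical heart; by contrast, the non-\hdpa claims for the two union languages fall out of Theorem~\ref{thm_sep} essentially for free.
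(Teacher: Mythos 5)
Your plan follows the paper's blueprint: the same four witness languages ($\equal'$, $\nondyck'$, $\double\cup\equal$, $\nondyck'\cup\equal$), and items~\ref{intitem3} and~\ref{intitem4} are handled correctly --- your direct re-run of the Theorem~\ref{thm_sep} pumping argument on the unions (using that all pumped words stay in $\set{a,b}^*$, where the union coincides with $\equal$) is a sound, if slightly different, substitute for the paper's one-line appeal to closure of \hdpa, \wupa and \uca under intersection. The problem is that you explicitly defer the hard content of items~\ref{intitemnotdpa} and~\ref{intitemnotuca}: you never fix concrete definitions of $\equal'$ and $\nondyck'$, never exhibit the automata and resolvers, and never prove $\equal'\notin\dpa$. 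The paper does all of this: it takes $\equal'=\set{c^m\set{a,b}^{m-1}ba^nb^n\mid m,n>0}$, builds one automaton that is simultaneously an \hdpa (the $c$-prefix tells the resolver when to jump) and, after projecting to transitions, an \uca (the forced $b$ before $a^nb^n$ makes the accepting run unique), and proves $\equal'\notin\dpa$ by a genuine reduction: from a hypothetical \dpa for $\equal'$ it extracts, for each state $q$ reachable by a $c$-prefix, the semilinear set $C_q$ of extended Parikh images of such prefixes, and assembles from the automata $(\aut_q,C_q')$ a \dpa for $\equal_b=\set{a,b}^*b\set{a^nb^n\mid n>0}$, which the Theorem~\ref{thm_sep} pumping argument rules out.

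Worse, one of the two routes you propose for $\equal'\notin\dpa$ is a dead end. You claim this is ``equivalently'' obtained from $\overline{\equal'}$ not being a \pa plus closure of \dpa under complement. But $\overline{\equal'}$ \emph{is} accepted by a \pa: a nondeterministic \pa can guess which of finitely many reasons causes $c^mw$ to fail ($w$ too short, the letter at position $m-1$ of $w$ is not $b$, or the suffix from position $m$ is not in $a^+b^+$ or has unequal letter counts), guess the relevant position, and verify with a counter that the guessed position matches the number of $c$'s. So that sufficient condition is false, and the ``equivalence'' with the \dpa-pumping route does not hold. The remaining route (applying the \dpa pumping lemma of~\cite{CFM11} directly to $\equal'$) is plausible but nontrivial --- the pumped factor $v$ may land inside the $c$-block, which shifts the marker and requires its own case analysis --- and you have not carried it out. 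As it stands, the proposal establishes items~\ref{intitem3} and~\ref{intitem4} but leaves items~\ref{intitemnotdpa} and~\ref{intitemnotuca} open.
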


\begin{proof}
\ref{intitemnotdpa}.) Consider the language\label{page_langdef_equal'}
\[
\equal' = \set{c^m\set{a,b}^{m-1} b a^nb^n \mid m,n>0} 
\]
and compare it to the language~$\equal = \set{a,b}^* \cdot \set{a^nb^n \mid n>0} $ from Theorem~\ref{thm_sep}, which is not accepted by any \hdpa and thus also not by any \dpa.
The intuitive reason is that such an automaton has to guess when the suffix of the form~$a^nb^n$ starts, which cannot be done by a resolver.
However, by adding the $c$'s, which encode the length of the infix before the suffix of the form~$a^nb^n$ starts, a resolver can determine when the suffix starts.
Note that we also, for reasons that we discuss below, require that the last letter before the suffix of the form~$a^nb^n$ is a $b$.

\begin{figure}[h]
    \centering
    \begin{tikzpicture}[ultra thick]
    
    \node[state] (1) at (0,0) {};
    \node[state] (2) at (3,0) {};
    \node[state] (3) at (6,0) {$q$};
    \node[state] (4) at (9,0) {};
    \node[state,accepting] (5) at (12,0) {};
    
    \path[-stealth]
    (-1,0) edge (1)
    (1) edge[loop above] node[above,align=left] {$c,(1,0,0,0)$} ()
    (1) edge node[above,align=left] {$a,(0,1,0,0)$\\$b,(0,1,0,0)$} (2)
    (2) edge[loop above] node[above,align=left]
    {$a,(0,1,0,0)$\\$b,(0,1,0,0)$} ()
    (2) edge node[above,align=left] {$b,(0,1,0,0)$} (3)
    (3) edge node[above,align=left] {$a,(0,0,1,0)$} (4)
    (4) edge[loop above] node[above,align=left]
    {$a,(0,0,1,0)$} ()
    (4) edge node[above,align=left] {$b,(0,0,0,1)$} (5)
    (5) edge[loop above] node[above,align=left]
    {$b,(0,0,0,1)$} ()
    (1) edge[bend right=30] node[xshift=0cm, below,align=left] {$b,(0,1,0,0)$} (3)
    
    ;
    
    \end{tikzpicture}
    \caption{The automaton for the language $\equal'$.}
    \label{fig_aut-equalprime}
\end{figure}

The automaton~$(\aut,C)$ with $\aut$ in Figure~\ref{fig_aut-equalprime} and \[C=
\set{(m,m,n,n)\mid m,n \in \nats}
\]
is an \hdpa accepting $\equal'$. 
The automaton accepts~$\equal'$ (when viewed as a \pa) and the only nondeterministic choice, i.e., when to move to $q$ can be made based on the prefix processed thus far, i.e., $q$ has to be reached with the (unique) prefix of the form~$c^m\set{a,b}^{m-1}b$.

Furthermore, consider the \nfa~$\aut'$ over $\set{a,b,c}$ obtained from $\aut$ in Figure~\ref{fig_aut-equalprime} by projecting away the vectors on the transitions, which is unambiguous:
Every word accepted by $\aut'$ must end with $ba^+b^+$ and the first $b$ of that suffix has to lead to $q$. 
As the only nondeterminism in the automaton is the choice to go to $q$ or not, this implies that $\aut'$ has indeed at most one accepting run on every word.

Now, as every vector projected away from $\aut$ is a unit vector, it is straightforward to give an eleven-dimensional\footnote{Note that the automaton in Figure~\ref{fig_aut-equalprime} has eleven transitions.} semilinear set~$C'$ such that $(\aut', C')$ is an \uca accepting~$\equal'$, i.e., $C'$ \emph{simulates} $C$: $C'$ ensures that the initial $c$-labeled self-loop occurs in a run as often as transitions labeled by $(0,1,0,0)$ in Figure~\ref{fig_aut-equalprime} occur in the run (this simulates the first two components of vectors in $C$ being equal). 
Similarly, $C'$ ensures that the transitions labeled by $(0,0,1,0)$ occur as often as transitions labeled by $(0,0,0,1)$. 

Note that requiring a $b$ in front of the suffix~$a^nb^n$ allows us to use the same \emph{transition structure} for $\aut$ and $\aut'$, as both automata make the same nondeterministic choice, i.e., they guess when the suffix of the form~$a^nb^n$ starts. 
Thus, the separation between languages accepted by \dpa and languages accepted both by \hdpa and \uca can be witnessed by a language where the \hdpa and the \uca are essentially the same automaton.
Also, they both rely on the same nondeterministic guess, which can be made history-deterministically and unambiguously.
Without the $b$, the unambiguous automaton would have to guess the start of the longest suffix of the form~$a^nb^{n'}$ with $n \geqslant n'$, and thus an \uca accepting $\equal$ would require a slightly different transition structure than the one shown in Figure~\ref{fig_aut-equalprime}.

Finally, it remains to argue that $\equal'$ is not accepted by any \dpa.
First, let us remark that the pumping argument in the proof of Theorem~\ref{thm_sep} also shows that the language~$\equal_b = \set{a,b}^*\cdot b\cdot \set{ a^n b^n \mid n > 0}$ is not accepted by any \hdpa and thus also not by any \dpa.
Now, we show that a \dpa accepting $\equal'$ can be turned into a \dpa accepting $\equal_b$, which yields the desired result.

So, assume there is a \dpa~$(\aut, C)$ accepting $\equal'$, say with $\aut = (Q, \set{a,b,c} \times D, q_\init, \Delta, F)$.
For every $q \in Q$ let $R_q$ be the set of runs of $\aut$ starting in $q_\init$, processing a word in $c^+$, and ending in $q$.
Furthermore, let
\[
C_q = \left\{ \sum\nolimits_{j=0}^{n-1}\vec{v}_j \mid (q_\init, (c, \vec{v}_0),q_1)(q_1, (c, \vec{v}_1),q_2) \cdots (q_{n-1}, (c, \vec{v}_{n-1}),q) \in R_q \right\}  
\]
be the set of extended Parikh images of those runs, which is semilinear~\cite{Par66}.

Furthermore, for each $q \in Q$ with nonempty $R_q$ let $\aut_q = (Q, \set{a,b}\times D, q, \Delta', F)$ where $\Delta'$ is obtained by removing all $c$-transitions from $\Delta$. 
Note that each $\aut_q$ is still deterministic, as we have only changed the initial state and removed transitions.
Finally, we define 
\[C'_q = \set{\vec{v} \mid \text{there exists }\vec{v}\,' \in C_q \text{ such that } \vec{v} + \vec{v}\,' \in C},
\]
which is again semilinear, as it can be defined by a Presburger formula constructed from Presburger formulas for $C_q$ and $C$ (see Proposition~\ref{prop_presburger}).

We claim $\equal_b = \bigcup_q L(\aut_q, C_q')$ where $q$ ranges over all states such that $R_q$ is nonempty. 
As \dpa are closed under union, this yields the desired contradiction in the form of a \dpa for $\equal_b$.

So, consider some $w \in \equal_b$, i.e., $w = w'ba^nb^n$ for some $n > 0$ and some $w' \in \set{a,b}^*$.
Define $m = \size{w'b}$ and note that $c^m w$ is in $\equal'$, i.e., accepted by $(\aut, C)$.
Hence, let $\rho$ be the unique run of $(\aut, C)$ processing $c^m w$, let $\rho'$ be the prefix of $\rho$ processing $c^m$, and let $\rho''$ be the suffix processing $w$.
So, there is some $q$ (the state $\rho'$ ends in, which is equal to the state $\rho''$ begins with) such that $\rho' \in R_q$ and the extended Parikh image~$\vec{v}\,'$ induced by $\rho'$ is in $C_q$.
Also, $\rho''$ is a run of $\aut_q$ starting in $q$ and ending in $F$.
Let $\vec{v}$ be the extended Parikh image induced by $\rho''$.
Note that $\vec{v} + \vec{v}\,'$ is the extended Parikh image induced by the full run~$\rho = \rho'\rho''$ that witnesses $c^m w \in L(\aut, C)$, and is therefore in $C$.
From this we conclude $\vec{v} \in C_q'$ and therefore that $\rho''$ is an accepting run of $(\aut_q,C_q')$ processing $w$, i.e., $w \in L(\aut_q,C_q')$ and $R_q$ is nonempty as witnessed by $\rho'$. 

For the other direction, consider a $w \in L(\aut_q,C_q')$ for some $q$ with nonempty~$R_q$. 
Then there is an accepting run~$\rho''$ of $(\aut_q,C_q')$ processing $w$, say with induced extended Parikh image~$\vec{v} \in C_q'$.
By construction, there is also a $\vec{v}\,'\in C_q$ such that $\vec{v} + \vec{v}\,' \in C$.
Furthermore, $\vec{v'}$ is the extended Parikh image induced by some run~$\rho'$ of $(\aut,C)$ processing some word of the form~$c^m$. 
Now, $\rho'\rho''$ is a run of $(\aut, C)$ starting in the initial state, processing~$c^m w$, ending in $F$, and with extended Parikh image~$\vec{v} + \vec{v}\,' \in C$, i.e., it is an accepting run.
Hence, $c^m w \in L(\aut, C) = \equal'$. 
As $w$ does not contain any $c$ ($(\aut_q,C_q')$ has no $c$-transitions), this implies that $w$ must be of the form~$w'ba^nb^n$ with $w' \in \set{a,b}^{m-1}$, i.e., $w \in \equal_b$ as required.

\ref{intitemnotuca}.) Recall that we say that $w \in \set{0,1}^*$ is non-Dyck if $\size{w}_0 < \size{w}_1$.
Now, consider the language\label{page_langdef_nondyck'}
\[
\nondyck' = \set{ c^n w \mid w=a_0 \cdots a_k \in \set{0,1}^*, \size{w} \geqslant n, \text{ and } a_0\cdots a_{n-1} \text{ is non-Dyck} },
\]
which is a variation of the language~$\nondyck$ of words that have a non-Dyck prefix.
In $\nondyck'$, the length of that prefix is given by the number of $c$'s in the beginning of the word, which makes accepting the language easier.
Nevertheless, Cadilhac et al.\ showed that $\nondyck'$ is not accepted by any \uca~\cite[Proposition~14]{CFM13}, but Bostan et al.\ showed that it is accepted by a \wupa~\cite[Remark~11]{BCKN20}.

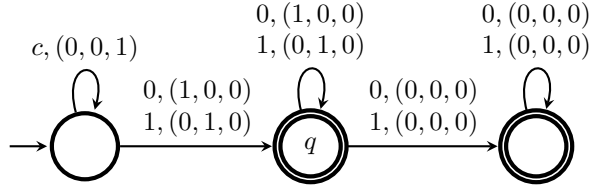
\begin{figure}[h]
    \centering
    \begin{tikzpicture}[ultra thick]
    
    \node[state] (1) at (0,0) {};
    \node[state,accepting] (2) at (3,0) {$q$};
    \node[state,accepting] (3) at (6,0) {};

    \path[-stealth]
    (-1,0) edge (1)
    (1) edge[loop above] node[above,align=left] {$c,(0,0,1)$} ()
    (1) edge node[above,align=left] {$0,(1,0,0)$\\$1,(0,1,0)$} (2)
    (2) edge[loop above] node[above,align=left]
    {$0,(1,0,0)$\\$1,(0,1,0)$} ()
    (2) edge node[above,align=left] {$0,(0,0,0)$\\$1,(0,0,0)$} (3)
    (3) edge[loop above] node[above,align=left] {$0,(0,0,0)$\\$1,(0,0,0)$} ()
;
    
    \end{tikzpicture}
    \caption{The automaton for the language $\nondyck'$.}
    \label{fig_aut-nondyckprime}
\end{figure}

In particular, it is accepted by the \pa~$(\aut, C)$ where $\aut$ is depicted in Figure~\ref{fig_aut-nondyckprime} and with
\[
C =\set{ (n,n',n+n') \mid n < n' }.
\]
Note that $(\aut, C)$ is both an \hdpa and a \wupa for $L(\aut,C)$: every word~$c^nw$ in $\nondyck'$ has at most one accepting run, the one which leaves~$q$ with the $(2n+1)$-th letter of $c^nw$ (if this letter exists).
Furthermore, this choice can be made by a resolver, as the number of $c$ at the start of the word uniquely determines when this nondeterministic choice has to be made.

\ref{intitem3}.) The (disjoint) union~$\double \cup \equal$ is accepted by some \pa, as both $\double$ and $\equal$ are accepted by a \pa and \pa are closed under union. But it is neither accepted by any \hdpa nor by any \wupa, as both models are closed under intersection\footnote{For \wupa, this was claimed by Bostan et al.~\cite{BCKN20}, for \hdpa this is shown in Theorem~\ref{thm_closure} on Page~\pageref{thm_closure}.}, i.e., $(\double \cup \equal ) \cap \set{a,b}^* = \equal$ and $(\double \cup \equal ) \cap \set{c,d}^* = \double$ yield the desired separations. 

\ref{intitem4}.) The (disjoint) union~$\nondyck' \cup \equal$ is accepted by some \wupa (as \wupa are closed under disjoint unions) but not by any \hdpa nor by any \uca, as both classes are closed under intersection. Thus, $(\nondyck' \cup \equal) \cap \set{0,1,c}^* = \nondyck'$ and $(\nondyck' \cup \equal) \cap \set{a,b}^*$ yield the desired separations.
\end{proof}

\subsection{History-deterministic Reversal-bounded Counter Machines}
\label{subsec_rbcm}

There is one more automaton model that is closely related to Parikh automata, i.e., reversal-bounded counter machines, originally introduced by Ibarra~\cite{Ibarra78RBCM}. 
These are, in their most general form, two-way automata with multiple counters that can be incremented, decremented, and tested for zero, but there is a fixed bound on the number of reversals of the reading head \emph{and} on the number of switches between increments and decrements (on each counter). 
It is known that Parikh automata and nondeterministic reversal-bounded counter machines are equivalent~\cite[Property~8]{KRfull}, while deterministic reversal-bounded counter machines are strictly more expressive than deterministic Parikh automata~\cite[Proposition~3.14]{CFM12}.
Here, we compare history-deterministic reversal-bounded counter machines and history-deterministic Parikh automata (and, for technical reasons, also history-deterministic Parikh automata with $\epsilon$-transitions).

We begin by introducing counter machines and then their reversal-bounded variant.
A {(two-way) counter machine} is a tuple $\machine = (k, Q, \Sigma, \lmark, \rmark, q_\init, \Delta, F)$ where
$k \in \nats$ is the number of counters,
$Q$ is the finite set of states,
$\Sigma$ is the alphabet,
$\lmark, \rmark \notin \Sigma$ are the left and right endmarkers respectively,
$q_\init \in Q$ is the initial state,
\[\Delta \subseteq (Q \times \Sigmalr \times \set{0, 1}^k) \times (Q \times \set{-1, 0, 1} \times \set{-1, 0, 1}^k)\] 
is the transition relation, and $F \subseteq Q$ is the set of accepting states.
Here, we use the shorthand~$\Sigmalr = \Sigma \cup \set{\lmark, \rmark}$.
Intuitively, a transition~$((q, a, \vec{g}),(q',m,\vec{v}))$ is enabled if the current state is $q$, the current letter on the tape is $a$, and for each $0 \leqslant j \leqslant k-1$, the $j$-th entry in the guard~$\vec{g}$ is nonzero if and only if the current value of counter~$j$ is nonzero.
Taking this transition updates the state to $q'$, moves the head in direction~$m$, and adds the $j$-th entry of $\vec{v}$ to counter~$j$.

We require that all transitions~$((q, a, \vec{g}),(q',m,\vec{v})) \in \Delta$ satisfy the following properties:
\begin{itemize}
    \item If $a = \lmark$, then $m \geqslant 0$: the head never leaves the tape to the left.
    \item If $a = \rmark$, then $m \leqslant 0$: the head never leaves the tape to the right.
    \item $\vec{g}$ and $\vec{v}$ are \emph{compatible}, i.e. if the $j$-th entry of $\vec{g}$ is zero, then the $j$-th entry of $\vec{v}$ is nonnegative: a zero counter is not decremented.
\end{itemize}

A {configuration} of $\machine$ on an input $w\in\Sigma^*$ is of the form~$(q, \lmark w\rmark, h, \vec{c})$ where $q \in Q$ is the current state, $\lmark w \rmark$ is the content of the tape (which does not change during a run), $0 \leqslant h \leqslant \size{w} + 1$ is the current position of the reading head, and $\vec{c} \in \nats^k$ is the vector of current counter values.
The initial configuration on $w \in \Sigma^*$ is $(q_\init, \lmark w \rmark, 0, \vec{0})$,  where $\vec{0}$ is the $k$-dimensional zero vector.
    
    We say that a vector~$\vec{c} \in\nats^k$ satisfies a guard~$\vec{g} \in \set{0,1}^k$ if the following is satisfied for every $1 \leqslant j \leqslant k$: the $j$-th entry of $\vec{c}$ is zero if and only if the $j$-th entry of $\vec{g}$ is zero.
    Now, we write $(q, \lmark w\rmark, h, \vec{c}) \Rightarrow (q', \lmark w \rmark, h + m, \vec{c} + \vec{v})$ if there is a transition~$((q, a, \vec{g}), (q', m, \vec{v})) \in \Delta$ such that $\vec{c}$ satisfies $\vec{g}$, where $a$ is the $h$-th letter of $\lmark w \rmark$. 
    
    A {run} of $\machine$ on an input~$w$ is a sequence of configurations $\run = \conf_0 \conf_1 \cdots \conf_{n}$ such that $\conf_0$ is the initial configuration on $w$ and $\conf_{j} \Rightarrow \conf_{j+1}$ for every $0 \leqslant j < n$. 
    The run~$\run$ is {accepting} if the state of $\conf_n$ is in $F$. 
    The language {accepted} by $\machine$, denoted by $L(\machine)$, is the set of words~$w \in \Sigma^*$ such that there exists an accepting run of $\machine$ on $w$.
    
    \begin{remark}
    A run on some given $w \in \Sigma^*$ is fully described by the sequence of transitions between its configurations.
    However, not every sequence of transitions induces a run.
    \end{remark}
    
    A counter machine~$\machine$ is {deterministic} if, for every $q \in Q$, $a \in \Sigmalr$ and $\vec{g} \in \set{0, 1}^k$, there is at most one triple $(q', m, \vec{v}) \in Q \times \set{-1, 0, 1} \times \set{-1, 0, 1}^k$ such that $((q, a, \vec{g}), (q', m, \vec{v})) \in \Delta$.

The problem with counter machines is that even their deterministic variants are Turing-complete \cite[Theorem~3.3]{Ibarra78RBCM}. Therefore, one must impose some restrictions in order to obtain decidability of some decision problems. One way to do that is to introduce bounds on the number of reversals of the direction the reading head moves and on the number of reversals of the stack height of each counter.

More formally, consider a run $\run$ given by a finite sequence~$((q_n, a_n, \vec {g_n}), (q_{n+1}, m_n, \vec{v_n}))_n$ of transitions. 
The {number of reversals of the reading head} during $\run$ is the number of sign alternations in the sequence~$(m_n)_n$, ignoring the $0$'s. 
The {number of reversals of the $j$-th counter} during $\run$ is the number of sign alternations in the sequence~$(v_{n, j})_n$, where $v_{n, j}$ is the $j$-th entry of $\vec{v_n}$, again ignoring the $0$'s.
A counter machine~$\machine$ is {reversal-bounded} if there exists a~$b \in \nats$ such that every accepting run has  at most $b$ reversals of the reading head and at most $b$ reversals of each counter.

    We write \rbcm for reversal-bounded counter machines and 
    \nrbcm{1} for \rbcm that do not make a reversal of the reading head (i.e., they are one-way).
    Their deterministic variants are denoted by \drbcm and \ndrbcm{1}, respectively.

\begin{proposition}\label{prop:rbcm}\hfill
    \begin{enumerate}
        \item \label{prop:rbcm_1w} Every \rbcm can  be  effectively turned into an equivalent \nrbcm{1}~\cite[Theorem~2.2]{Ibarra78RBCM}.
        \item \label{prop:rbcm_counters1}
        Every \rbcm can be effectively turned into an equivalent one where the number of reversals of each counter is bounded by $1$. 
        This construction preserves determinism and one-wayness~\cite[Theorem~5]{BB74}.
        \end{enumerate}
\end{proposition}

Hence, in the following, we assume that during each run of an \rbcm, each counter reverses at most once. 

\begin{proposition}\label{prop:rbcmvspa}\hfill
\begin{enumerate}
    \item \label{prop:rbcmvspa_expr}
    \rbcm are as expressive as \pa~\cite[Property~8]{KRfull}.
    \item \label{prop:1drbcmvspa_expr}
    \ndrbcm{1} are strictly more expressive than \dpa~\cite[Proposition~3.14]{CFM12}.
\end{enumerate}
\end{proposition}

In the following, we determine the relation between history-deterministic \rbcm and \hdpa.
To this end, we first have to define the notion of history-determinism for \rbcm, which is slightly technical due to the two-wayness of these machines.

Let $\machine = (k, Q, \Sigma, \lmark, \rmark, q_\init, \Delta, F)$ be an \rbcm. 
Given a sequence~$\tau_0 \cdots \tau_j$ of transitions inducing a run~$\rho$, let $\maxpos(\tau_0 \cdots \tau_j)$ be the position of the reading head at the end of $\run$, so in particular $\maxpos(\epsilon) = 0$.
Hence, $(\lmark w \rmark)_{\maxpos(\tau_0\tau_1\cdots \tau_j)}$ is the letter the reading head is currently pointing to.
A {resolver} for $\machine$ is a function~$\hstrat\colon \Delta^* \times \Sigmalr \rightarrow \Delta$ such that if $w$ is accepted by~$\machine$, there is a sequence of transitions~$\tau_0 \tau_1 \cdots \tau_{n-1}$ such that
\begin{itemize}
    \item $\tau_{j+1} = \hstrat(\tau_0\tau_1\cdots \tau_j, (\lmark w \rmark)_{\maxpos(\tau_0\tau_1\cdots \tau_j)})$ for all $0 \leqslant j < n-1$, and 
    \item the sequence of transitions~$\tau_0 \tau_1 \cdots \tau_{n-1}$ induces an accepting run of $\machine$ on $w$.
\end{itemize}
    An \rbcm~$\machine$ is {history-deterministic} (an \hdrbcm) if there exists a resolver for $\machine$. One-way \hdrbcm are denoted by \nhdrbcm{1}.

Now, we are able to state the main theorem of this subsection: History-deterministic two-way \rbcm are as expressive as \rbcm and \pa while history-deterministic one-way \rbcm are as expressive as history-deterministic \pa.

\begin{theorem}
\label{theorem_hdrbcmvshdpa}
\hfill
\begin{enumerate}
    \item\label{theorem_hdrbcmvshdpa_twoway} \hdrbcm are as expressive as \rbcm, and therefore as expressive as \pa.
    \item\label{theorem_hdrbcmvshdpa_oneway} \nhdrbcm{1} are as expressive as \hdpa.
\end{enumerate}
\end{theorem}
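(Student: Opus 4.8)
The plan is to prove the two parts separately, in both cases relying on the classical equivalence of \pa and \rbcm~\cite{KR} while being careful to preserve history-determinism, and using \hdpa with $\epsilon$-transitions as an intermediate model where convenient. For Part~\ref{theorem_hdrbcmvshdpa_twoway}, the inclusion from \hdrbcm to \rbcm is immediate, since an \hdrbcm \emph{is} an \rbcm and $L(\cdot)$ is defined through runs, not resolvers. For the converse I would turn a given \rbcm~$\machine$ into an equivalent \hdrbcm~$\machine'$ that first performs a complete left-to-right \emph{no-op scan} of the input (the head sweeps to the right endmarker without touching any counter), then moves the head back to the left endmarker, and only then simulates~$\machine$ step by step. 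The point of the scan is that each transition taken during it records the letter read, so the transition sequence produced up to the start of the simulation phase encodes the whole input word~$w$; hence a resolver, which by definition sees the transition sequence so far, knows~$w$ entirely and may pick --- by an arbitrary, not necessarily computable, choice --- an accepting run of~$\machine$ on~$w$ (one exists exactly when $w \in L(\machine)$) and steer~$\machine'$ along it. The scan and return add only a constant number of head reversals and no counter reversals, so~$\machine'$ stays reversal-bounded; and since every run of~$\machine'$ consists of the forced scan and return followed by a run of~$\machine$, we get $L(\machine')=L(\machine)$. Together with the equivalence of \rbcm and \pa~\cite{KR} this finishes the part; note that the trick genuinely uses two-wayness, consistently with \hdpa being less expressive than \pa (Theorem~\ref{thm_sep}).

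For Part~\ref{theorem_hdrbcmvshdpa_oneway} I would check that both directions of the classical translation between \pa and \nrbcm{1} preserve history-determinism. From an \hdpa~$(\aut,C)$ to a \nhdrbcm{1}: the machine reads the input one-way, simulating~$\aut$ transition by transition and adding the associated vectors to its (only incremented) counters, and at the right endmarker it runs the standard reversal-bounded routine that decrements these counters to test membership of their sum in the semilinear set~$C$. The reading-phase choices are exactly those of~$\aut$ and are therefore resolved by~$\aut$'s resolver; and once the endmarker is reached, the transition sequence so far already determines the accumulated vector, which lies in~$C$ whenever $w \in L(\aut,C)$, so the resolver can also drive the membership routine, choosing the right linear component and a witnessing combination. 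This yields a genuine, $\epsilon$-free \nhdrbcm{1} for $L(\aut,C)$.

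For the reverse direction, from a \nhdrbcm{1}~$\machine$ --- with each counter reversing at most once, as may be assumed --- to an \hdpa, I would first build a Parikh automaton~$\aut$ \emph{with $\epsilon$-transitions} that simulates~$\machine$'s one-way run. It uses two PA-counters $c^{\uparrow}, c^{\downarrow}$ per counter of~$\machine$ for the total increments and total decrements, keeps in its finite control the ``phase'' of each counter (still zero / increasing / decreasing / back to zero), guesses the two transitions at which a counter passes from one phase to the next, and realizes~$\machine$'s stationary and endmarker moves --- which let~$\machine$ take several transitions without consuming a letter --- by $\epsilon$-transitions. Since the value of each counter of~$\machine$ is first nondecreasing and then nonincreasing, a zero-test of~$\machine$ is correct precisely when it is consistent with this phase bookkeeping, and this can be certified by the semilinear set~$C$ from the final configuration alone: for each counter, $C$ demands $c^{\downarrow} \le c^{\uparrow}$, and $c^{\downarrow} = c^{\uparrow}$ if the ``back to zero'' marker was ever set. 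Every nondeterministic choice of~$\aut$ --- the choices inherited from~$\machine$, the phase-change guesses, the scheduling of $\epsilon$-transitions --- is fixed by the actual run of~$\machine$ on the prefix read so far (which, $\machine$ being one-way, depends only on that prefix), so a resolver for~$\aut$ is obtained by internally simulating a resolver for~$\machine$. A final $\epsilon$-elimination for \hdpa turns~$\aut$ into an honest \hdpa.

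The main obstacle is Part~\ref{theorem_hdrbcmvshdpa_oneway}, and inside it the direction from \nhdrbcm{1} to \hdpa: one must compress~$\machine$'s zero-tests --- which do influence control flow and occur in the middle of a run --- into a semilinear constraint inspecting only the final counter values, and simultaneously argue that the extra guesses this forces (the phase-change points) are ``don't-care'' nondeterminism in the precise sense that following~$\machine$'s real run determines them, so that the resolver transfers. The $\epsilon$-elimination step, routine for plain Parikh automata, also has to be carried out so as not to destroy the resolver; I expect to isolate it as a separate lemma on \hdpa with $\epsilon$-transitions.
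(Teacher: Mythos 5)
Your proposal follows essentially the same route as the paper: the full-scan trick for two-way machines in Part~\ref{theorem_hdrbcmvshdpa_twoway}, the counter-simulation plus end-of-run semilinear membership test for the direction from \hdpa to \nhdrbcm{1}, and, for the converse, deferred zero-tests via per-counter phase bookkeeping, two \pa-counters (increments/decrements) per machine counter, $\epsilon$-transitions for stationary moves, and a separate $\epsilon$-elimination lemma that preserves history-determinism. The only substantive divergence is that the paper first normalizes the \nrbcm{1} (statuses \stini/\stinc/\stdec/\stzero, with acceptance only when all counters are zero and every status is \stini or \stzero) before building the $\epsilon$-\pa, whereas you fold the bookkeeping directly into the \pa; as literally stated, your semilinear condition has a small soundness gap: a run may answer a test ``nonzero'' for a counter whose true value has already reached $0$ while its phase is still ``decreasing'', and your constraints ($c^{\downarrow} \le c^{\uparrow}$, with equality enforced only when the back-to-zero marker was set) do not reject such a run. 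This is fixable by additionally requiring $c^{\downarrow} < c^{\uparrow}$ whenever the final phase is ``decreasing'' --- the affected words remain accepted via the run that switches to ``back to zero'' at the last decrement, and a resolver, which tracks the true counter values, can always make that switch --- which is precisely the role played by the paper's requirement that accepting runs end with every status in $\set{\stini, \stzero}$.
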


\begin{proof}
The proof of the first equivalence is very general and not restricted to \rbcm: A two-way automaton over finite inputs can first read the whole input and then resolve nondeterministic choices based on the whole word. 
Spelt out more concisely: two-wayness makes history-determinism as powerful as general nondeterminism.

For the other equivalence, both directions are nontrivial: We show how to simulate a \pa using an \rbcm while preserving history-determinism, and how to simulate a \nrbcm{1} by a \pa, again while preserving history-determinism.
Due to the existence of transitions that do not move the reading head in a \nrbcm{1}, this simulation takes a detour via \pa with $\epsilon$-transitions.

\ref{theorem_hdrbcmvshdpa_twoway}.)
    It is immediate from the definition that \rbcm are at least as expressive as \hdrbcm. Thus, we show that for any \rbcm, one can construct an equivalent  \hdrbcm.
    
    Let~$\machine$ be an \rbcm. We construct an \rbcm~$\machine'$ from $\machine$ as follows: first, the reading head moves right up to the right endmarker. Then, it moves back left to the left endmarker, and then behaves like $\machine$ on the input. During the initial scan, it leaves the counters unchanged.
    It is clear that $\machine'$ is equivalent to $\machine$, as once the initial scan is finished, it behaves like $\machine$. 
    
    Furthermore, $\machine'$ is history-deterministic since, after the initial scan, which is deterministic, a resolver has access to the whole input and can, for accepted words, fix one accepting run for the input, and then resolve the nondeterminism accordingly. 
    This is possible, since the sequence of transitions seen during the initial scan is unique for every possible input, and the resolver can base its choices on that information.

\ref{theorem_hdrbcmvshdpa_oneway}.)
We first show how to turn an \hdpa into an equivalent \nrbcm{1} and then show that it is history-deterministic. 
This construction is inspired by the constructions turning a \pa into an equivalent \nrbcm{1}~\cite[Property~8]{KR} and turning a \dpa into an equivalent \ndrbcm{1}~\cite[Proposition~3.14]{CFM12}.

Given an \hdpa~$(\aut, C)$, the simulating counter machine~$\machine$ works in two phases. 
In the first phase, $\machine$ simulates the run of $(\aut, C)$ (using the same states as $(\aut, C)$) and computes the extended Parikh image of the simulated run using its counters.
Hence, in the first phase, there is a bijection between the runs of $(\aut, C)$ and $\machine$.
If the simulated run does not end in an accepting state, then the run of $\machine$ ends in a rejecting state after the first phase.
However, if the run ends in an accepting state, then the second phase begins.

In this case, acceptance of $(\aut, C)$ depends on whether the extended Parikh image of the run (stored in the counter values of $\machine$) is in $C$.
We can do so using the following result due to Cadilhac et al.~\cite[Proposition~3.13]{CFM12}: membership in a given semilinear set can be tested by a \drbcm that does not move its reading head. More formally they showed that
for every semilinear set~$C' \subseteq \nats^d$, there exists a \drbcm~$\machine_{C'}$ such that, for any configuration~$\conf_0 = (q_\init, \lmark w \rmark, i, \vec{v} \cdot \vec{0})$ with $\vec{v} \in \nats^d$, the unique run of $\machine_{C'}$ starting from $\conf_0$ is accepting if and only if $\vec{v} \in C'$. Moreover, $\machine_{C'}$ does not move its reading head.
Note that $\machine_{C'}$ will in general use more than $d$ counters.
In the configuration~$\conf_0$, the first $d$ counters contain the vector to be checked, and the remaining ones are initialized with zero. This is exactly the situation after the first phase of the simulation of the \hdpa described above has finished.
Thus, in the second phase, we run \drbcm~$\machine_C$ to check whether the extended Parikh image of the simulated run of $(\aut, C)$ stored in the counter values of the simulating run of $\machine$ is in $C$.

As there is a bijection between runs of $(\aut, C)$ and the runs of $\machine$ restricted to their first phase, it is straightforward to argue that $\machine$ accepts $L(\aut, C)$.
Furthermore, $\machine$ is one-way, as $(\aut, C)$ is one-way by definition, which implies that the first phase of the simulation is one-way as well, while the head does not move in the second phase.

It remains to argue that the resulting \nrbcm{1}~$\machine$ is indeed history-deterministic.
However, this is straightforward, since it simulates the history-deterministic \pa~$(\aut,C)$ (recall that there is a bijection between the runs, which can be used to transfer the resolver) and then deterministically checks membership in a semilinear set. 
Thus, using the bijection, a resolver for $(\aut, C)$ can directly be turned into a resolver for $\machine$.

So, let us consider the other direction, i.e., we turn an \nhdrbcm{1}~$\machine$ into an equivalent \hdpa~$(\aut, C)$.
To do so, we proceed in three steps:
\begin{description}
    \item[Step 1)] We turn $\machine$ into a \nhdrbcm{1}~$\machine'$ in a normal form that requires an \rbcm to only terminate with the reading head at the right endmarker and all counters being zero (recall that a \pa can only test for membership in the semilinear set at the end of a run).
    \item[Step 2)] We turn the \nhdrbcm{1}~$\machine'$ into an equivalent $\epsilon$-\hdpa.
    \item[Step 3)] We show how to eliminate $\epsilon$-transitions from \pa while preserving history-determinism.
\end{description}

\textbf{Step 1)} 
    A \nrbcm{1}~$\machine$ is in {normal form} if the following conditions are satisfied:
    \begin{enumerate}
        \item Let $((q, a, \vec{g}), (q', m, \vec{v}))$ be a transition of $\machine$ such that $q'$ is not final. Then, $((q, a, \vec{g}\hspace{.05cm}'), (q', m, \vec{v}))$ is also a transition of $\machine$ for every $\vec{g}\hspace{.05cm}' \in \set{0, 1}^k$ that is compatible with $\vec{v}$, i.e. $\machine$ does not test the counters during the run, but transitions that would decrement from a zero counter are still not allowed.
            
        \item Let $((q, a, \vec{g}), (q', m, \vec v))$ be a transition of $\machine$ such that $q'$ is final.
        Then, $a = \rmark$, $\vec{g} = \vec{0}$, and $\vec{v} = \vec{0}$, i.e. $\machine$ only accepts an input on the right endmarker and with all counters equal to zero.
    
        \item Accepting states do not have any outgoing transitions.
    \end{enumerate}

We show how to turn a \nrbcm{1} into an equivalent one in normal form while preserving history-determinism.

It is easy to ensure the second and third conditions, since every \nrbcm{1} can be turned into an equivalent one that, upon reaching an accepting state, deterministically moves to the right endmarker and empties its counters, and then reaches a fresh accepting state without outgoing transitions.
This also preserves history-determinism. 

For the first condition, let $\machine$ be a \nrbcm{1} meeting the other two conditions. 
Also, recall that we assume, again without loss of generality, that every counter reverses at most once. 
We construct $\machine'$ that simulates $\machine$ and instead of testing counters for zero it guesses the outcome of a test and verifies the guesses at the end of the run. 
The ability to postpone these tests crucially relies on the fact that every counter reverses at most once. 

For every counter, $\machine'$ stores in its states one of the following statuses:
\begin{description}
    \item[\stini] the counter has neither been increment nor decremented.
    \item[\stinc] the counter has been incremented but not yet decremented.
    \item[\stdec] the counter has been incremented, decremented, but is still assumed to be nonzero.
    \item[\stzero] the counter has been guessed to be zero.
\end{description}

At the beginning, all the counters are in status \stini. Then, when a counter has status \stini or \stinc and is incremented, then, its new status is \stinc.
If the status is \stinc or \stdec and the counter is decremented, its status is nondeterministically updated to \stdec or \stzero. 
Intuitively, it should be updated to \stdec as long as the true counter value is nonnegative and to \stzero if the true value is zero. 
A counter with status \stzero cannot be decremented (as it is assumed to be zero) nor incremented (as there is at most one reversal, and the status cannot be changed anymore.
This implies that the value of the corresponding counter can no longer be updated as soon as its status is changed to \stzero. 

Now, $\machine'$ simulates $\machine$, accounting for tests in guards as follows:
    \begin{itemize}
        \item If a counter has status \stini or \stzero, then its value is assumed to be equal to zero.
        \item If a counter has status \stinc or \stdec, then its value is assumed to be positive.
    \end{itemize}
As a counter is no longer updated once its status is \stzero, the guess that the counter is zero is correct if and only if the value of the counter is indeed zero at the end of the run.

As $\machine$ satisfies the second condition of the normal form, every accepting run ends with a transition testing all counters for zero. Thus, an accepting state of $\machine$ is only reached if all counters are zero (and stay zero after the transition has been taken).
So, we equip $\machine'$ with a transition that checks, after the simulation of $\machine$ has finished in an accepting state, whether all counters are zero and have status~\stini or \stzero.
It only accepts if this is the case.

Therefore, $\machine'$ accepts the same language as $\machine$. In addition, the only nondeterminism introduced in the construction are the guesses whether a counter is zero after a decrement. These  can be made history-determinstically since a resolver can keep track of the current values of the counters. 
Hence, $\machine'$ is history-deterministic if $\machine$ is.

\textbf{Step 2)}
Here, we show how to turn a \nrbcm{1} into an equivalent $\epsilon$-\hdpa while preserving history-determinism.
So, let us first introduce the latter type of automaton.

An $\epsilon$-\pa over an alphabet~$\Sigma$ and of dimension~$d$ is a tuple~$(\aut, C)$, where $\aut$ is a finite automaton over $(\Sigma \cup \set{\epsilon}) \times D$, where $\epsilon$ is the empty word, and where $C \subseteq \nats^d$ is semilinear.
Note that this definition does not coincide with the classical notion of $\epsilon$-\nfa, as $\epsilon$-transitions in $\aut$ are still labeled by a vector in the second component.
The language accepted by $(\aut, C)$ is $L(\aut, C) = \set{ \projsigma(w) \mid w \in L(\aut) \text{ and } \parikhimage(w) \in C }$.
Here, we treat $\projsigma$ as a homomorphism, which implies $\projsigma(\epsilon) = \epsilon$.
Note that $\epsilon$-\pa are not more expressive than \pa, since \pa are closed under homomorphisms: Hence, one can treat $\epsilon$ as a ordinary letter and then apply a homomorphism that deletes exactly that letter. 

Let $(\aut, C)$ be an $\epsilon$-\pa with $\aut = (Q, (\Sigma \cup \set{\epsilon}) \times D, q_\init, \Delta, F)$. 
Let $\Delta_\epsilon$ be the set of $\epsilon$-transitions and $\Delta_\Sigma$ be the set of $\Sigma$-transitions of $\aut$.
A {resolver} for $(\aut, C)$ is a function $\hstrat: \Sigma^+ \rightarrow \Delta_\epsilon^* \cdot \Delta_\Sigma \cdot \Delta_\epsilon^*$ such that the image~$\hstratiter(w)$ of $r$'s iteration (defined analogously to the case of Parikh automata without $\epsilon$-transitions in Section~\ref{sec_hd}) is an accepting run processing $w$, for every $w \in L(\aut, C)$.
An $\epsilon$-\pa $(\aut, C)$ is said to be {history-deterministic} (an $\epsilon$-\hdpa) if there exists a resolver for it.

Now, fix a \nrbcm{1}~$\machine$ in normal form, say with $k$ counters. We construct an equivalent $\epsilon$-\pa~$(\aut, C)$ and show that the transformation preserves history-determinism.

Intuitively, $\aut$ simulates $\machine$ and the semilinear set is used to check that all counters of $\machine$ are zero after the simulation has ended.
However, recall that a \pa can only increment its counter while an \rbcm can increment and decrement its counters. 
Hence, $\aut$ has two counters for each counter~$j$ of $\machine$, one (counter $2j$) counting the increments and the other one (counter $2j+1$) counting the decrements during the simulation.
 As $\machine$ does not test its counters during a run (due to the second condition of the normal form), $\aut$ only has to test whether the counters are equal to zero in the last configuration of a simulated run.
 This is the case if and only if the value of counter $2j$ of $(\aut,C)$ is equal to the value of counter $2j+1$, for every counter~$j$ of machine.
 This can easily be expressed by a semilinear set.
   
As $(\aut, C)$ simulates a run of $\machine$ and accepts if all the counters of $\machine$ are equal to zero at the end, they both recognise the same language. In addition, this transformation does not add any nondeterminism as the \pa simply simulates the run of the \nrbcm{1}. Therefore, $(\aut, C)$ is history-deterministic if $\machine$ is.

\textbf{Step 3)}
Finally, we need to show how to eliminate $\epsilon$-transitions in $\epsilon$-\pa while preserving history-determinism.
The construction is similar to one used by Klaedtke and Rueß~\cite[Theorem~20]{KRfull} to show that Parikh automata are closed under homomorphisms, which also requires the removal of $\epsilon$-transitions. 
For the sake of completeness, we present the construction here, as we need to show that it allows us to preserve history-determinism.

Let $(\aut, C)$ be an $\epsilon$-\pa of dimension~$d$ and let $\varphi(x_0, \dots, x_{d-1})$ be a Presburger formula defining $C$ (recall Proposition~\ref{prop_presburger}). 
    We construct an equivalent \pa~$(\aut', C')$ by replacing paths that are labeled by $\epsilon^* a \epsilon^*$ by a single $a$-transition.
    However, taking $\epsilon$-transitions has side-effects, i.e., the extended Parikh image is updated along them.
    This requires us to account for the effect of taking $\epsilon$-cycles in the new semilinear set~$C'$.
     To do so correctly, we need to keep track of (in the extended Parikh image) which $\epsilon$-cycles could have been traversed during the run of $(\aut, C)$ being simulated by $(\aut', C')$. 
     
A finite run infix~$\rho$ is called reduced if no state is repeated in it. 
Let $a \in \Sigma$. 
An $\epsilon$-$a$-$\epsilon$-path is a run infix of the form~$\rho_0 \tau \rho_1$ where $\rho_0$ and $\rho_1$ are (possibly empty) sequences of $\epsilon$-transitions and where $\tau$ is a transition labeled by $a$.
We say that $\rho_0 \tau \rho_1$ is reduced if $\rho_0$ and $\rho_1$ are reduced.
Note that there are only finitely many reduced $\epsilon$-$a$-$\epsilon$-paths for every $a$.
Finally, let $\set{K_d, K_{d+1}, \ldots, K_{m}}$ be the set of $\epsilon$-cycles of $(\aut, C)$, which is again finite as we only consider simple cycles. 

Consider an arbitrary, not necessarily reduced, $\epsilon$-$a$-$\epsilon$-path $\rho_0 \tau \rho_1$. 
If it is not reduced, then $\rho_0$ or $\rho_1$ contains an $\epsilon$-cycle~$K_j$.
Removing this cycle yields a shorter $\epsilon$-$a$-$\epsilon$-path starting and ending in the same state (but possibly with different extended Parikh image).
By repeating this operation (always removing the first cycle in case there are several ones), we turn every $\epsilon$-$a$-$\epsilon$-path~$\rho$ into a reduced $\epsilon$-$a$-$\epsilon$-path~$\red{\rho}$ that still has the same first state and the same last state as the original one, but possibly a different extended Parikh image.

Now, we define $\aut'$ as follows: it has the same states, the same initial state, the same set of accepting states as $\aut$, and the same alphabet~$\Sigma$. 
Furthermore, for every $\epsilon$-$a$-$\epsilon$-path
\[
\rho = (q_0, a_0, q_1)\cdots (q_{n-1}, a_{n-1}, q_n)
\]
in $\aut$, the automaton~$\aut'$ has the transition~$(q_0, (a, \vec{v})\cdot \cycleregister(\rho)),q_n)$, where $\vec{v}$ is the extended Parikh image of the reduced run~$\red{\rho}$ of $\rho$, and where 
\[
\cycleregister(\rho) = (b_d,b_{d+1}, \ldots, b_m)
\]
is such that $b_j$ is equal to $1$ if $K_j$ has been removed at least once from $\rho$ to obtain $\red{\rho}$.
Otherwise, $b_j$ is equal to $0$. 
Note that this results in finitely many transitions, as there are only finitely many reduced $\epsilon$-$a$-$\epsilon$-paths and only finitely many choices for the $\cycleregister(\rho)$.\footnote{Furthermore, it is not hard to see that the construction can be made effective, as one does not have to consider all $\epsilon$-$a$-$\epsilon$-paths: every transition in $\aut'$ is witnessed by an $\epsilon$-$a$-$\epsilon$-path of bounded length.}

We then define $C'$ by the Presburger formula \[\varphi'(x_0, \dots, x_{m}) = \exists y_d\exists y_{d+1} \cdots \exists y_m \bigwedge_{j = d}^m (x_j >0 \Leftrightarrow y_j>0) \wedge \varphi(
(x_0, \ldots, x_{d-1}) +\sum_{j=d}^m y_j\cdot \vec{v}_j
)
\] where $\vec{v}_j$ is the extended Parikh image of $K_j$. 
The idea is that if a run could have went through the cycle~$K_j$, then $y_j$ captures how many times it would have been used.
Note that this formula is not a first-order formula, due to the summation of the vectors, but can easily be turned into one by making the addition componentwise.
We refrain from doing so for the sake of readability.
Also, the multiplication in the formula is not an issue, as we only multiply with constants, which is just repeated addition.

Next, we show that $(\aut, C)$ and $(\aut',C')$ are equivalent.
Let us first remark that either both accept the empty word or both reject it. This is because they share the same initial state and same accepting states and because the zero vector is in $C$ if and only if it is in $C'$.
So, in the following we only have to consider nonempty words.

Let $w=a_0 \cdots a_{n-1} \in L(\aut, C)$ with $n>0$, say with accepting run $\rho$.
Then, $\rho$ can be decomposed into a sequence~$\rho_0 \cdots \rho_{n-1}$, where each $\rho_i$ is an $\epsilon$-$a_i$-$\epsilon$-path (the exact splits are irrelevant).
Consider one of these $\rho_i$, say it leads from $q$ to $q'$.
By construction, $(\aut', C')$ has an $a_i$-transition from $q$ to $q'$.
These transitions form a run~$\rho'$ of $(\aut',C')$ processing $w$. Note that both runs end in the same state, so $\rho'$ ends in an accepting state as well.

Now, $\cycleregister(\rho_i)$ encodes which cycles have been removed from $\rho_i$ to obtain $\red{\rho_i}$.
Hence, taking the first $d$ components of the extended Parikh image of $\rho'$ and adding to it the extended Parikh images of the removed cycles (with their appropriate multiplicity~$y_j > 0$) yields the Parikh image of $\rho$, which is in $C$.
Hence, the extended Parikh image of $\rho'$ is in $C'$.

Now, assume $a_0 \cdots a_{n-1} \in L(\aut', C')$ with $n>0$, say with accepting run $\rho' = \tau_0 \cdots \tau_{n-1}$.
The transition~$\tau_i$ is witnessed by some reduced $\epsilon$-$a_i$-$\epsilon$-path $\rho_i$.
These form a run~$\rho$ of $\aut$ processing $w$, which ends in an accepting state, as $\rho'$ does.
Furthermore, we know that the extended Parikh image of $\rho'$ is in $C'$, so there are values $y_j > 0$ for the $\epsilon$-cycles~$K_j$ such that $(x_0, \ldots, x_{d-1} +\sum_{j=d}^m y_j\cdot \vec{v}_j)$ is in $C$. 
Hence, we can add each $K_j$ $y_j$ times to $\rho$ in order to obtain another accepting run of $(\aut,C)$ that processes $w$ and has an extended Parikh image in $C$.

Finally, we need to show that $(\aut', C')$ is history-deterministic if $(\aut, C)$ is.
So, let $\hstrat\colon \Sigma^+ \rightarrow \Delta_\epsilon^* \Delta_\Sigma\Delta_\epsilon^*$ be a resolver for $(\aut, C)$.
We construct a resolver~$\hstrat' \colon \Sigma^+ \rightarrow \Delta'$ for $(\aut',C')$.
We can assume without loss of generality that $\hstrat(wa)$ is an $\epsilon$-$a$-$\epsilon$-path for every input~$wa$: if not, then this output cannot be part of an accepting run, which means we can redefine $\hstrat$ arbitrarily so that it satisfies our assumption.

Now, let $wa \in \Sigma^+$ and let $\rho = \hstrat(wa)$, which is an $\epsilon$-$a$-$\epsilon$-path, say from $q$ to $q'$.
The reduced $\epsilon$-$a$-$\epsilon$-path~$\red{\rho}$ leads from $q$ to $q'$ and induces a transition~$\tau$ of $\aut'$ processing $a$.
We define $\hstrat'(wa) = \tau$ for this transition.
Then, an induction on the length of an input word~$w$ shows that the run of $(\aut', C')$ constructed by $\hstrat'$ on an input $w$ is accepting if and only if the run of $(\aut, C)$ constructed by $\hstrat$ on $w$ is accepting.
\end{proof}

Finally, let us remark that \hdpa (or equivalently \nhdrbcm{1}) and deterministic \rbcm have incomparable expressiveness.
Indeed, the language $\equal$, which is not accepted by any \hdpa (see Theorem~\ref{thm_sep}), can easily be accepted by a deterministic \rbcm while the language~$\nondyck$ (see Example~\ref{example_nondyck}) is accepted by an \hdpa, but not by any deterministic \rbcm.
The reason is that these machines are closed under complement, but the complement of $\nondyck$ is not accepted by any \pa~\cite[Proposition~11]{CFM13}, and therefore also not by any \rbcm.

\section{Closure Properties}
\label{subsec_closure}

In this subsection, we study the closure properties of history-deterministic Parikh automata, i.e., we consider Boolean operations, concatenation and Kleene star, (inverse) homomorphic image, and commutative closure.
Let us begin by recalling the last three notions.

Fix some alphabet~$\Sigma = \set{a_0, a_1, \cdots, a_{d-1}}$ with fixed order~$a_0 < a_1 < \cdots < a_{d-1}$. 
The Parikh image of a word~$w \in \Sigma^*$ is the vector~$\Phi(w) = (\occ{w}{a_0},\occ{w}{a_1}, \ldots, \occ{w}{a_{d-1}})$ and the Parikh image of a language~$L \subseteq \Sigma^*$ is $\Phi(L) = \set{\Phi(w) \mid w \in L}$.
The commutative closure of $L$ is $\set{w \in \Sigma^* \mid \Phi(w) \in \Phi(L)}$.

Now, fix some alphabets~$\Sigma$ and $\Gamma$ and a homomorphism~$h \colon \Sigma^* \rightarrow \Gamma^*$. 
The homomorphic image of a language~$L \subseteq \Sigma^*$ is $h(L) = \set{h(w) \mid w \in L} \subseteq \Gamma^*$. 
Similarly, the inverse homomorphic image of a language~$L \subseteq \Gamma^*$ is $h^{-1}(L) = \set{w \in \Sigma^* \mid h(w) \in L}$.

\begin{theorem}
\label{thm_closure}
\hdpa are closed under union, intersection, inverse homomorphic images, and commutative closure, but not under complement, concatenation, Kleene star, and homomorphic image. 
\end{theorem}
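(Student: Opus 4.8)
The plan is to establish the positive closure results by direct constructions on resolvers, and the negative ones by combining the pumping lemma (Lemma~\ref{lemma_pumping}) with already-known non-expressibility facts. For \emph{union} and \emph{intersection}, given \hdpa $(\aut_1, C_1)$ and $(\aut_2, C_2)$ with resolvers $\hstrat_1, \hstrat_2$, I would take the product automaton over $\Sigma \times (D_1 \cdot D_2)$, track both counter vectors in disjoint blocks of coordinates, and set the semilinear constraint to $C_1 \cdot \nats^{d_2} \cap \nats^{d_1} \cdot C_2$ for intersection and $(C_1 \cdot \nats^{d_2}) \cup (\nats^{d_1} \cdot C_2)$ for union; Proposition~\ref{propsemilinearclosure} guarantees these are semilinear. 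The resolver for the product simply runs $\hstrat_1$ and $\hstrat_2$ in parallel on the shared input prefix; for intersection this is immediately a resolver, and for union one observes that if $w \in L(\aut_1,C_1) \cup L(\aut_2,C_2)$ then the parallel run is accepting in at least one component, hence accepting in the product under the union constraint. For \emph{inverse homomorphic image} under $h\colon\Sigma^*\to\Gamma^*$, I would replace each transition on letter $a\in\Sigma$ by a gadget that reads $a$ and whose counter update is the sum of the updates $\aut$ would make while reading $h(a)$, composing $\hstrat$ with $h^*$; some care is needed when $h(a)=\epsilon$, which forces $\epsilon$-transitions, but by Theorem~\ref{theorem_hdrbcmvshdpa} (or a direct $\epsilon$-elimination preserving history-determinism, which is routine here since the $\epsilon$-moves are deterministic within a gadget) this is harmless. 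For \emph{commutative closure}, I would use that the commutative closure of any \pa language is semilinear (its Parikh image is semilinear, a standard consequence of Parikh's theorem together with semilinearity of $C$), hence accepted by a \dpa, hence by an \hdpa.

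For the negative results, \emph{non-closure under complement} is immediate from the proof of Theorem~\ref{thm_sep}: the language $\nondyck$ is accepted by an \hdpa but its complement is not accepted by any \pa~\cite{CFM13}, a fortiori not by any \hdpa. For \emph{non-closure under concatenation} and \emph{Kleene star}, the idea is to exhibit \hdpa-recognizable (indeed \dpa-recognizable) languages whose concatenation, respectively star, equals $\equal = \set{a,b}^*\cdot\set{a^nb^n\mid n>0}$ or a similar language ruled out by the pumping lemma in Theorem~\ref{thm_sep}. Concretely, $\set{a,b}^*$ is regular and $\set{a^nb^n \mid n>0}$ is \dpa-recognizable, and their concatenation is exactly $\equal$, which Theorem~\ref{thm_sep} shows is not accepted by any \hdpa; for Kleene star one picks a language $K$ (e.g.\ built from $a^nb^n$-type blocks plus filler letters) whose iteration produces a language defeated by Lemma~\ref{lemma_pumping} by the same pumping argument used for $\equal$. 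For \emph{non-closure under homomorphic image}, I would take an \hdpa language over an enlarged alphabet that encodes the nondeterministic "stop counting" decision explicitly as a marker letter — making it effectively deterministic — and then erase the marker by a homomorphism, recovering $\equal$ (or $\nondyck$'s complement-style obstruction); since every \pa language is the homomorphic image of a \dpa language in this way, and \pa are strictly more expressive than \hdpa by Theorem~\ref{thm_sep}, some such image must escape the \hdpa class.

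The main obstacle I expect is the homomorphic-image non-closure: unlike the concatenation and star cases, where the target language $\equal$ is handed to us by Theorem~\ref{thm_sep}, here I must exhibit a \emph{specific} \hdpa (preferably a \dpa, to keep the argument clean) whose image under a concrete $h$ is provably outside the \hdpa class, and then invoke the pumping lemma on that image. The cleanest route is probably to reuse $\equal$ directly: note $\equal$ is a \pa language, every \pa language is $h(L)$ for some \dpa language $L$ (mark each transition's counter vector as a letter, constrain via a regular language, then project), and $\equal \notin \hdpa$; this reduces the whole item to the standard "\pa = homomorphic images of \dpa languages" lemma, which is folklore and easy to check. The concatenation and Kleene-star arguments also require a small amount of care to ensure the pumped words genuinely land inside/outside the relevant language, but these are variations on the already-completed computation in Theorem~\ref{thm_sep} rather than new difficulties.
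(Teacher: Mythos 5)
Your union construction has a genuine gap, and it is precisely the pitfall the paper's proof goes out of its way to avoid. Taking the product of $\aut_1$ and $\aut_2$ with concatenated vectors and the constraint $(C_1 \cdot \nats^{d_2}) \cup (\nats^{d_1} \cdot C_2)$ does not recognize the union: with the natural accepting set $(F_1 \times Q_2) \cup (Q_1 \times F_2)$, a word can be accepted because the run ends in a state accepting \emph{for $\aut_1$} while the extended Parikh image lies in $\nats^{d_1}\cdot C_2$, i.e., the state-acceptance of one component is paired with the counter-acceptance of the other, and the word need not belong to either language. (Concretely: if the $\aut_1$-run ends in $F_1$ with Parikh image outside $C_1$, and the $\aut_2$-run ends outside $F_2$ with Parikh image in $C_2$, the product accepts.) Your argument only checks one direction — that every word of the union is accepted — and says nothing about over-acceptance. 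The fix in the paper is to encode, inside the extended Parikh image itself (via a parity trick on two extra coordinates), which of the two simulated runs ends in an accepting state, so that the semilinear set can couple ``state accepting for component $i$'' with ``vector in $C_i$'' for the same $i$. Without some such coupling the construction is simply not a \pa for $L(\aut_1,C_1)\cup L(\aut_2,C_2)$, independently of any resolver considerations. Your intersection construction does not suffer from this because $F_1\times F_2$ and $C_1\cdot C_2$ are both conjunctive.

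The remainder is essentially sound and largely parallels the paper: intersection, inverse homomorphic images (the $h(a)=\epsilon$ case needs no $\epsilon$-transitions, since the empty transition sequence just yields a self-loop labelled $(a,\vec 0)$), commutative closure, and the complement/concatenation counterexamples all match. Two smaller remarks. For Kleene star your sketch is underspecified (``pick some $K$ \dots''); the paper instead cites the known fact that there is a \dpa whose star is not accepted by any \pa, which settles the \hdpa case a fortiori, and you would otherwise have to actually carry out a pumping argument. For homomorphic images you take a genuinely different and valid route: rather than citing a \dpa whose image is not even a \pa language, you observe that every \pa language is a homomorphic image of a \dpa language (determinize over the transition alphabet, then project), so $\equal\in\pa\setminus\hdpa$ already witnesses non-closure. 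This is a nice reduction that trades an external citation for the folklore ``\pa $=$ $h(\dpa)$'' lemma, which you should state and verify explicitly if you use it.
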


\begin{proof}
For $i \in \set{1,2}$, let $(\aut_i, C_i)$ be an \hdpa with $\aut_i = (Q_i, \Sigma \times D_i, q_\init^i, \Delta_i, F_i)$, say with resolver~$\hstrat_i$. 
Furthermore, let $d_i$ be the dimension of $D_i$.
As in the proof of Lemma~\ref{lemma_pumping}, we assume without loss generality that $\hstratiter_i(w)$ is a run of $\aut_i$ processing $w$, for every $w \in \Sigma^*$.

First, we consider closure under union. 
Intuitively, we use a product construction to simulate a run of $\aut_1$ and a run of $\aut_2$ simultaneously. 
A naive approach would be to take the classical product of the $\aut_i$ where we concatenate the vectors labelling the transitions, and then use $C = C_1 \cdot \nats^{d_2} \cup \nats^{d_1} \cdot C_2$.
However, this is not correct, as this automaton can accept if an accepting state of $\aut_1$ is reached while the extended Parikh image is in $\nats^{d_1} \cdot C_2$ or vice versa.
To overcome this issue, we reflect in the extended Parikh image  which one of the simulated runs ends in an accepting state. 
To simplify this, we assume, without loss of generality, that the initial states of both $\aut_i$ do not have incoming transitions. 

Now, we define the product~$\aut = (Q_1 \times Q_2, \Sigma \times D, (q_\init^1, q_\init^2), \Delta, Q_1 \times Q_2)$ where
\begin{itemize}
    \item $D = \set{1,2} \cdot \set{1,2} \cdot D_1 \cdot D_2$, and
    \item $\Delta = \set{ ( (q_1, q_2),(a, (f_1, f_2) \cdot\vec{v}_1 \cdot \vec{v}_2),(q_1', q_2') ) \mid ( q_i, (a, \vec{v}_i),q_i' ) \in \Delta_i \text{ for } i \in\set{1,2} }$, where $f_i$ for $i \in\set{1,2}$ is defined as follows.
    \begin{itemize}
        \item If $q_i$ is the initial state of $\aut_i$: If $q_i' \in F_i$ then $f_i = 2$, otherwise $f_i = 1$.
        \item If $q_i$ is not the initial state of $\aut_i$: If $q_i \in F_i \Leftrightarrow q_i' \in F_i$ then $f_i = 2$, otherwise $f_i = 1$.
    \end{itemize}
    Note that this satisfies the following invariant for every nonempty run of $\aut$: If $w \in (\Sigma\times D)^+$ is the input processed by the run, $\parikhimage(w)$ = $(f_1, f_2, \ldots)$, and the run ends in state~$(q_1,q_2) \in Q_1 \times Q_2$, then we have $f_i \bmod 2 = 0$ if and only if $q_i \in F_i$.
    Thus, it is the values~$f_i$ that reflect in the extended Parikh image which of the simulated runs end in an accepting state. 
    Note however, that this only holds for nonempty runs, as we need to initialize the reflection.
\end{itemize}

Let $C_\epsilon$ be the set containing the  $(2+d_1+d_2)$-dimensional zero vector if $\epsilon \in L(\aut_1, C_1) \cup L(\aut_2, C_2)$, and $C_\epsilon$ be the empty set otherwise.
Then, we define
\[
C = \set{n \in\nats \mid n \bmod 2 = 0 } \cdot \nats \cdot  C_1 \cdot \nats^{d_2}\,\cup
 \nats \cdot \set{n \in\nats \mid n \bmod 2 = 0 } \cdot  \nats^{d_1} \cdot  C_2 \cup C_\epsilon,
\]
which is semilinear due to Proposition~\ref{propsemilinearclosure}.
Then, we have $L(\aut, C) = L(\aut_1, C_1) \cup L(\aut_2, C_2)$ and the following function~$\hstrat$ is a resolver for $(\aut, C)$: let $\hstrat_i(w) = ( q_i, (a, \vec{v}_i),q_i' )$ and define
$\hstrat(w) = ( (q_1, q_2),(a, (f_1,f_2)\cdot\vec{v}_1 \cdot \vec{v}_2),(q_1', q_2') )$, where $(f_1,f_2)$ is defined as above.

Now, consider closure under intersection.
Here, we take the product~$\aut' = (Q_1 \times Q_2, \Sigma \times (D_1 \cdot D_2), (q_\init^1, q_\init^2), \Delta', F_1 \times F_2)$ with
\[\Delta' = \set{ ( (q_1, q_2),(a, \vec{v}_1 \cdot \vec{v}_2),(q_1', q_2') ) \mid ( q_i, (a, \vec{v}_i),q_i' ) \in \Delta_i \text{ for } i \in\set{1,2} }\]
and define $C' = C_1 \cdot C_2$, which is is semilinear due to Proposition~\ref{propsemilinearclosure}. 
Then, $L(\aut', C') = L(\aut_1, C_1) \cap L(\aut_2, C_2)$ and the following function~$\hstrat$ is a resolver for $(\aut, C)$: let $\hstrat_i(w) = ( q_i, (a, \vec{v}_i),q_i' )$ and define
$\hstrat(w) = ( (q_1, q_2),(a, \vec{v}_1 \cdot \vec{v}_2),(q_1', q_2') )$.

Now, consider closure under inverse homomorphic images.
Klaedtke and Rueß have shown that \dpa and \pa are effectively closed under inverse homomorphic images~\cite[Property~4]{KR}.
We follow their construction to show that \hdpa are closed under inverse homomorphic images.
In fact, a similar construction is also used to show that regular languages are also closed under inverse homomorphic images~\cite{Hopcroft}.

Consider a homomorphism $h: \Sigma^* \rightarrow \Gamma^*$.
Given an \hdpa $(\aut, C)$ with $\aut=(Q,\Gamma \times D, q_\init, \Delta, F)$, we construct another \hdpa $(\aut', C)$ with $\aut'=(Q,\Sigma \times D', q_\init, \Delta', F)$, such that $L(\aut', C) = h^{-1}(L(\aut, C))$.
Note that the set $C$ is the same in both automata.
Intuitively, $\aut'$ processes a letter~$(a, \vec{v})$ by simulating a sequence of transitions in $\aut$ processing $(b_1, \vec{v}_1), \dots ,(b_m, \vec{v}_m)$ with $h(a)=b_1 \cdots b_m$, and $\vec{v} = \sum_{i=1}^{m}\vec{v}_i$, for $m \in \nats$.
The \nfa~$\aut'$ has the same set $Q$ of states as $\aut$, and $\aut'$ has a transition from state~$p$ to state~$q$ with label~$(a,\vec{v}) \in \Sigma \times D'$ if and only if there is a sequence of transitions labeled with $(b_1, \vec{v}_1), \dots ,(b_m, \vec{v}_m) \in (\Gamma \times D)^*$ taking $\aut$ from state $p$ to state $q$, and $h(a)=b_1 \cdots b_m$, and $\vec{v} = \sum_{i=1}^{m}\vec{v}_i$.
It is easy to see that the set $D'$ thus obtained is finite, and that $(\aut', C)$ accepts the language $h^{-1}(L(\aut, C))$.

We are now left to prove that $(\aut', C)$ is an \hdpa.
For notational convenience, we lift the definition of $\parikhimage$ to the runs of a PA as $\parikhimage(\run) = \parikhimage(w)$, where $w$ is the word processed by $\run$.
Since $(\aut, C)$ is an \hdpa, there exists a resolver $\hstrat$ of $(\aut, C)$.
We define a resolver $\hstrat'$ for $(\aut', C)$ as follows.
Consider a word $w \in L(\aut', C)$, and let $x=h(w)$.
We have that $x \in L(\aut, C)$.
Let $w = a_1 \cdots a_m$ such that $a_j \in \Sigma$ for $1 \leqslant j \leqslant m$.
We have that $h(w) = h(a_1) \cdots h(a_m)$, and let $h(a_j) = x_{j,1} \cdots x_{j,k_j}$ such that each $x_{j,i} \in \Gamma$ for $1 \leqslant i \leqslant k_j$.
We define $\delta_{j,i} = r(h(a_1) \cdots h(a_{j-1}) x_{j,1} \cdots x_{j,i})$, for $1 \leqslant j \leqslant m$ and $1 \leqslant i \leqslant k_j$, to be the transition induced by $\hstrat$ after processing the prefix $h(a_1) \cdots h(a_{j-1}) x_{j,1} \cdots x_{j,i}$ of $h(w)$.
Further, let $\delta_{j,i} = (q_{j, i-1}, (x_{j,i}, \vec{v}_{j,i}), q_{j,i})$.
In particular, we have that $q_{j,0}$ is the state in $\aut$ before processing $x_{j,1}$ and $q_{j,k_j}$ is the state in $\aut$ after processing $x_{j,k_j}$.
Then, we define $\hstrat'(a_1 \cdots a_j) = (q_{j,0}, (a_j, \vec{v}_j), q_{j,k_j})$, where $\vec{v}_j=\sum_{i=1}^{k_j} \vec{v}_{j,i}$.

Recall that $w \in L(\aut', C)$ and $h(w)=x \in L(\aut, C)$.
If $q_f \in F$ is the state reached after processing $x$ in $(\aut, C)$ following the accepting run $\hstratiter(x)$, then the same state $q_f$ is reached after processing $w$ in $(\aut', C)$ in the run ${r'}^*(w)$.
Besides, the extended Parikh images corresponding to both $\hstratiter(x)$ and ${r'}^*(w)$ are the same, that is $\parikhimage(\hstratiter(x)) = \parikhimage({r'}^*(w))$.
Now since $\parikhimage(\hstratiter(x))$ belongs to $C$, we have that $\parikhimage({r'}^*(w))$ also belongs to $C$, and thus ${r'}^*(w)$ is an accepting run.
Hence $\hstrat'$ is indeed a resolver for $(\aut', C)$.

Now, let us consider commutative closure. Cadilhac et al.\ proved that the commutative closure of any \pa (and therefore that of any \hdpa) is accepted by some \dpa~\cite[Proposition~3.17]{CFM12}, and therefore also by some \hdpa.

The negative results follow from a combination of expressiveness results proven in Section~\ref{sec_expressiveness} and nonexpressiveness results in the literature:
\begin{itemize}
\item Complement: In the first part of the proof of Theorem \ref{thm_sep}, we show that the language~$\nondyck$ is accepted by an \hdpa, but its complement is known to not be accepted by any \pa~\cite[Proposition~11]{CFM13}.

\item Concatenation: The language~$\equal$ is the concatenation of the languages~$\set{a,b}^*$ and $\set{a^n b^n \mid n \in \nats}$, which are both accepted by a \dpa (and therefore also by \hdpa), but itself is not accepted by any \hdpa (see the proof of Theorem~\ref{thm_sep}). 

\item Kleene star: There is a \dpa (and therefore also an \hdpa) such that the Kleene star of its language is not accepted by any \pa~\cite[Proposition~3.17]{CFM12}, and therefore also by no \hdpa.

\item Homomorphic image: The language
\[
\set{a,b}^*\cdot \set{c a^{n-1}b^n \mid n>0}
\]
is accepted by the \dpa~$(\aut, C)$ with $\aut$ as in Figure~\ref{fig_autmorphism} and $C = \set{(n,n) \mid n > 0}$, and thus also by an \hdpa.
\begin{figure}[h]
    \centering
    \begin{tikzpicture}[ultra thick]
    
    \node[state] (1) at (0,0) {};
    \node[state] (2) at (3,0) {};
    \node[state,accepting] (3) at (6,0) {};
    
    \path[-stealth]
    (-1,0) edge (1)
    (1) edge[loop above] node[above,align=left] {$a,(0,0)$\\ $b,(0,0)$} ()
    (1) edge node[above] {$c,(1,0)$} (2)
    (2) edge[loop above] node[above] {$a,(1,0)$} ()
    (2) edge node[above] {$b,(0,1)$} (3)
    (3) edge[loop above] node[above] {$b,(0,1)$} (3);
    
    \end{tikzpicture}
    \caption{The automaton for the language $\set{a,b}^*\cdot \set{c a^{n-1}b^n \mid n>0}$.}
    \label{fig_autmorphism}
\end{figure}

However, its homomorphic image under the morphism~$h$ uniquely identified by $h(a) = h(c) = a$ and $h(b)= b$ is $\equal$, which is not accepted by any \hdpa, as shown in Theorem~\ref{thm_sep}.\qedhere
\end{itemize}
\end{proof}

Table~\ref{tab:closure} compares the closure properties of \hdpa with those of \dpa, \uca, and \pa. 
We do not compare to \rbcm, as only deterministic ones differ from Parikh automata and results on these are incomplete:
However, Ibarra proved closure under union, intersection, and complement~\cite[Lemma~3.1 and Lemma~3.2]{Ibarra78RBCM}.

\begingroup

\begin{table}[h]
    \caption{Closure properties of history-deterministic Parikh automata (in grey) and comparison to other types of Parikh automata (results for other types are from~\cite{CFM12,CFM13,KR}).}
    \label{tab:closure}
    \centering
\begin{tabular}{lcccccccc}
\toprule
 & union & intersection &  complement &  concate- & Kleene  &  homo- & inverse  & commutative \\
  & &  &  &nation&  star& morphism &  homomorphism &  closure\\
 \midrule
\dpa  & Y & Y & Y & N & N & N & Y & Y  \\
\rowcolor{lightgray!50}\hdpa & Y & Y & N & N & N & N & Y & Y \\
\uca & Y & Y & Y & N & N & N & ? & Y \\
\pa  & Y & Y & N & Y & N & Y & Y & Y \\

\bottomrule
\end{tabular}
\end{table}
\endgroup

\section{Decision Problems}
\label{subsec_decision}
Next, we study various decision problems for history-deterministic \pa. 
First, let us mention that nonemptiness and finiteness are decidable for \hdpa, as these problems are decidable for \pa~\cite[Property~6]{KR}\cite[Proposition~3.16]{CFM12}. 
In the following, we consider universality, inclusion, equivalence, regularity, and model checking.

Our undecidability proofs are reductions from nontermination problems for two-counter machines, using an encoding of two-counter machines by \dpa originally developed for \pa over infinite words~\cite{GJLZ22}. 

A two-counter machine~$\mach$ is a sequence
\[
(0:  \instr_0) (1:  \instr_1) \cdots (k-2:  \instr_{k-2})(k-1:  \stopp), 
\]
where the first element of a pair~$(\ell: \instr_\ell)$ is the line number and $\instr_\ell$ for $0 \leqslant\ell < k-1$ is an instruction of the form
\begin{itemize}
    \item $\inc{i}$ with $i \in\set{0,1}$, 
    \item $\dec{i}$ with $i \in\set{0,1}$, or
    \item $\ite{i}{\ell'}{\ell''}$ with $i \in\set{0,1}$ and $\ell',\ell'' \in \set{0, \ldots,k-1}$. 
\end{itemize}
A configuration of $\mach$ is of the form~$(\ell, c_0, c_1)$ with $\ell \in \set{0, \ldots, k-1}$ (the current line number) and $c_0, c_1\in\nats$ (the current contents of the counters). 
The initial configuration is~$(0,0,0)$ and the unique successor configuration of a configuration~$(\ell,  c_0, c_1)$ is defined as follows:
\begin{itemize}
    \item If $\instr_\ell = \inc{i}$, then the successor configuration is $(\ell +1, c_0', c_1')$ with $c_i' = c_i +1$ and $c_{1-i}' = c_{1-i}$.
    \item If $\instr_\ell = \dec{i}$, then the successor configuration is $(\ell +1, c_0', c_1')$ with $c_i' = \max\set{c_i -1,0}$ and $c_{1-i}' = c_{1-i}$.
    \item If $\instr_\ell = \ite{i}{\ell'}{\ell''}$ and $c_i = 0$, then the successor configuration is $(\ell', c_0, c_1)$.
    \item If $\instr_\ell = \ite{i}{\ell'}{\ell''}$ and $c_i > 0$, then the successor configuration is $(\ell'', c_0, c_1)$.
    \item If $\instr_\ell = \stopp$, then $(\ell, c_0, c_1)$ has no successor configuration.
\end{itemize}
The unique run of $\mach$ (starting in the initial configuration) is defined as expected.
It is either finite (line~$k-1$ is reached) or infinite (line~$k-1$ is never reached).
In the former case, we say that $\mach$ terminates.

\begin{proposition}[\cite{Minsky67}]
The following problem is undecidable: Given a two-counter machine~$\mach$, does $\mach$ terminate?
\end{proposition}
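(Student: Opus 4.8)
The plan is to reduce from the halting problem for Turing machines, which is undecidable. The reduction is obtained by composing a chain of simulations that progressively weaken the machine model while preserving its computational power, and then observing that termination of a two-counter machine is exactly non-halting run of the original Turing machine failing to occur. Concretely, from a deterministic Turing machine~$T$ (with, say, a one-way infinite tape) one builds a two-counter machine~$\mach$ in the sense defined above that reaches its \stopp\ line if and only if $T$ halts.

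First I would simulate $T$ by a machine equipped with two unbounded stacks over a finite alphabet: the tape contents to the left of the head are kept (bottom-to-top) in one stack, those to the right in the other, and head moves, reads and writes translate into pushes and pops. Second, I would encode each stack by a single counter: if the stack alphabet has $b$ symbols $1,\dots,b$, a stack $s_1\cdots s_m$ is represented by $\sum_i s_i (b+1)^{i-1}$; pushing multiplies the counter by $b+1$ and adds a digit, popping subtracts the top digit and divides by $b+1$, and reading the top symbol tests the counter modulo $b+1$. Third, I would cut the counter count down to two: a $k$-counter machine with counters $c_0,\dots,c_{k-1}$ is simulated by a machine with one counter holding $p_0^{c_0}\cdots p_{k-1}^{c_{k-1}}$ (distinct primes $p_i$) plus one scratch counter, so that $\inc{i}$ multiplies by $p_i$, $\dec{i}$ divides by $p_i$, and $\ite{i}{\ell'}{\ell''}$ tests divisibility by $p_i$. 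Each primitive used here — multiplication by a constant, integer division by a constant with remainder, and divisibility testing — is realized by repeatedly transferring a counter's value to a scratch counter and back, which is expressible with the three instruction types $\inc{\cdot}$, $\dec{\cdot}$ (with truncated decrement, exactly as in the paper's definition) and $\ite{\cdot}{\cdot}{\cdot}$.

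Composing Turing machine $\rightsquigarrow$ two-stack machine $\rightsquigarrow$ $k$-counter machine (for a small fixed~$k$, e.g.\ four) $\rightsquigarrow$ two-counter machine yields, computably, from~$T$ the desired~$\mach$; since the run of~$\mach$ tracks the run of~$T$ step by step, $\mach$ terminates exactly when $T$ halts. As the halting problem for Turing machines is undecidable, so is termination of two-counter machines. I expect the only real work to be bookkeeping rather than conceptual: one must check that every arithmetic subroutine can indeed be carried out with the few counters available and, crucially, that at the end of each subroutine the scratch counter is restored to zero so the next simulated step starts from a clean configuration; one must also verify that reaching (respectively never reaching) the \stopp\ line of~$\mach$ corresponds precisely to~$T$ halting (respectively diverging).
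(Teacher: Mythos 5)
Your argument is the classical chain Turing machine $\rightsquigarrow$ two-stack machine $\rightsquigarrow$ $k$-counter machine $\rightsquigarrow$ two-counter machine via prime exponent coding, which is precisely the proof in Minsky's book that the paper cites for this proposition, and it is correct; the only bookkeeping beyond what you already flag is initializing the Gödel counter to $1$ (since the paper's machines start at $(0,0,0)$) and hardcoding $T$'s input into the finite control so the known weakness of two-counter machines on counter-encoded inputs is irrelevant. No gap.
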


In the following, we assume without loss of generality that each two-counter machine satisfies the \emph{guarded-decrement property}:
Every decrement instruction~$(\ell: \dec{i})$ is preceded by $(\ell-1: \ite{i}{\ell+1}{\ell})$ and decrements are never the target of a goto instruction. 
As the decrement of a zero counter has no effect, one can modify each two-counter machine~$\mach$ into an $\mach'$ satisfying the guarded-decrement property such that $\mach$ terminates if and only if $\mach'$ terminates: One just adds the the required guard before every decrement instruction and changes each target of a goto instruction that is a decrement instruction to the preceding guard.

The guarded-decrement property implies that decrements are only executed if the corresponding counter is nonzero. 
Thus, the value of counter~$i$ after a finite sequence of executed instructions (starting with value zero in the counters) is equal to the number of executed increments of counter~$i$ minus the number of executed decrements of counter~$i$. 
Note that the number of executed increments and decrements can be tracked by a Parikh automaton.

Consider a finite or infinite word~$w = a_0 a_1 a_2 \cdots$ over the set~$\set{0,1,\ldots, k-1}$ of line numbers.
We now describe how to characterize whether $w$ is (a prefix of) the projection to the line numbers of the unique run of $\mach$ starting in the initial configuration.
This characterization is designed to be checkable by a Parikh automaton.
Note that $w$ only contains line numbers, but does not encode values of the counters. These will be kept track of by the Parikh automaton by counting the number of increment and decrement instructions in the input, as explained above (this explains the need for the guard-decrement property).
Formally, we say that $w$ contains an \emph{error} at position~$n < \size{w}-1$ if either $a_n = k-1$ (the instruction in line~$a_n$ is $\stopp$), or if one of the following two conditions is satisfied:
    \begin{enumerate}
        \item\label{casenonif} The instruction~$\instr_{a_n}$ in line~$a_{n}$ of $\mach$ is an increment or a decrement and $a_{n+1} \neq a_{n}+1$, i.e., the letter $a_{n+1}$ after $a_n$ is not equal to the line number $a_n + 1$, which it should be after an increment or decrement.
        
        \item\label{caseif} $\instr_{a_{n}}$ has the form~$\ite{i}{\ell}{\ell'}$, and one of the following cases holds: Either, we have
            \[
            \sum\nolimits_{j  \colon \instr_j = \inc{i}} \occ{a_0 \cdots a_{n}}{j} =
            \sum\nolimits_{j \colon \instr_j = \dec{i}} \occ{a_0 \cdots a_{n}}{j}
            \] and $a_{n+1} \neq \ell$,
            i.e., the number of increments of counter~$i$ is equal to the number of decrements of counter~$i$ in $a_0 \cdots a_{n}$ (i.e., the counter is zero) but the next line number in $w$ is not the target of the if-branch.
            Or, we have  
            \[
            \sum\nolimits_{j  \colon \instr_j = \inc{i}} \occ{a_0 \cdots a_{n}}{j} \neq 
            \sum\nolimits_{j \colon \instr_j = \dec{i}} \occ{a_0 \cdots a_{n}}{j} , 
            \]
            and $a_{n+1} \neq \ell'$,
            i.e., the number of increments of counter~$i$ is not equal to the number of decrements of counter~$i$ in $a_0 \cdots a_{n}$ (i.e., the counter is nonzero) but the next line number in $w$ is not the target of the else-branch.

    \end{enumerate}
Note that the definition of error (at position~$n$) refers to the number of increments and decrements in the prefix~$a_0 \cdots a_{n}$, which does not need to be error-free itself. 
However, if a sequence of line numbers does not have an error, then the guarded-decrement property yields the following result.

\begin{lemma}
\label{lemma-simulation}
Let $w \in \set{0,1,\ldots, k-1}^+$ with $a_0 = 0$. Then, $w$ has no errors at positions~$\set{0,1,\ldots, \size{w}-2}$ if and only if $w$ is a prefix of the projection to the line numbers of the run of $\mach$.
\end{lemma}

\begin{proof}
If $w$ has no errors at positions~$\set{0,1,\ldots, \size{w}-2}$, then an induction over $n\in\nats$ shows that $(a_n, c_0^n, c_1^n)$ with
\[c_i^n =\sum\nolimits_{j  \colon \instr_{j} = \inc{i}} \occ{a_0 \cdots a_{n-1}}{j} -
            \sum\nolimits_{j \colon \instr_{j} = \dec{i}} \occ{a_0 \cdots a_{n-1}}{j} \]
is the $n$-th configuration of the run of $\mach$.

On the other hand, projecting a prefix of the run of $\mach$ to the line numbers yields a word~$w$ without errors at positions~$\set{0,1,\ldots, \size{w}-2}$.
\end{proof}

The existence of an error can be captured by a Parikh automaton, leading to the undecidability of  the safe word problem for Parikh automata.
Let $(\aut, C)$ be a \pa accepting finite words over $\Sigma$. 
A \emph{safe} word of $(\aut, C)$ is an infinite word over $\Sigma$ such that each of its prefixes is in $L(\aut, C)$. 

Guha et al.\ have shown that the existence of a safe word for \pa is undecidable. As we rely on properties of the proof to prove further undecidability results, we give a proof sketch presenting all required details. For the omitted arguments, we refer to~\cite{GJLZ22}.

\begin{lemma}[{\protect{\cite[Lemma~21]{GJLZ22}}}]
\label{lemma_safeword}
The following problem is undecidable: Given a deterministic \pa, does it have a safe word? 
\end{lemma}

\begin{proof}[Proof Sketch]
The proof proceeds by a reduction from the nontermination problem for decrement-guarded two-counter machines.
Given such a machine~$\mach = (0:  \instr_0) \cdots (k-2:  \instr_{k-2})(k-1:  \stopp)$ let $\Sigma = \set{0, \ldots, k-1}$ be the set of its line numbers.
One can construct a deterministic \pa~$(\aut_\mach, C_\mach)$ that accepts a word~$w \in \Sigma^*$ if and only if $w = \epsilon$, $w = 0$, or if $\size{w} \geqslant 2$ and $w$ does not contain an error at position~$\size{w}-2$ (but might contain errors at earlier positions).
Intuitively, the automaton checks whether the second-to-last instruction is executed properly. 
The following is then a direct consequence of Lemma~\ref{lemma-simulation}: $(\aut_\mach, C_\mach)$ has a safe word if and only if $\mach$ does not terminate.

Intuitively, the deterministic \pa~$(\aut_\mach, C_\mach)$ keeps track of the occurrence of line numbers with increment and decrement instructions of each counter (using four dimensions) and two auxiliary dimensions to ensure that the two cases in Condition~\ref{caseif} of the error definition on Page~\pageref{caseif} are only checked when the second-to-last letter corresponds to a goto instruction. All further details of the construction can be found in~\cite{GJLZ22}.
\end{proof}

After these preparations, we start with the universality problem.

\begin{theorem}
\label{thm_univundec}
The following problem is undecidable: Given an \hdpa~$(\aut, C)$ over $\Sigma$, is $L(\aut, C) = \Sigma^*$?
\end{theorem}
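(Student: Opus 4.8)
The plan is to reduce from the undecidable halting (or non-halting) problem of two-counter Minsky machines, whose apparatus has already been set up in the preamble (the macros \texttt{INC}, \texttt{DEC}, \texttt{IF\dots GOTO\dots}, \texttt{STOP}). Given such a machine~$\machine$, I would encode its halting computation as a word over a suitable alphabet~$\Sigma$ containing letters for instructions and for counter-change markers, so that a word~$w$ correctly describes the (unique) halting run of~$\machine$ iff it satisfies a certain list of local consistency conditions together with global counting conditions (each \texttt{INC} on counter~$i$ is eventually matched by a \texttt{DEC} on counter~$i$, and counters never go negative). The target language is $L = \Sigma^* \setminus \{\, w \mid w \text{ faithfully encodes the halting run of } \machine \,\}$, i.e.\ the set of all words that are \emph{not} the encoding of a halting computation. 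Then $L = \Sigma^*$ iff $\machine$ does not halt, so deciding universality would decide the (non-)halting problem.

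The key steps, in order, are: (1) fix the encoding and enumerate the finitely many ways a word can \emph{fail} to be the halting-run encoding --- wrong first configuration, an illegal instruction transition at some step, a zero-test taken incorrectly, a counter driven negative, a mismatch between the number of increments and decrements of a counter, or the run not ending in \texttt{STOP}; (2) observe that each violation type is recognizable: the ``syntactic'' / local violations are regular, hence recognized by a \dfa (a fortiori by a \dpa, hence by an \hdpa); the ``counting'' violations (e.g.\ some prefix has more \texttt{DEC($X_i$)} than \texttt{INC($X_i$)}, or the totals differ) are of exactly the form handled in Example~\ref{example_nondyck} --- count two quantities and test their difference against a semilinear set --- and the nondeterministic choice of \emph{which} prefix witnesses the violation can be resolved on the fly (stop counting as soon as the imbalance appears), so these are recognized by \hdpa; (3) take the union of all these \hdpa, using closure of \hdpa under union (Theorem~\ref{thm_closure}), to get a single \hdpa for~$L$; (4) conclude that $L(\aut,C)=\Sigma^*$ iff~$\machine$ has no halting run, which is undecidable.

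The main obstacle I expect is step~(2) for the counting violations: I must check that detecting a counter going negative, and detecting an overall increment/decrement mismatch, really can be done by an \hdpa rather than merely a \pa --- the subtlety being that a resolver must commit to when to stop counting without seeing the rest of the word. As in Example~\ref{example_nondyck}, the trick is that ``counter~$i$ goes negative'' is witnessed by the \emph{first} prefix where decrements exceed increments, a condition depending only on the prefix read so far, so the resolver can safely stop the counter there; similarly the ``final totals differ'' check requires no nondeterminism at all (count to the end). A secondary point to get right is that the reduction must produce a genuine \hdpa together with an explicit resolver (or an argument that one exists), which follows by combining the per-violation resolvers along the union construction. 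Care is also needed so that the encoding makes ``being a faithful halting run'' a property whose complement decomposes into finitely many \hdpa-recognizable pieces; choosing the encoding so that each counter's history is a separate track of the word makes the counting arguments clean.
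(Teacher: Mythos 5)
Your proposal follows essentially the same strategy as the paper: reduce from nontermination of a two-counter machine, take the target language to be the set of all words that fail to encode a (halting) computation, decompose the failures into finitely many violation types, recognise each local violation with a \dfa and each counting violation with an \hdpa that freezes its counters at the first prefix witnessing the violation (exactly the Example~\ref{example_nondyck} trick, and resolvable for the same reason: the witnessing position depends only on the prefix read so far), and combine everything via closure of \hdpa under union. The paper's version uses the alphabet of line numbers only, defines ``errors'' in a run encoding, and splits the language into a regular part ($w_0\neq 0$ or no occurrence of the \stopp{} line) and an \hdpa part (an error occurs before the first \stopp), which is the same decomposition you describe.

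Two details deserve attention. First, your list of violations includes ``each \inc{i} is eventually matched by a \dec{i}'' / ``the totals differ'', which is \emph{not} a property of halting runs: a machine may halt with nonzero counters, so its genuine run encoding would exhibit this ``violation'' and land in your language $L$, making $L=\Sigma^*$ even though the machine halts and breaking the reduction. Either drop this condition entirely (it plays no role in verifying the simulation) or normalise the machine to empty its counters before stopping. Second, your plan needs the identity ``current counter value $=$ \#increments $-$ \#decrements in the prefix'' in order for the zero-tests to be checkable from the extended Parikh image; this fails in general because \dec{i} on a zero counter is a no-op rather than a step to $-1$. Your ``counter driven negative'' violation can be made to repair this (if no prefix has more decrements than increments, an induction shows the difference equals the true counter value), but you must say so explicitly; the paper instead normalises the machine to satisfy a \emph{guarded-decrement} property (every decrement is preceded by a zero-test guaranteeing the counter is nonzero), which makes the identity hold by construction. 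Also note that the zero-test violation needs slightly more bookkeeping than Example~\ref{example_nondyck}: the automaton must freeze the two counts at the goto position \emph{and} record which branch was taken, so that the semilinear set can test equality in one case and inequality in the other; the paper does this with a few extra dimensions read modulo $4$. With these repairs your argument goes through.
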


\begin{proof}
By reduction from the nontermination problem for decrement-guarded Minsky machines:
Given such a machine~$\mach$ with line numbers~$0,1,\ldots, k-1$ where $k-1$ is the stopping instruction, fix $\Sigma = \set{0,1,\ldots,k-1}$ and consider $L_\mach = L_\mach^0 \cup L_\mach^1$ with
\begin{align*}
L_\mach^0 {}=&{} \set{ w=a_0 \cdots a_m \in \Sigma^* \mid  \text{ $a_0 \neq 0$  or $\occ{w}{k-1} = 0$}} \text{ and }\\ 
L_\mach^1 {}=&{}\set{w=a_0 \cdots a_m \in \Sigma^* \mid \text{$a_0 = 0$, $\occ{w}{k-1} \geqslant 1$, and $w$ contains an error before the first $k-1$}}    
\end{align*}
We claim that $\mach$ does not terminate if and only if $L_\mach$ is universal.

First, assume $\mach$ does terminate. Then, projecting the terminating run to its line numbers yields a sequence~$w \in \Sigma^*$ starting with $0$, ending with $k-1$, and with no errors. This word is not in $L_\mach$, so $L_\mach$ is not universal.

Now, assume $\mach$ does not terminate and consider some $w \in \Sigma^*$. If $w$ does not start with $0$ or contains no $k-1$, then it is in $L_\mach$. Thus, now consider the case where $w$ starts with $0$ and contains a $k-1$. Towards a contradiction, assume that the sequence up to the first $k-1$ does not contain an error.
Then, Lemma~\ref{lemma-simulation} implies that $\mach$ terminates, i.e., we have derived the desired contradiction. 
Hence, $w$ contains an error before the first $k-1$, i.e., $w$ is in $L_\mach$. 
So, $L_\mach$ is indeed universal.

So, it remains to show that $L_\mach$ is accepted by an \hdpa which can be effectively constructed from $\mach$.
As \hdpa are closed under union and generalize finite automata, we only have to consider $L_\mach^1$.  In the proof of Lemma~\ref{lemma_safeword}, we have constructed a \dpa accepting a word if it does not have an error at the second-to-last position.
Thus, using complementation and intersection with a regular language, we obtain a \dpa that accepts a word if there is an error at the second-to-last position.

But the previous automaton only checks whether the error occurs at the second-to-last position.
To find an error at any position, we use history-determinism to guess the prefix with the error and then stop updating the counters and the state, so that their current values can be checked at the end of the run (cf.\ Example~\ref{example_nondyck}).
This automaton can easily be turned into one that additionally checks that the first letter is a $0$ and that there is at least one $k-1$ in the input, but not before the error.
Thus, an \hdpa accepting $L_\mach^1$ can be effectively constructed from $\mach$. 
\end{proof}

Note that there is no \dpa accepting $L_\mach$, as universality is decidable for \dpa.
Thus, the guessing described in the proof above is not avoidable.

The next results follow more or less immediately from the undecidability of universality.

\begin{theorem}
\label{thm_furtherundecidability}
The following problems are undecidable:
\begin{enumerate}
        \item Given two \hdpa~$(\aut_0, C_0)$ and $(\aut_1, C_1)$, is $L(\aut_0, C_0) \subseteq L(\aut_1, C_1)$?
        \item Given two \hdpa~$(\aut_0, C_0)$ and $(\aut_1, C_1)$, is $L(\aut_0, C_0) = L(\aut_1, C_1)$?
        \item Given an \hdpa~$(\aut, C)$, is $L(\aut, C)$ regular?
        \item Given an \hdpa~$(\aut, C)$, is $L(\aut, C)$ context-free?
\end{enumerate}
\end{theorem}

\begin{proof}
Undecidability of inclusion and equivalence follows immediately from Theorem~\ref{thm_univundec} (and even holds if the input~$(\aut_0, C_0)$ is replaced by a \dfa accepting $\Sigma^*$), so let us consider the regularity problem.

Let $L_\mach' = L_\mach^0 \cup  L_\mach^1 \cup L_\mach^2$ where $L_\mach^0$ and $L_\mach^1$ are defined as in the proof of Theorem~\ref{thm_univundec} and with
\begin{align*}
L_\mach^2 = \{w=a_0 \cdots a_m \in \Sigma^* \mid&{} \text{ $a_0 = 0$ and $\occ{w}{k-1} \geqslant 1$ and the suffix of $w$ after the first occurrence} \\ 
&\text{of the letter~$k-1$ is \emph{not} of the form~$0^n1^n$ for some $n > 0$}\}.
\end{align*}
Note that we assume without loss of generality $k - 1 \geqslant1$, i.e., $\mach$ has at least one non-stopping instruction. 

A \dpa for $L_\mach^2$ is straightforward to construct.
Hence, given $\mach$, one can effectively construct an \hdpa for $L_\mach'$ as \hdpa are closed under union.
Now, we claim that $L_\mach'$ is regular if and only if $\mach$ does not terminate. 

First, assume that $\mach$ does not terminate. 
Then as shown in the proof of Theorem~\ref{thm_univundec}, $L_\mach \subseteq L_\mach'$ is equal to $\Sigma^*$. 
Hence, $L_\mach' = \Sigma^*$ as well, which is regular.

Now, assume that $\mach$  terminates and let $w_t \in \Sigma^*$ be the projection of the run of $\mach$ to its line numbers.
Note that $w_t$ starts with $0$, ends with $k-1$, and does not contain an error.
We show that $L_\mach' = \Sigma^* \setminus w_t \cdot \set{0^n1^n \mid n > 0}$, which is not regular.

First, if $w$ is in $w_t \cdot \set{0^n1^n \mid n > 0}$, then it is not in $L_\mach'$.
Now, if $w$ is not in $L_\mach'$, then it has to start with $0$ and has to contain a $k-1$ (otherwise, $w$ would be in $L_\mach^0 \subseteq L_\mach'$).
Furthermore, $w$ cannot contain an error before the first $k-1$ (otherwise, $w$ would be in $L_\mach^1 \subseteq L_\mach'$).
Thus, $w_t$ is a prefix of $w$.
Finally, the suffix of $w$ after the first $k-1$ has to be of the form $0^n1^n$ for some $n > 0$ (otherwise, $w$ would be in $L_\mach^2 \subseteq L_\mach'$). 
Altogether, $w$ is in $w_t \cdot \set{0^n1^n \mid n > 0}$.

Finally, using $0^n1^n2^n$ instead of $0^n1^n$ in the definition of $L_\mach^2$ in the previous proof, which is not context-free, allows us to show that the context-freeness problem is also undecidable: Given an \hdpa~$(\aut, C)$, is $L(\aut, C)$ context-free?
\end{proof}

Next, let us introduce the model-checking problem (for safety properties): 
A transition system~$\tsys = (V, v_\init, E, \lambda)$ consists of a finite set~$V$ of vertices containing the initial state~$v_\init \in V$, a transition relation~$E \subseteq V \times V$, and a labeling function~$\lambda \colon V\rightarrow \Sigma$ for some alphabet~$\Sigma$.
A (finite and initial) path in $\tsys$ is a sequence~$v_0v_1 \cdots v_n \in V^+$ such that $v_0 = v_\init$ and $(v_i,v_{i+1}) \in E$ for all $0 \leqslant i <n$.
Infinite (initial) paths are defined analogously.
The trace  of a path~$v_0v_1 \cdots v_n$ is $\lambda(v_0)\lambda(v_1) \cdots \lambda(v_n) \in \Sigma^+$.
We denote the set of traces of paths of $\tsys$ by $\traces(\tsys)$.

The model-checking problem for \hdpa asks, given an \hdpa~$\aut$ and a transition system~$\tsys$, whether $\traces(\tsys) \cap L(\aut) = \emptyset$?
Note that the automaton specifies the set of \emph{bad prefixes}, i.e., $\tsys$ satisfies the specification encoded by $\aut$ if no trace of $\tsys$ is in $L(\aut)$.

As the model-checking problem for \pa is decidable, so is the model-checking problem for \hdpa, which follows from the fact that a transition system~$\tsys$ can be turned into an \nfa and hence into a \pa~$\aut_\tsys$ with $L(\aut_\tsys) = \traces(\tsys)$. 
Then, closure under intersection and decidability of nonemptiness yields the desired result.

\begin{theorem}
\label{thm_mc}
The model-checking problem for \hdpa is decidable.
\end{theorem}

Let us conclude by mentioning that the dual problem, i.e., given a transition system~$\tsys$ and an \hdpa~$\aut$, does every infinite path of $\tsys$ have a prefix whose trace is in $L(\aut)$, is undecidable. 
This follows from recent results on Parikh automata over infinite words~\cite[Theorem~17]{GJLZ22}, i.e., that model checking for Parikh automata with reachability conditions is undecidable. Such automata are syntactically equal to Parikh automata over finite words and an (infinite) run is accepting if it has a prefix ending in an accepting state whose extended Parikh image is in the semilinear set of the automaton. 

Table~\ref{tab:dec} compares the decidability of standard problems for \hdpa with those of \dpa, \uca, and \pa. 

\begingroup

\begin{table}[h]
    \caption{Decision problems for history-deterministic Parikh automata (in grey) and comparison to other types of Parikh automata (results are from~\cite{CFM12,CFM13,KR}).}
    \label{tab:dec}
    \centering
\begin{tabular}{lccccccc}
\toprule
 & Nonemptiness & Finiteness &  Universality & Inclusion & Equivalence &  Regularity & Model Checking\\
 \midrule
\dpa    & Y & Y & Y & Y & Y & Y & Y \\
\rowcolor{lightgray!50}
\hdpa   & Y & Y & N & N & N & N & Y \\
\uca    & Y & Y & Y & Y & Y & Y & Y \\
\pa     & Y & Y & N & N & N & N & Y \\

\bottomrule
\end{tabular}
\end{table}
\endgroup

Finally, we consider the problems of deciding whether a Parikh automaton is history-deterministic and whether it is equivalent to some \hdpa. 
Both of our proofs follow arguments developed for similar results for history-deterministic pushdown automata~\cite[Theorem~6.1]{LZ22}.

\begin{theorem}
\label{thm_hdnessundec}
The following problems are undecidable:
\begin{enumerate}
    \item\label{dec-hd} Given a \pa~$(\aut, C)$, is it history-deterministic?

    \item\label{dec-hdeq} Given a \pa~$(\aut, C)$, is it equivalent to some \hdpa?
\end{enumerate}
\end{theorem}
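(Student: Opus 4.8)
The plan is to reduce from an undecidable problem about Parikh automata (or, following the cited work on history-deterministic pushdown automata~\cite{LZ22}, to encode Turing machine computations). The natural target to reduce from is the universality problem for \pa, which is undecidable, or equivalently the emptiness-of-complement problem; since \hdpa are not closed under complement and universality for \hdpa is already undecidable (Theorem~\ref{thm_univundec}), the trick is to construct, from an arbitrary \pa~$(\aut,C)$, a new \pa~$(\autb,C')$ that is history-deterministic if and only if $(\aut,C)$ has some property we cannot decide. The standard gadget, as in~\cite{LZ22}, is to take a fresh letter~$\#$ and build an automaton on $\#\cdot\Sigma^*$ that, after reading $\#$, must make an irrevocable nondeterministic choice between two ``modes'': one mode accepts a fixed easy language (say $\Sigma^*$, or $L(\aut,C)$ computed deterministically where possible), and the other mode accepts $\overline{L(\aut,C)}$ via a guessed run.

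For part~\ref{dec-hd}, I would design $(\autb,C')$ so that its language is always the same fixed regular (hence history-deterministic-realizable) language, but the \emph{automaton structure} forces a resolver to commit, right after the first letter, to a branch that can only succeed on all continuations if $L(\aut,C)$ is co-finite/universal or some similarly undecidable condition holds. Concretely: let $\autb$ read $\#$, then branch to either a component~$\mathcal{C}_1$ recognizing $L(\aut,C)$ or a component~$\mathcal{C}_2$ recognizing a language~$K$ chosen so that $L(\aut,C)\cup K=\Sigma^*$ always (so $L(\autb,C')=\#\Sigma^*$ is regular) but $L(\aut,C)\cap K$ or the ability to ``know early'' which branch works depends on deciding universality of $L(\aut,C)$. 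A resolver must, after seeing only $\#$, pick the branch; it succeeds on every word in $\#\Sigma^*$ iff one of the two branches alone already captures all of $\Sigma^*$, i.e.\ iff $L(\aut,C)=\Sigma^*$ or $K=\Sigma^*$. Choosing $K$ to be a fixed non-universal language makes $(\autb,C')$ history-deterministic iff $L(\aut,C)=\Sigma^*$, reducing the undecidable universality problem for \pa to history-determinism.

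For part~\ref{dec-hdeq}, the argument is similar but one level up: we want $(\autb,C')$ to be equivalent to \emph{some} \hdpa iff the undecidable condition holds. Here I would exploit that $\equal$ and related languages are not \hdpa-recognizable (Theorem~\ref{thm_sep}) to build an automaton whose language is $\equal$ (hence not \hdpa-equivalent) unless the Turing computation encoded halts, in which case the language collapses to something \hdpa-recognizable (e.g.\ $\Sigma^*$ or a regular language). That is, from a Turing machine~$M$ one builds a \pa whose language is $\Sigma^*$ if $M$ halts and is (a shifted copy of) $\equal$ if $M$ does not halt; since $\equal$ is not equivalent to any \hdpa but $\Sigma^*$ is, equivalence-to-an-\hdpa is undecidable. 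The \pa can check, via its counters, that a candidate input either certifies a halting computation of~$M$ (accept everything) or falls into the $\equal$ pattern.

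The main obstacle I anticipate is the bookkeeping that makes the ``early commitment'' genuinely forced while keeping $L(\autb,C')$ under control --- in particular, ensuring that $\autb$'s two branches cannot be interleaved or re-entered (so the resolver truly faces a one-shot choice), and that the semilinear constraint~$C'$ correctly combines the constraints of the two components over disjoint coordinate blocks (using Proposition~\ref{propsemilinearclosure} to take the appropriate padded union). A secondary subtlety is that resolvers for \hdpa read only the input word, not counter values, so one must verify that no clever resolver can defer the decision by exploiting the fact that it sees the full letter sequence incrementally; this is handled by making the distinguishing continuations share arbitrarily long common prefixes, exactly as in the pumping-style arguments and in~\cite{LZ22}. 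Once the gadget is fixed, correctness in both directions is a routine case analysis.
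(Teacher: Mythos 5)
Your plan for Part~\ref{dec-hdeq} is essentially the paper's: combine the given automaton with a language like $\equal$ that is \pa- but not \hdpa-recognizable, so that the resulting language is regular (hence \hdpa-equivalent) when the undecidable condition holds and otherwise ``contains'' $\equal$; the paper reduces from \pa universality rather than Turing-machine halting and supplies the missing extraction step (hardcoding a non-accepted word $u$ into the state space and using closure under union plus the fact that $\equal\setminus P$ is finite) to show that the non-universal case really yields a non-\hdpa-recognizable language rather than merely a superset of $\equal$. That part of your sketch is salvageable.

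Part~\ref{dec-hd}, however, has a genuine gap, and it is exactly the point where ``is the \emph{language} \hdpa-recognizable'' and ``is this \emph{automaton} history-deterministic'' come apart. Your gadget branches after $\#$ into a copy of the arbitrary \pa~$(\aut,C)$ and a component for a fixed $K$, and you argue that a resolver succeeds iff one branch alone covers $\Sigma^*$. But a copy of an arbitrary nondeterministic \pa embedded as a branch carries its own internal nondeterminism, which the resolver must also resolve; a \emph{universal} \pa need not be history-deterministic, so the direction ``$L(\aut,C)=\Sigma^*$ implies $(\autb,C')$ is history-deterministic'' fails. (For the equivalence problem you may replace the automaton by a \dfa for $\Sigma^*$; for the history-determinism problem you are stuck with the automaton you built.) There is also an internal inconsistency: with a fixed non-universal $K$ you cannot have $L(\aut,C)\cup K=\Sigma^*$ ``always,'' so the constructed language is not invariant under the reduction as claimed. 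The paper's actual proof of Part~\ref{dec-hd} sidesteps both problems by reducing from a different problem --- the safe-word problem for \emph{length-complete deterministic} \pa (Lemma~\ref{lemma_safeword}, itself obtained from two-counter machine nontermination) --- and then collapsing the alphabet to the single letter $\#$. The collapsed automaton always accepts $\set{\#}^*$, all of its nondeterminism comes from the letter identification, and a resolver over a unary alphabet is forced to build one infinite chain of runs whose every prefix is accepting, which exists iff the original \dpa has a safe word. To repair your argument you would need either this determinism-plus-alphabet-collapse idea or some other mechanism guaranteeing that, in the ``yes'' instances, every residual nondeterministic choice in your construction is resolvable online.
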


\begin{proof}
\ref{dec-hd}.) 
We say that a \pa~$(\aut, C)$ is \emph{length-complete} if for every $n \in \nats$, there is some word of length~$n$ in $L(\aut, C)$. 
Furthermore, let $h \colon \Sigma^* \rightarrow \set{\#}^*$ be the homomorphism induced by mapping each $a \in \Sigma$ to $\#$.
If $(\aut, C)$ is length-complete, then $\set{h(w) \mid w \in L(\aut, C) } = \set{\#}^*$.
Note that the \dpa's~$(\aut_\mach, C_\mach)$ we have constructed in the proof of Lemma~\ref{lemma_safeword} are length-complete.
Thus, we have actually shown that the following problem is undecidable: Given a length-complete \dpa~$(\aut, C)$, does it have a safe word?

We reduce from the safe word problem for length-complete \dpa.
Given such a \dpa~$(\aut, C)$ with $L(\aut, C) \subseteq \Sigma^*$, let $(\aut', C)$ be the \pa obtained from $(\aut, C)$ by replacing each letter~$a \in \Sigma$ by $\#$.
Note that $(\aut', C)$ accepts $\set{h(w) \mid w \in L(\aut, C) }$, which is equal to $\set{\#}^*$ due to the length-completeness of $(\aut, C)$.
We show that $(\aut', C)$ is history-deterministic if and only if $(\aut, C)$ has a safe word, which completes our proof. 

So, let $(\aut', C)$ be history-deterministic, i.e., it has a resolver~$\hstrat \colon \set{\#}^* \rightarrow \Delta_{\aut'}$, where $\Delta_{\aut'}$ is the set of transitions of $\aut'$. 
Note that $\hstratiter(\#^n)$ is a prefix of $\hstratiter(\#^{n'})$ whenever $n' \geqslant n$.
Hence, there is a unique infinite sequence
\[(q_0, (\#, \vec{v}_0), q_1) (q_1, (\#, \vec{v}_1), q_2)(q_2, (\#, \vec{v}_2), q_3) \cdots \] such that $\hstratiter(\#^n) = (q_0, (\#, \vec{v}_0), q_1)\cdots (q_{n-1}, (\#, \vec{v}_{n-1}), q_n)$.
Each of the $\hstratiter(\#^n)$ is an accepting run of $\aut'$ processing $\#^n$, as $\#^n$ is accepted by $(\aut',C)$ and $\hstrat$ is a resolver.

Now, as each transition of $\aut'$ is obtained from a transition of $\aut$, there is some infinite word~$a_0 a_1 a_2 \cdots $ over $\Sigma$ such that $(q_0, (a_0, \vec{v}_0), q_1)\cdots (q_{n-1}, (a_{n-1}, \vec{v}_{n-1}), q_n)$ is an accepting run of $\aut$ for every $n \geqslant 1$. 
Thus, we have $a_0 \cdots a_{n-1} \in L(\aut, C)$ for every $n \geqslant 1$. 
Lastly, we have $\epsilon \in L(\aut, C)$, as $\epsilon \in L(\aut', C)$.
Altogether, $a_0 a_1 a_2 \cdots$ is indeed a safe word for $(\aut, C)$.

Now, assume $(\aut, C)$ has a safe word, say $a_0a_1a_2 \cdots$.
Thus, $\epsilon$ is in $L(\aut, C)$ which implies that the initial state of $\aut$ is accepting and the zero vector is in $C$.
Further, as the automaton~$(\aut, C)$ is deterministic, there is a sequence~$(q_0, (a_0, \vec{v}_0), q_1) (q_1, (a_1, \vec{v}_1), q_2)(q_2, (a_2, \vec{v}_2), q_3) \cdots $
such that $(q_0, (a_0, \vec{v}_0), q_1)\cdots (q_{n-1}, (a_{n-1}, \vec{v}_{n-1}), q_n)$ is an accepting run of $\aut$ for all $n\in\nats$. 

By construction of $(\aut', C)$, $(q_0, (\#, \vec{v}_0), q_1) \cdots (q_{n-1}, (\#, \vec{v}_{n-1}), q_n)$ is an accepting run of $\aut'$ on $\#^n$, for each $n \geqslant 1$.
Now, we define $\hstrat(\#^n) = (q_{n-1}, (\#, \vec{v}_{n-1}), q_n)$ for each $n \geqslant 1$. 
Then, $\hstratiter(\#^n) = (q_0, (\#, \vec{v}_0), q_1) \cdots (q_{n-1}, (\#, \vec{v}_{n-1}), q_n)$, i.e., it is an accepting run of $(\aut', C)$ processing $\#^n$. 
Hence, $\hstrat$ is a resolver for $(\aut', C)$, i.e., $(\aut', C)$ is history-deterministic.

\ref{dec-hdeq}.)
We reduce from the universality problem for \pa, which is undecidable~\cite[Property~7]{KR}, via a series of automata transformations. 

First, given a \pa~$(\aut, C)$ processing words over~$\Sigma$, we construct a \pa~$(\aut', C')$ with
\[
L(\aut', C')  = L(\aut, C) \cup L(\aut, C)\cdot \# \cdot (\Sigma_\#)^*
\]
where $\Sigma_\# = \Sigma\cup\set{\#}$ for some fresh~$\# \notin \Sigma$. 
Note that $(\aut, C)$ is universal, i.e., $L(\aut, C) = \Sigma^*$, if and only if $(\aut', C')$ is universal, i.e., $L(\aut', C') = (\Sigma_\#)^*$. 

Recall that the language~$E \subseteq \set{a,b}^*$ from the proof of Theorem~\ref{thm_sep} is accepted by some \pa, but not by any \hdpa.
Thus, we can construct a \pa~$(\aut_\vee, C_\vee)$ with 
\[
L(\aut_\vee, C_\vee) = \set{ w \in (\Sigma_\#\times \set{a,b})^* \mid \pi_1(w) \in L(\aut', C') \text{ or } \pi_2(w) \in E }.
\]
Here, $\pi_i$ is the projection to the $i$-th component, i.e., the homomorphism defined by $\pi_i(a_1, a_2) = a_i$.
We show that $L(\aut, C)$ is universal if and only if $L(\aut_\vee,C_\vee)$ is accepted by some \hdpa.

First, assume that $L(\aut, C)$ is universal. Then, $L(\aut', C')$ and $L(\aut_\vee, C_\vee)$ are universal as well. Hence, $L(\aut_\vee, C_\vee)$ is accepted by some DFA, and therefore also by an \hdpa.

Now, assume that $L(\aut, C)$ is not universal, i.e., there is some $u = a_0 \cdots a_{n-1} \notin L(\aut, C)$.
Towards a contradiction, we assume there is an \hdpa~$(\aut_a, C_a)$ with $L(\aut_a, C_a) = L(\aut_\vee,C_\vee)$, say with resolver~$\hstrat_a$.
We show that the language
\[P =  \left\{ w \in \set{a,b}^{\geqslant \size{u}} \:\middle|\: \binom{u\#^{\size{w} - \size{u}}}{w} \in L(\aut_\vee, C_\vee) \right\} \]
is also accepted by some \hdpa~$(\aut_P, C_a)$.
Note that $P$ contains exactly the words~$w \in\set{a,b}^{\geqslant\size{u}} \cap \equal$, as $u\#^{\size{w} - \size{u}}$ is not in $L(\aut',C')$ for any $w$.
The fact that $P$ is accepted by some \hdpa yields the desired contradiction:
We have $P \subseteq \equal$ and $\equal \setminus P$ is finite.
Hence, $\equal \setminus P$ is accepted by some \dfa and therefore also by some \hdpa.
Thus, due to closure of \hdpa under union, if $P$ is accepted by some \hdpa, so is $\equal$.
This contradicts that $\equal$ is not accepted by any \hdpa (see Theorem~\ref{thm_sep}).

So, let $\aut_a = (Q, (\Sigma_\#\times \set{a,b})\times D, q_\init, \Delta, F)$. 
We define $\aut_P = (Q \times \set{0, 1, \dots, \size{u}}, \set{a,b}\times D, (q_\init, 0), \Delta', F \times \set{\size{u}})$ with
\begin{align*}
\Delta' ={}&{} \left\{ 
((q, j), (a,\vec{v}) , (q', j+1)) \:\middle|\: (q, (\binom{a_j}{a},\vec{v}), q') \in \Delta \text{ and } j<\size{u} \right\} \,\cup\\
{}&{}\left\{((q, \size{u}), (a,\vec{v}) , (q', \size{u})) \:\middle|\: (q, (\binom{\#}{a},\vec{v}), q') \in \Delta \right\}
\end{align*}
Intuitively, to obtain $\aut_P$, we hardcode $u$ into the state space in order to restrict the runs of $\aut_a$ to those that process words of the form~$u\#^*$ in the first component (which is projected away). 
Hence, $(\aut_P, C_a)$ does indeed accept $P$.

Now it remains to observe that $(\aut_P, C_a)$ has a resolver:
Turning $(\aut_a, C_a)$ into $(\aut_P, C_a)$ does not introduce nondeterminism, i.e., a resolver~$\hstrat$ for $(\aut_a,C_a)$ can easily be turned into one for $(\aut_P,C_a)$.
So, $(\aut_P,C_a)$ is an \hdpa.
\end{proof}

\section{Resolvers}
\label{sec_resolvers}

So far, a resolver for an \hdpa is a function~$\hstrat\colon \Sigma^+ \rightarrow \Delta$.
This definition does not put any restrictions on the complexity of the function~$\hstrat$. 
But how complex is it to resolve the nondeterminism in an \hdpa?
For history-deterministic finite automata, finite automata with output suffice to resolve their nondeterminism~\cite[Lemma~1]{BKKS13} while pushdown automata with output are not sufficient to resolve the nondeterminism in history-deterministic pushdown automata~\cite[Theorem~7.3]{LZ22}\cite[Lemma~10.2]{GJLZ24}.

It is straightforward to show that every \hdpa has a positional resolver (i.e., one whose decision is only based on the last state of the run constructed thus far and on the extended Parikh image induced by this run) and that \hdpa that have finite-state resolvers (say, implemented by a Mealy machine) can be determinized by taking the product of the \hdpa and the Mealy machine.
In fact, both proofs are simple adaptions of the corresponding ones for history-deterministic pushdown automata~\cite[Lemma 10.1 and Remark~10.3]{GJLZ24}. 
Hence, we focus on the question whether the nondeterminism in an \hdpa can be resolved by functions implemented by more expressive formalisms, i.e., we are looking for deterministic machines that implement resolvers for \hdpa. 
A natural question is then whether \dpa are sufficient for this task, i.e., the question whether the nondeterminism in an \hdpa can be resolved by \dpa. 

Let us formalize this question.
Fix a \pa~$(\aut, C)$ with $\aut = (Q, \Sigma \times D, q_\init, \Delta, F)$ throughout this section. 
We assume (without loss of generality) that $(\aut, C)$ is complete.
Recall that a function~$\hstrat\colon \Sigma^+ \rightarrow \Delta$ is a resolver for $(\aut, C)$ if, for every~$w \in L(\aut, C)$, we have that $\hstratiter(w)$ is an accepting run of $(\aut, C)$. 
Here, $\hstratiter \colon \Sigma^* \rightarrow \Delta^*$ is defined as $\hstratiter(\epsilon) = \epsilon$ and $\hstratiter(a_0 \cdots a_n) = \hstratiter(a_0 \cdots a_{n-1})\cdot \hstrat(a_0 \cdots a_n)$. 
In the following, we call such a resolver a letter-based resolver.

Here, it is more convenient to work with resolvers that not only get the prefix~$w$ to be processed as input, but the run prefix constructed thus far (from which $w$ can be reconstructed). In general, one can, given an input word, reconstruct the run prefix induced by the resolver (see the proof of Lemma~\ref{lemma_resolverdefequivalence} below). 
However, this reconstruction cannot necessarily be implemented by an automaton (see Item~\ref{example_nondyckcont_nonaut} of Example~\ref{example_nondyckcont} below).
Thus, while these two definitions are equivalent in general, this is not necessarily true when restricted to automata-implementable resolvers.

Let $\runs{(\aut,C)} \subseteq \Delta^*$ denote the set of run prefixes of $(\aut,C)$ starting in the initial state. 
We now consider functions of the form~$\hstratprime \colon \runs{(\aut,C)} \times \Sigma \rightarrow \Delta$ and define their iteration~$\hstratiterprime \colon \Sigma^* \rightarrow \Delta^*$ inductively as $\hstratiterprime(\epsilon) = \epsilon$ and \[\hstratiterprime(a_0 \cdots a_n) = \hstratiterprime(a_0 \cdots a_{n-1}) \cdot \hstratprime(\hstratiterprime(a_0 \cdots a_{n-1}), a_n).\]
We say that $\hstratprime$ is a transition-based resolver if, for every~$w \in L(\aut, C)$, $\hstratiterprime(w)$ is an accepting run of $(\aut, C)$. 

\begin{lemma}
\label{lemma_resolverdefequivalence}
A \pa~$(\aut, C)$ has a letter-based resolver if and only if it has a transition-based resolver.
\end{lemma}

\begin{proof}
Let $\hstrat\colon \Sigma^+ \rightarrow \Delta$ be a letter-based resolver for $(\aut, C)$. 
We inductively define the function~$\hstratprime\colon \runs{(\aut,C)} \times \Sigma \rightarrow \Delta$ via $\hstratprime(\epsilon, a) = \hstrat(a)$ and $\hstratprime(\rho, a) = \hstrat(wa)$ where $w \in \Sigma^+$ is the word labelling the sequence~$\rho \in \Delta^+$.
Then, an induction shows that he have $\hstratiterprime = \hstratiter$, which implies that $\hstratprime$ is a transition-based resolver.

For the other implication, let $\hstratprime \colon \runs{(\aut,C)} \times \Sigma \rightarrow \Delta$ be a transition-based resolver. 
We inductively define the function~$\hstrat \colon \Sigma^+ \rightarrow \Delta$  via $\hstrat(a) = \hstratprime(\epsilon, a)$ and $\hstrat(a_0 \cdots a_n) = \hstratprime(\hstratiterprime(a_0 \cdots a_{n-1}), a_n)$. 
Again, an induction shows that he have $\hstratiter = \hstratiterprime$, which implies that $\hstrat$ is a letter-based resolver.
\end{proof}

Next, we define how to implement a transition-based resolver via automata.
Unlike previous work, which employed transducers (automata with outputs) to implement resolvers, we prefer here to use a different approach and show that for each transition~$\delta$, the inverse image of $\delta$ is accepted by an automaton. 
Taking the product of these automata for each $\delta$ yields a transducer implementing a resolver, but not taking the product simplifies our constructions below considerably.

Given a transition-based resolver~$\hstratprime \colon \runs{(\aut,C)} \times \Sigma \rightarrow \Delta$, $a \in \Sigma$, and $\delta \in \Delta$, we define
\[\hstratprimeinv{a}{\delta} = \set{\rho \in \runs{(\aut,C)} \mid \hstratprime(\rho, a ) = \delta},\]
i.e., the set of run prefixes for which, when the next input letter to be processed is an $a$, $\hstratprime$ extends the run prefix with the transition~$\delta$. 
Note that $\hstratprime$ is completely specified by the collection of sets~$\hstratprimeinv{a}{\delta}$.
Conversely, if we have a collection of sets~$R_{a,\delta}$ that partitions $\runs{(\aut,C)}$, then it induces a function mapping finite runs and letters to transitions, i.e., it has the same signature as a transition-based resolver.
We say that a transition-based resolver~$\hstratprime$ is \dpa-implementable, if every $\hstratprimeinv{a}{\delta} $ is accepted by a \dpa.

\begin{example}
\label{example_nondyckcont}
Consider the \hdpa~$(\aut, C)$ from Example~\ref{example_nondyck}.
For $b \in \set{0,1}$, let $\delta_{b, \ell}$ be the self-loop processing $b$ on the left state~$q_c$, let $\delta_{b, m}$ be the transition processing $b$ leading from~$q_c$ to $q_n$, and let $\delta_{b, r}$ be the self-loop processing $b$ on the right state~$q_n$.

\begin{enumerate}
    \item The following languages are all accepted by a \dpa, for $b \in \set{0,1}$:
\
\begin{itemize}
    \item $R_{b, \delta_{b,\ell}} = \set{\delta_{b_0 ,\ell} \cdots \delta_{b_{n-1} ,\ell} \mid n \geqslant 0 \text { and } b_0 \cdots b_{n-1}b \text{ is not non-Dyck} }$.
    \item $R_{b, \delta_{b,m}} = \set{\delta_{b_0 ,\ell} \cdots \delta_{b_{n-1} ,\ell} \mid n > 0 \text { and } b_0 \cdots b_{n-1}b \text{ is non-Dyck} }$.
    \item $R_{b, \delta_{b,m}} = \set{\delta_{b_0 ,\ell} \cdots \delta_{b_{n-1} ,\ell}  \delta_{b_n ,m}   \delta_{b_{n+1} ,r} \cdots \delta_{b_{n+n'} ,r} \mid n > 0 \text { and } n' \geqslant 0 }$.
\end{itemize}
These languages are pairwise disjoint and their union is the set of all run prefixes.
Now, one can verify that the induced function is a transition-based resolver for $(\aut, C)$, i.e., $(\aut, C)$ has a \dpa-implementable resolver.

    \item \label{example_nondyckcont_nonaut}
    Now, let us consider implementation of letter-based resolvers for $(\aut, C)$.
    Recall that the letter-based resolver presented in Example~\ref{example_nondyck}, call it ~$\hstrat$, takes the transition to $q_n$ as soon as possible, i.e., with the first non-Dyck prefix of the input.
    One can in fact show that this is the only letter-based resolver of $(\aut, C)$, as staying in $q_c$ with a non-Dyck prefix~$w$ means that the run induced by the resolver is not accepting, a contradiction.  

    Now, assume $\hstrat$ is \dpa-implementable (in the sense that $\set{w \in \set{0,1}^+ \mid \hstrat(w) = \delta}$ is accepted by a \dpa for every transition~$\delta$). 
    Then, the language
    \[
    \set{w \in \set{0,1}^+ \mid \hstrat(w) = \delta_{0,\ell}} \cup \set{w \in \set{0,1}^+ \mid \hstrat(w) = \delta_{1,\ell}}.
    \]
    is also accepted by a $\dpa$, as \dpa are closed under union. 
    However, by definition of $\hstrat$, this is the language of words that do not have a non-Dyck prefix. 
    It is known that this language is not accepted by any \pa~\cite[Proposition~11]{CFM13}.  
    Hence, $\hstrat$ is not \dpa-implementable and, due to uniqueness of $\hstrat$, $(\aut, C)$ does not have \dpa-implementable letter-based resolvers.

    Note that $\hstrat$ is not even implementable by \pa (with the obvious definition), as the language of words without non-Dyck prefixes is not even accepted by (possibly nondeterministic) \pa~\cite[Proposition~11]{CFM13}.
\end{enumerate}
\end{example}

Note that a transition-based resolver operates on the run constructed thus far, which contains information on whether there was a non-Dyck prefix in the word processed thus far: If there is one, then the resolver continued the run on that prefix by moving to state~$q_n$. Thus, any run consistent with the resolver ends in state~$q_n$ if and only if the word processed thus far has a non-Dyck prefix. 
This saves a transition-based resolver from checking every input for non-Dyckness, a property that cannot be checked by an (even nondeterministic) \pa. 
A letter-based resolver, on the other hand, does not get the information in which state the run constructed thus far ends, and a \dpa-implementable one also cannot compute that information. Thus, \dpa-implementable letter-based resolvers are strictly weaker than \dpa-implementable transition-based resolvers.

\begin{remark}
All \hdpa we present in examples and proofs in this work have a \dpa-implementable transition-based resolver.
\end{remark}

Due to Example~\ref{example_nondyckcont}, we will consider only transition-based resolver in the remainder of this section and, for the sake of brevity, call them resolvers in the following.

Now, we investigate whether every \hdpa has a \dpa-implementable resolver. 
Unfortunately, we only obtain a conditional result.
To explain the condition, we need to introduce some background on the game-theoretic characterization of history-determinism due to Bagnol and Kuperberg~\cite{BK18}.

Recall that we fixed a complete \pa~$(\aut, C)$ with $\aut = (Q, \Sigma \times D, q_\init, \Delta, F)$.
We define the one-token game on $(\aut,C)$, a perfect-information zero-sum game played between two players called \attacker and \defender. The positions of the game consist of pairs~$(q,q') \in Q \times Q$. The game is played in rounds~$i = 0,1,2,\ldots$ starting from the initial position~$(q_0, q_0') = (q_\init, q_\init)$. In round~$i$ at position~$(q_i, q_i')$:
\begin{itemize}
    \item First \attacker picks a letter~$a_i \in \Sigma$.
    \item Then, \defender picks a transition from $\Delta$ of the form~$(q_i, (a_i, \vec{v}_i), q_{i+1})$ for some $\vec{v}_i \in D$ and some $q_{i+1} \in Q$. 
    \item Then, \attacker picks a transition from $\Delta$ of the form~$(q_i', (a_i, \vec{v}_i'), q_{i+1}')$ for some $\vec{v}_i' \in D$ and some $q_{i+1}' \in Q$. 
    \item Then, round~$i$ ends and the play proceeds from position~$(q_{i+1}, q_{i+1}')$ in round~$i+1$.
\end{itemize}
Due to completeness of $(\aut, C)$, this is well-defined, i.e., each player has at least one move available. Thus, plays are infinite and induce two infinite sequences
\[(q_0, (a_0, \vec{v}_0), q_{1})(q_1, (a_1, \vec{v}_1), q_{2})(q_2, (a_2, \vec{v}_2), q_{3}) \cdots \text{\quad and \quad} (q_0', (a_0, \vec{v}_0'), q_{1}')(q_1', (a_1, \vec{v}_1'), q_{2}')(q_2', (a_2, \vec{v}_2'), q_{3}') \cdots\] 
picked by \defender and \attacker, respectively.
Each prefix of the sequences is, by definition of the game, a run of $(\aut, C)$.
\defender wins the play if for every $n \geqslant0$ for which the run~$(q_0', (a_0, \vec{v}_0'), q_{1}') \cdots (q_{n-1}', (a_{n-1}, \vec{v}_{n-1}'), q_{n}')$ picked by \attacker is accepting, the run~$(q_0, (a_0, \vec{v}_0), q_{1}) \cdots (q_{n-1}, (a_{n-1}, \vec{v}_{n-1}), q_{n})$ picked by \defender is also accepting.
A strategy for \defender maps a finite sequence of transitions (picked by \attacker) and a letter to a transition. 
The notions of winning strategies and \defender winning the one-token game are defined as expected.

\begin{lemma}
\label{lemma_onetokencorrectness}
\defender wins the one-token game on a \pa~$(\aut, C)$ if and only if $(\aut, C)$ is an \hdpa.
\end{lemma}

\begin{proof}
The proof here is analogous to the one for finite automata~\cite[Lemma~7 (case of $k=1$)]{LR13}, but we need to repeat it here as our further results depend on properties of the one-token game that are shown in this proof.

One direction is straightforward: a transition-based resolver can directly be turned into a winning strategy for \defender which ensures that whenever the word picked by \attacker thus far is in $L(\aut, C)$, then the finite run constructed thus far by \defender is accepting (this is what a resolver does after all). In particular, when the run constructed by \attacker is accepting (which implies that the word picked thus far is in $L(\aut, C)$), then the run constructed by \defender is also accepting. Thus, \defender wins the one-token game.

For the converse direction, we need to show that a winning strategy for \defender can be turned into a transition-based resolver for $(\aut, C)$. 
Throughout the proof, we fix such a winning strategy~$\sigma$.

We begin by introducing non-residual transitions.
Assume \attacker has picked a word~$w$ thus far during a play and the players have picked sequences~$\rho$ (\defender) and $\rho'$ (\attacker) of transitions (processing both $w$ by definition).
Now, let \attacker pick the letter~$a$ to begin the next round and let $\delta$ be a transition that \defender could pick to continue the round.
We say that $\delta$ is non-residual w.r.t.\ $\rho$ and $a$ (but independent of $\rho'$) if there is a $w' \in \Sigma^*$ such that
\begin{itemize}
    
    \item there is an accepting run~$\rho^*$ with prefix~$\rho$ such that $\rho^*$ processes $waw'$, 
    
    \item but there is no accepting run with prefix~$\rho\delta$ that processes $waw'$.

\end{itemize}
A transition is residual, if it is not non-residual.

If $\rho = \rho'$ in the situation above, then $\sigma$ must pick a residual transition. If not, then \attacker wins by picking the letters of $w'$ and the transitions completing the accepting run~$\rho^*$ processing $waw'$. By definition, \defender cannot extend $\rho\delta$ to an accepting run processing $waw'$. 
This contradicts $\sigma$ being winning, as such a play is not winning for \defender.

We define the function~$\hstratprime \colon \runs{(\aut,C)} \times \Sigma \rightarrow \Delta$. Let $\rho \in \runs{(\aut,C)}$ be a run prefix of $(\aut,C)$ processing the word~$w$, let $a \in \Sigma$, and let $\delta$ be the transition picked by $\sigma$ when \attacker has picked $w$ thus far, both players have picked $\rho$, and \attacker has picked $a$ to start the next round.
Then, we define $\hstratprime(\rho, a) = \delta$.
Note that $\delta$ must be residual w.r.t.\ $\rho$ and $a$, as argued above.

Now, an induction shows that if $w$ is in $L(\aut, C)$, then the run induced by $\hstratprime$ over every prefix of $w'$ of $w$ can be extended to an accepting run that processes $w$, as $\hstratprime$ only yields residual transitions. Thus, for $w' = w$ we obtain that the run induced by $\hstratprime$ is accepting, i.e., $\hstratprime$ is a resolver. 
\end{proof}

Given a \pa~$(\aut, C)$ with $\aut = (Q, \Sigma \times D, q_\init, \Delta, F)$ and $C \subseteq \nats^d$, the one-token game on $(\aut, C)$ can be modelled as a safety game on an infinite graph as follows (see, e.g., \cite{GTW02} for a general introduction to graph-based games):
\begin{itemize}
    
    \item The set of vertices is $( Q \times \nats^d \times Q \times \nats^d) \cup ( Q \times \nats^d \times Q \times \nats^d \times \Sigma) \cup ( Q \times \nats^d \times Q \times \nats^d \times \Sigma\times \Delta)$.
    
    \item The initial vertex is $(q_\init, \vec{0}, q_\init, \vec{0})$ where $\vec{0}$ is the $d$-dimensional zero vector.
    
    \item The vertices of \attacker are those in $( Q \times \nats^d \times Q \times \nats^d) \cup ( Q \times \nats^d \times Q \times \nats^d \times \Sigma\times \Delta)$.
    
    \item The vertices of \defender are those in $( Q \times \nats^d \times Q \times \nats^d \times \Sigma)$.
    
    \item The edge relation is defined as follows:
    
    \begin{itemize}
        
        \item A vertex~$(q,\vec{v}, q', \vec{v}')$ has the successor~$(q,\vec{v}, q', \vec{v}',a)$ for each $a \in \Sigma$, simulating that \attacker picks a letter at the start of a new round.

        \item A vertex~$(q,\vec{v}, q', \vec{v}',a)$ has the successor~$(q,\vec{v}, q', \vec{v}',a,\delta)$ for every $\delta \in \Delta$ of the form~$\delta = (q, (a, \vec{v}''), q'')$ for some $\vec{v}'' \in D$ and some $q'' \in Q$, simulating that \defender picks a transition in the second step of a round.
        
        \item A vertex~$(q,\vec{v}, q', \vec{v}',a, (q, (a, \vec{v}''), q''))$ has the successor~$(q'',\vec{v} + \vec{v}'', q''', \vec{v}' + \vec{v}''')$ for every $\delta' \in \Delta$ of the form $\delta' = (q', (a, \vec{v}'''), q''')$ for some $\vec{v}''' \in D$ and some $q''' \in Q$, simulating that \attacker picks a transition in the last step of a round. This leads to the update of the states and the counters.
        
    \end{itemize}

    \item The set of unsafe vertices is 
    \[
    \set{(q,\vec{v}, q', \vec{v}') \in Q\times \nats^d \times Q \times \nats^d \mid q' \in F \wedge \vec{v}' \in C \wedge (q \notin F \vee \vec{v} \notin C)}.
    \]
    All other vertices are safe.
\end{itemize}

As usual,  a play is an infinite sequence of vertices starting in the initial one and following the edge relation. It is winning for \defender if it contains only safe vertices.
 The winning region of \defender is the set of vertices from which \defender has a winning strategy.

\begin{remark}
\defender wins the one-token game on a \pa~$(\aut, C)$ if and only if the initial vertex of the safety game is in the winning region of \defender.
\end{remark}

The set of vertices of the safety game is isomorphic to $S \times \nats^{2d}$ for some finite set~$S$. 
We say that a subset~$A  \subseteq S \times \nats^{2d}$ is semilinear if
\[
\restr{A}{s} =  \set{\vec{v} \in \nats^{2d} \mid (s,\vec{v}) \in A}
\]
is semilinear for each $s \in S$.
Note that there are only finitely many $\restr{A}{s}$, as $S$ is finite.
Slightly abusively, we say a subset of the vertices is semilinear, if its isomorphic image is semilinear.

\begin{theorem}
\label{thm_resolversviagames}
If the winning region of \defender in the safety game induced by $(\aut, C)$ is semilinear and contains the initial vertex, then $(\aut, C)$ has a \dpa-implementable transition-based resolver.
\end{theorem}

\begin{proof}
Our construction uses residual transitions as introduced in the proof of Lemma~\ref{lemma_onetokencorrectness}. 
Let us begin with a simple observation that follows from the definition of residual transitions: if two finite runs~$\rho,\rho'$ end in the same state (the empty run ends in the initial state by convention) and have the same extended Parikh image (the empty run has the extended Parikh image~$\vec{0}$ by convention), and $a$ is a letter, then a transition is residual w.r.t.\ $\rho$ and $a$ if and only if it is residual w.r.t.\ $\rho'$ and $a$. 
Note that the safety game does not keep track of the whole play history of a play in the one-token game, but only keeps track of the last states and the extended Parikh images of the two runs constructed by the two players.
Due to the observation above, this suffices to determine residual transitions of histories while only keeping track of the last vertex of the history.

Let $W$ be the winning region of \defender in the safety game and consider a vertex of the form~$(q,\vec{v}, q,\vec{v},a) \in W$, i.e., it is \defender's turn.
Then, there is at least one successor in the winning region~$W$, which must be of the form~$(q,\vec{v}, q,\vec{v},a,\delta)$ for a residual transition~$\delta$, as argued above.
Thus, \defender can pick any residual transition to win the one-token game, again as argued above. 
Hence, we will construct a \dpa-implementable resolver that will always pick a residual transition. 

However, in general there might be several residual transitions available.
So, to define the resolver formally, we need to select one such transition.
To this end, let $\delta_0 < \delta_1 < \cdots \delta_n$ be a fixed ordering of the set~$\Delta$ of transitions of $(\aut,C)$.
We fix a letter~$a$ and a transition~$\delta_i$ and construct a \dpa that does the following, when given a finite run $\rho$:
\begin{itemize}
    \item Compute the extended Parikh image~$\vec{v}$ of $\rho$ using its counters and determine the last state~$q$ of $\rho$.
    \item If $(q,\vec{v},q,\vec{v},a,\delta_i) \in W$ and $(q,\vec{v},q,\vec{v},a,\delta_{i'}) \notin W$ for all $i' < i$, then accept. Intuitively, if $\delta_i$ is the first residual transition, then $\rho$ is accepted by the automaton associated with $\delta_i$ and $a$.
    \item If $i=0$ and $(q,\vec{v},q,\vec{v},a,\delta_{i'}) \notin W$ for all $i' \in \set{0,1,\ldots, n}$, then accept. Intuitively, if there is no residual transition for $\rho$ and $a$, then the automaton associated with $\delta_0$ and $a$ accepts. Note that this case is only used for completeness, it will never be used below as the resolver we are defining always ensures that there is at least one residual transition.
\end{itemize}
The last two conditions can be checked by the Parikh automaton we are constructing, as each $\restr{W}{(q,q,a,\delta_{i'})}$ is semilinear by assumption. Note that this requires to reflect the last state~$q$ in the extended Parikh image as done several times before.

Now, let $R_{a, \delta_i}$ be the language of the automaton we have constructed for $a$ and $\delta_i$.
The sets~$R_{a, \delta_i}$ for $a \in \Sigma$ and $i \in \set{0,1,\ldots, n}$ partition $\runs{(\aut,C)}$, and therefore induce a function~$\hstratprime\colon \runs{(\aut,C)} \times \Sigma \rightarrow \Delta$.

It remains to show that $\hstratprime$ is a transition-based resolver.
Let $w \in L(\aut, C)$ and let $\rho = \hstratiterprime(w)$ be the sequence of transitions induced by $\hstratprime$ for $w$.
An induction over the prefixes~$\rho'$ of $\rho$ shows that $(q_{\rho'},\vec{v}_{\rho'}, q_{\rho'},\vec{v}_{\rho'})$ is in $W$, where $q_{\rho'}$ is the last state of $\rho'$ and $\vec{v}_{\rho'}$ is the extended Parikh image of $\rho'$, relying on the following two facts:
\begin{itemize}
    \item $\hstratprime$ only returns transitions that \myquot{stay} in the winning region.
    \item In a safety game, \attacker (the antagonist aiming to reach an unsafe state) cannot \myquot{leave} the winning region with their moves.
\end{itemize}

Thus, $\hstratprime$ will only pick residual transitions, as these are the ones that \myquot{stay} in the winning region. 
Now, an argument as the one in the proof of Lemma~\ref{lemma_onetokencorrectness} shows that $\hstratprime$ is indeed a resolver, as it only returns residual transitions.
\end{proof}

We conclude this section with the obvious question: Is the winning region of the safety game always semilinear? 
A classical result characterizes the winning region of the antagonist in a safety game (the player aiming to reach an unsafe state) 
as the attractor of the unsafe states. 
In our case, the attractor is the infinite union~$\bigcup_{i\geqslant0} \attr_i$ where $\attr_0$ is the set of unsafe states and $\attr_{i+1} = \cpre(\attr_i)$.
Here, $\cpre(\attr_i)$ denotes the controlled predecessor of the set~$\attr_i$ of states containing
\begin{itemize}
    \item all vertices of the antagonist having a successor in $\attr_i$, and
    \item all vertices of the protagonist having only successors in $\attr_i$.
\end{itemize}
Thus, if one can show that the attractor (the winning region of the antagonist) is semilinear, then, due to closure of semilinear sets under complementation, also the winning region of the protagonist is semilinear.
Then, Theorem~\ref{thm_resolversviagames} implies the existence of a \dpa-implementable resolver. 

Now, $\attr_0$, the set of unsafe states in the safety game, is semilinear, as semilinear sets are closed under Boolean operations and the set of safe states is defined as a Boolean combination of semilinear sets.
Further, one can show that in the safety games we consider here (but not in general), the controlled predecessor of a semilinear set is semilinear as well.
Thus, the only remaining obstacle is the infinite union.
However, semilinear sets are in general \emph{not} closed under infinite unions (as every set can be described as an infinite union of singletons, which are semilinear).
Nevertheless, the game graph the safety game is played in is rather simple, i.e., the edge relation depends only on a finite abstraction of the vertices (i.e., it only depends on the letter stored in a vertex) and is therefore monotone.\footnote{Note that related questions about winning regions in exact slow $k$-Nim are also open~\cite{knim}.}

Note that if the inherent structure of the safety games we consider here suffices to show that the winning region is indeed semilinear, then the \dpa's implementing a resolver cannot be effectively constructed, as it is undecidable whether a \pa is history-deterministic (Theorem~\ref{thm_hdnessundec}).

\section{Conclusion}
\label{sec_conc}
In this work, we have introduced and studied history-deterministic Parikh automata. 
We have shown that their expressiveness is strictly between that of deterministic and nondeterministic \pa, incomparable to that of unambiguous \pa, but equivalent to history-deterministic \nrbcm{1}. 
Furthermore, we showed that they have almost the same closure properties as \dpa (complementation being the notable difference), and enjoy some of the desirable algorithmic properties of \dpa.

An interesting direction for further research concerns the complexity of resolving nondeterminism in history-deterministic Parikh automata.
We have shown that if the winning region of \defender in the one-token game is semilinear, then there is a \dpa-implementable resolver. 
In further research, we investigate whether the winning region is necessarily semilinear.
Note that the analogous result for history-deterministic pushdown automata fails: not every history-deterministic pushdown automaton has a pushdown resolver~\cite[Theorem~7.3]{LZ22}\cite[Lemma~10.2]{GJLZ24}.

Good-for-gameness is another notion of restricted nondeterminism that is very tightly related to history-determinism. In fact, both terms were used interchangeably until very recently, when it was shown that they do not always coincide~\cite{BL21}. Formally, an automaton~$\aut$ is good-for-games if every two-player zero-sum game with winning condition~$L(\aut)$ has the same winner as the game where the player who wins if the outcome is in $L(\aut)$ additionally has to construct a witnessing run of $\aut$ during the play.
This definition comes in two forms, depending on whether one considers only finitely branching (weak compositionality) or all games (compositionality).

Recently, the difference between being history-deterministic and both types of compositionality has been studied in detail for pushdown automata~\cite{GJLZ22}.
These results are very general and can easily be transferred to \pa and \nrbcm{1}. 
They show that for \pa, being history-deterministic, compositionality, and weak compositionality all coincide, while for \nrbcm{1}, being history-deterministic and compositionality coincide, but not weak compositionality. 

The reason for this difference can be traced back to the fact that \nrbcm{1} may contain transitions that do not move the reading head (which are essentially $\epsilon$-transitions), but that have side-effects beyond state changes, i.e., the counters are updated.
This means that an unbounded number of configurations can be reached by processing a single letter, which implies that the game composed of an arena and a \nrbcm{1} may have infinite branching.
So, while \hdpa and \nhdrbcm{1} are expressively equivalent, they, perhaps surprisingly, behave differently when it comes to compositionality.

\bibliographystyle{plain}
\bibliography{biblio}

\end{document}